\documentclass[11pt, letterpaper, onecolumn, oneside, final]{article}

\usepackage{url}
\usepackage{times}
\usepackage{xcolor}
\usepackage{xspace}
\usepackage{authblk}
\usepackage{amssymb}
\usepackage{amsmath}
\usepackage{enumitem}
\usepackage{graphicx}
\usepackage{makecell}
\usepackage{booktabs}
\usepackage{algorithm}
\usepackage[compact]{titlesec} 
\usepackage[noend]{algpseudocode}
\usepackage[sort,compress]{natbib}
\usepackage[bb=dsserif,bbscaled=1.1]{mathalpha}
\usepackage[font=footnotesize,labelfont=bf]{caption} 
\usepackage[left=1in,right=1in,top=1in,bottom=1in]{geometry} 
\usepackage[bookmarks=true,bookmarksnumbered=true]{hyperref}
\usepackage{amsthm}
\usepackage{cleveref}

\makeatletter
\newcounter{algorithmicH}
\let\oldalgorithmic\algorithmic
\renewcommand{\algorithmic}{%
  \stepcounter{algorithmicH}
  \oldalgorithmic}
\renewcommand{\theHALG@line}{ALG@line.\thealgorithmicH.\arabic{ALG@line}}
\makeatother

\makeatletter
\algnewcommand{\LineComment}[1]{\Statex\hskip\ALG@thistlm $\blacktriangleright$ #1}
\makeatother

\theoremstyle{plain}
\newtheorem{theorem}{Theorem}[section]

\AddToHook{env/definition/begin}{\crefalias{theorem}{definition}}

\AddToHook{env/fact/begin}{\crefalias{theorem}{fact}}

\AddToHook{env/claim/begin}{\crefalias{theorem}{claim}}
\newtheorem{claiminline}{Claim}[theorem]
\newtheorem{lemma}[theorem]{Lemma}
\AddToHook{env/lemma/begin}{\crefalias{theorem}{lemma}}
\newtheorem{corollary}[theorem]{Corollary}
\AddToHook{env/corollary/begin}{\crefalias{theorem}{corollary}}

\AddToHook{env/property/begin}{\crefalias{theorem}{property}}
\newtheorem*{claim*}{Claim}

\newtoggle{submission}

\toggletrue{submission}

\iftoggle{submission}
{
	
	\newcommand{\highlightblue}[1]{{\color{black} #1}}
	\newcommand{\highlight}[2][black]{{\color{black} #2}}
}{
	
	\newcommand{\highlightblue}[1]{{\color{blue} #1}}
	\newcommand{\highlight}[2][red]{{\color{#1} #2}}
}

\iftoggle{submission}
{
	\newcommand{\marginnoteviolet}[1]{\marginpar{}}
	\newcommand{\marginnote}[2][white]{\marginpar{}}
}{
	\newcommand{\marginnoteviolet}[1]{\marginpar{\tiny\color{violet} #1}}
	\newcommand{\marginnote}[2][teal]{\marginpar{\tiny\color{#1} #2}}
}

\newcommand{\id}{\texttt{ID}}
\newcommand{\nid}{\texttt{NewID}}

\newcommand{\vectorcon}{\texttt{VectorConsensus}}
\newcommand{\shareid}{\texttt{SharedRenaming}}
\newcommand{\sharecon}{\texttt{PolyByz}}

\newcommand{\committee}{\texttt{CommitteeElection}}
\newcommand{\clst}{\texttt{CommitteeBuild}}
\newcommand{\redo}{\texttt{RedoConsensus}}

\newcommand{\comcon}{\texttt{CommitteeConsensus}}
\newcommand{\largelead}{\texttt{LargeCoreLeader}}

\newcommand{\lead}{\texttt{LeaderElection}}
\newcommand{\oldid}{\texttt{Renaming}}
\newcommand{\echoone}{\texttt{OldNameGather}}
\newcommand{\echotwo}{\texttt{NewNameScatter}}
\newcommand{\echothree}{\texttt{NewNameValidate}}

\newcommand{\bounce}{\texttt{Bouncing}}
\newcommand{\sample}{\texttt{Sampling}}

\newcommand{\ee}{\mathbb{E}}

\begin{document}


\title{\textbf{Distributed Renaming with Subquadratic Bits\\
via Scalable Committee Election}}
\author[1]{Sirui Bai}
\author[1]{Xinyu Fu}
\author[2,3]{Yuyi Wang}
\author[1,4]{Chaodong Zheng}
\affil[1]{Nanjing University}
\affil[2]{CRRC Zhuzhou Institute}
\affil[3]{Tengen Intelligence Institute}
\affil[4]{University of Chinese Academy of Science, Nanjing}
\date{
}
\pagenumbering{roman}
\maketitle
\thispagestyle{empty}

\begin{abstract}
In distributed computing, the renaming problem requires $n$ nodes with unique identities from a large namespace $[N]$ to acquire new, distinct identities from a smaller target namespace $[M]$. A solution is strong if $M=n$, and is order-preserving if the relative order of identities is maintained. In the synchronous message-passing model, although many fault-tolerant renaming algorithms achieve logarithmic time complexity, they universally incur a high message complexity of $\Omega(n^2)$. Recent work breaks the quadratic barrier, but demands linear runtime and relies on shared randomness.

This paper addresses the challenge of designing renaming algorithms that are simultaneously time-efficient, message-efficient, and Byzantine fault-tolerant, assuming only message authentication. We present two randomized algorithms for strong and order-preserving renaming that tolerate up to $(1/3-\delta)n$ Byzantine failures for any constant $\delta>0$. Our first algorithm, which assumes shared randomness, terminates in \highlight[cyan]{$O(\text{poly-log}(n))$} rounds with $\tilde{O}(n)$ total communication cost. This matches known lower bounds within poly-logarithmic factor. Our second algorithm eliminates the shared randomness assumption and achieves \highlight[cyan]{$O(\text{poly-log}(n))$} runtime with $\tilde{O}(n+\min\{nf,T\})$ total communication cost, where $f$ is the actual number of faulty nodes and $T$ is the amount of messages faulty nodes sent. This gives the first Byzantine renaming algorithm that achieves both poly-logarithmic runtime and subquadratic communication cost \highlight[cyan]{for a wide range of parameter regimes,} without shared randomness. A key technical enabler is a novel and scalable committee election primitive that could be easily integrated into other algorithms to solve various distributed computing problems with low cost and strong fault-tolerance.
\end{abstract}

\iftoggle{submission}
{}
{
\vspace{3em}\begin{Huge}
\begin{center}
{\bf\highlight{
!!! DO NOT EDIT THIS FILE !!! \\
!!! This file is for final merge !!!
}}
\end{center}
\end{Huge}
}

\newpage
\thispagestyle{empty}
\tableofcontents

\newpage
\pagenumbering{arabic}
\pagestyle{plain}
\setcounter{page}{1}


\section{Introduction}\label{sec:intro}

\emph{Renaming} is a fundamental problem in distributed computing that involves symmetry breaking~\cite{attiya90}. It requires $n$ participating nodes to transform their original unique identities in a large namespace $[N]=\{1, \cdots, N\}$ to new unique identities in a smaller target namespace $[M]$, where $n\leq M<N$ \highlight{and $N\in n^{O(1)}$}\footnote{\highlight[cyan]{We assume $N\in n^{O(1)}$ solely because we employ the leader election algorithm from \cite{king06soda} as a subroutine. This constraint could be obviated by substituting an alternative leader election algorithm with similar complexity.}}.
This problem holds significant theoretical importance since the size of the namespace directly influences the efficiency of many distributed algorithms, including vertex coloring algorithms~\cite{barenboim13}, network decomposition algorithms~\cite{ghaffari2021improved}, and contemporary energy-aware minimum spanning tree construction algorithms~\cite{augustine2024awake}.
In real-world distributed systems such as cryptocurrency networks, renaming also provides substantial benefits by reducing communication overheads that arise when nodes from disparate network domains interact using lengthy original identifiers~\cite{bonneau15}.

The fundamental requirement for renaming algorithms is \emph{uniqueness}, which demands that all nodes acquire distinct new identities in the target namespace. Beyond uniqueness constraint, renaming algorithms may offer additional desirable properties. \emph{Strong renaming} ensures optimal namespace utilization by requiring $M = n$. The \emph{order preserving} property preserves the relative order of identities after renaming, which is necessary when identifiers encode metadata such as resource access priorities.

Our investigation concentrates on developing efficient and Byzantine resilient renaming algorithms in the classical synchronous message-passing model.
Specifically, we consider a fully-connected distributed system containing $n$ nodes. Each node $v$ is aware of its identity $\id(v)\in[N]$ and the network size $n$, but not the identity of any other node.
\highlight[cyan]{All nodes start computation simultaneously and communicate in
synchronized rounds with message size bounded by $O(\log N)$ bits.}\footnote{\highlight[cyan]{Messages are sent at the start of a round and delivered by the end of that round. This implies round numbers need not to be transmitted within messages.}}
This allows each message to contain the sender's identity and some additional information.
\highlightblue{Among the $n$ nodes, $0\leq f<(1/3-\delta)n$ nodes suffer Byzantine failure, where $\delta>0$ is an arbitrarily small constant.}
A Byzantine node can deviate from the specified protocol arbitrarily; that is, a Byzantine node can exhibit arbitrary behavior.
Correct nodes do not know the value of $f$ or which nodes are Byzantine.
We assume an adversary determines which nodes are Byzantine before execution starts.
\highlight{We assume messages are authenticated, this prevents Byzantine nodes from masquerading as honest nodes or forging messages that purportedly originate from correct nodes.}\footnote{
Many works on Byzantine fault-tolerance make similar assumption, such as~\cite{dolev83,fitzi09}. In practice, message authentication is often achieved by using digital signature, which in turn can be achieved by a public key infrastructure (PKI). PKI has been widely used in practice nowadays. We also stress that, the availability of PKI does \emph{not} mean every node in a distributed system knows the public keys of all nodes (this would render renaming trivial). Instead, in practical systems, every node only needs to know a constant number of public keys (\emph{none} of which correspond to a node in the system), and verifying another node's signature is done via a mechanism known as certificate chain. See \cite{rfc5280} for more details.
}

Renaming has received considerable attention, algorithms as well as lower bounds have been developed for this problem in both the shared memory model~\cite{gafni06,alistarh11,alistarh13} and the message-passing model~\cite{okun2006renaming,okun08,denysyuk13,alistarh14,bai25podc}.
Early results on renaming have predominantly emphasized temporal efficiency, yielding solutions with poly-logarithmic time complexity, while state-of-the-art algorithms achieve sub-logarithmic round complexity~\cite{alistarh14}.
However, all these algorithms involve all-to-all communication, resulting in $\Omega(n^2)$ message complexity. Many of them also send large messages of $\Omega(n)$ bits during execution, leading to $\Omega(n^3)$ communication complexity.
Although subquadratic message complexity has been achieved recently~\cite{bai25podc}, that algorithm demands shared randomness a priori and has a time complexity scaling linearly with the number of Byzantine nodes.

Therefore, a natural and important question raise: \emph{Are there Byzantine resilient renaming algorithms with subquadratic communication complexity and poly-logarithmic time complexity?}

\subsection{Results and Contribution}

In this paper, we give positive answer to the above question by devising two randomized (strong and order preserving) renaming algorithms: one with shared randomness and one without shared randomness.
Before presenting the main results, we briefly discuss the key idea that enables us to reduce communication cost.

\paragraph{High-level strategy: Efficient and fault-tolerant distributed computing via committee.}
When designing distributed algorithms, the leader-based approach offers simplicity by centralizing computation at a single node, but is inherently vulnerable to failures. In contrast, the flooding-based approach achieves strong fault tolerance, at the cost of large communication overhead. The \emph{committee-based paradigm} balances these trade-offs by selecting a subset of nodes---in which the number of faulty nodes is bounded and correct nodes dominate---as a \emph{committee} to coordinate tasks collectively, ensuring both efficiency and robustness.

The committee-based paradigm is widely used in fault-tolerant distributed computing (see, e.g., \cite{king06focs,king06soda,gilbert10soda,augustine20,lenzen22,ben25podc,abraham25podc,dufoulon25podc}).
However, to achieve robust and efficient renaming via this approach, we must overcome two key challenges: (1) the committee must be elected without assuming nodes know the identities of each other; and (2) the committee election subroutine must have low time and communication cost.
As we elaborate below, a key contribution of this paper is a new scalable and efficient committee election algorithm that outperforms state-of-the-art.
\highlight[cyan]{In particular, that algorithm demonstrates scalability in both time and message complexity, which grow proportionally with the severity of failures (specifically, the number of faulty nodes and the number of messages faulty nodes sent).}
We believe it could be of independent interest.

\paragraph{Contribution I: Renaming with shared randomness.}
The guarantees enforced by our first renaming algorithm, in which nodes have access to shared randomness, are summarized in the following theorem.
It demonstrates that with this strong assumption, renaming can be done very fast (within \highlight[cyan]{$O(\text{poly-log}(n))$} rounds) with very low communication cost \highlight[cyan]{(the total number of messages sent by all nodes is $\Tilde{O}(n)$, each of $O(\log{N})$ bits).}\footnote{\highlight[cyan]{Throughout the paper, we use $\tilde{O}(\cdot)$ to hide poly-logarithmic factor in $n$ and $N$ (recall that we assume $N\in n^{O(1)}$).}}
In fact, due to the recent lower bound shown by Bai, Fu, Wang, Wang, and Zheng~\cite{bai25podc} which states that randomized strong renaming succeeding with constant probability needs $\Omega(n)$ message/communication cost even with shared randomness, this algorithm achieves near-optimal message complexity (within poly-logarithmic factor).

\begin{theorem}\label{thm:with-renaming}
Assume $n$ nodes are in a synchronous distributed system, among which $0\leq f<(1/3-\delta)n$ are faulty, where $\delta>0$ is a constant.
Assume nodes have access to shared randomness.
There exists a strong and order preserving renaming algorithm that is correct with high probability in $n$.
Moreover, the algorithm has time complexity $\Tilde{O}(1)$ and message complexity $\Tilde{O}(n)$, with high probability in $n$.\footnote{An event happens with high probability in $n$ if it occurs with probability at least $1-1/n^\beta$, for some desirable constant $\beta\geq 1$.}
\end{theorem}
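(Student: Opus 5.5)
We would prove \Cref{thm:with-renaming} by building the algorithm in three layers: committee election using the shared random string, consensus inside the committee, and dissemination of new names back to all nodes. \emph{First}, the shared randomness is used as a public hash $h:[N]\to\{0,1\}$ under which each identity is selected with probability $p=\Theta(\log n/n)$. The adversary fixes the Byzantine set before execution and the randomness is independent of that choice. A Chernoff bound therefore gives, with high probability in $n$, a committee $C$ of size $\Theta(\log n)$ in which at most a $(1/3-\delta/2)$ fraction are faulty.

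The committee members do not know one another. \emph{Second}, to let them find each other cheaply, the shared randomness also designates $O(\log n)$ hashed ``rendezvous'' slots. We map identities to buckets and use sampling so that each node contacts only $\tilde O(1)$ targets. Concretely, each node sends its identity to $\Theta(\log n)$ random committee candidates determined by the public hash. Each committee member then gathers the identities of roughly $n/|C|\cdot\tilde O(1)$ nodes. Byzantine members may inject fake identities, but authentication prevents impersonation of correct nodes. Committee members then exchange their candidate sets all-to-all within $C$, which costs $\tilde O(1)$ messages per identity.

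\emph{Third}, the committee runs a Byzantine agreement protocol (for instance a Dolev--Strong style authenticated broadcast, or a $t<|C|/3$ protocol) for $O(|C|)=O(\log n)$ rounds. The agreed-upon value is the sorted list of identities reported by correct nodes. The communication is $\tilde O(n)$ because each identity is handled by $\tilde O(1)$ messages per committee member.

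Strong renaming needs exactly $n$ names. The hard part is deciding which reported identities are genuine. An identity counts as genuine if it was received directly and signed, so faulty nodes can introduce only their own $f$ identities. Each of these may be included or omitted, which is what forces us to be careful. Agreement makes the list $L$ consistent, and the new name of $v$ is its rank in $L$. This rank is order-preserving and unique. The list lies in $[n]$ provided $|L|\le n$, which holds because only real nodes sign. \emph{Finally}, each committee member sends each node in $L$ its rank. A node adopts the majority value among the replies from committee members; correct members are a $>2/3$ majority. This costs $O(n\log n)$ messages.

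I expect the main obstacle to be the second layer. We must ensure that every correct node's identity reaches at least one correct committee member and is accepted in agreement, and that Byzantine nodes cannot flood the committee beyond $\tilde O(n)$. This is why each node may submit only once, signed, and why per-committee-member load is capped at $\tilde O(n/|C|)$ with excess dropped; concentration shows that correct load stays under the cap with high probability. Polylog time and $\tilde O(n)$ messages then follow by summing the three phases. If input validation via agreement proves delicate, I would fall back on running the agreement on per-bucket sets in parallel.
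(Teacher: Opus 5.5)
Your correctness argument is plausible, but the time bound fails, and the cause is not the flooding issue you single out. The model caps every message at $O(\log N)$ bits, so a single link carries $O(1)$ identities per round. Your second and third layers push $\Theta(n)$ identities across individual committee-to-committee links. This happens first in the ``all-to-all'' exchange of candidate sets, and again in Dolev--Strong (or any Byzantine agreement) whose agreed value is the whole sorted list $L$. Each such transfer takes $\Omega(n)$ rounds, so every round of the agreement protocol must be stretched to $\Omega(n)$ physical rounds. You counted only messages, and that count is indeed $\tilde O(n)$. But the sentence ``polylog time \ldots\ follow[s] by summing the three phases'' is false. Your per-bucket fallback has the same defect: any scheme in which $\Theta(\log n)$ members jointly process per-identity information over their internal links sends $\Theta(n)$ messages over each of those $O(\log^2 n)$ links. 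A secondary point is that the ``rendezvous slots'' are unspecified. A node cannot address an identity it has not heard from, so pool members must announce themselves to everyone. A faulty pool member that announces selectively then leaves correct members with different views of who participates in your internal agreement.

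The missing idea is that the committee should never agree on, or even exchange, the identity list. In the paper, every node sends its signed ID directly to every committee member. This uses $n$ distinct links into each member in one round and costs $O(n\log n)$ messages. Committee membership is made identical at all nodes by \vectorcon{} on a sparse vector with $O(\log n)$ ones, which is cheap, followed by threshold $LIST$ messages. A rotating leader, the $j$-th smallest identity in iteration $j$, then receives the members' lists through \bounce{}. \bounce{} splits each list into $O(1)$-size chunks and relays every chunk through $C\log n$ random intermediate nodes, so $\Theta(n)$ data crosses $\Theta(n)$ distinct links in $O(\log n)$ rounds. The leader's rank assignment is sent back the same way. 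Each member then checks that every identity in its own list received a rank and forwards that rank to the node ($ECHO1$). Nodes echo back the leader-signed assignment ($ECHO2$). Members check uniqueness, range $[n]$, and order, and the committee runs binary consensus only on the single bit $redo$. A correct leader always passes, and a faulty leader's invalid assignment is always rejected. Since more than $2/3$ of the members are correct, the rotation reaches a correct leader within $O(\log n)$ iterations. Communication inside the committee is thus reduced to one-bit consensus per iteration, and that is what yields polylogarithmic time.
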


Obtaining near-linear message complexity demands us to avoid all-to-all communication patterns, which can be achieved via the committee-based paradigm.
In particular, when shared randomness is available, it is relatively easy to elect a small subset of nodes to form a reliable committee that contains mostly correct nodes. All nodes then communicate exclusively with these committee members and rely on the committee's coordinated decisions to do renaming.

Nevertheless, renaming is still non-trivial, even when a committee is present.
Specifically, to achieve strong and order preserving renaming, each node should use the rank of its original identity as its new identity (i.e., the node with the smallest original identity receives new identity 1, the node with the second smallest original identity receives new identity 2, so on and so forth), but the presence of Byzantine nodes complicates this process. After nodes report their original identities to the committee, committee members may obtain different local identity lists. If committee members perform renaming solely based on their individual lists, inconsistent decisions will arise. On the other hand, achieving consensus on an identity list among committee members to ensure consistency would be too costly.
Our solution to this dilemma is to elect a leader within the committee. This leader, which is not necessarily a correct node, assigns new identities according to its own local identity list; while all committee members verify the validity of this assignment together. If legitimate, the assignment is accepted; otherwise, a new leader is selected for reassignment. This approach offers two key advantages: (1) it eliminates inconsistency in identity assignment by relying on a single leader rather than a committee; and (2) it avoids the substantial cost of reaching committee-wide consensus on the identity list, requiring only agreement on a single bit (the validity of this assignment). Consequently, our algorithm achieves favorable time and message complexity.

\paragraph{Contribution II: Scalable committee election.}
To arrive at a communication efficient renaming algorithm without shared randomness, we still rely on the committee-based paradigm.
However, when nodes do not know the identities of each other and do not have common coin, establishing a reliable committee becomes highly non-trivial, especially if we want to make this process time and communication efficient.

At a high-level, we take the following approach which involves multiple iterations: in each iteration, each node becomes a committee member with some probability $p$. Initially, $p$ is set to be a small value. If Byzantine nodes do not send excessive messages within an iteration, a reliable committee can be elected, and all nodes accept the current committee. Otherwise, all nodes agree to proceed to the next iteration and double the probability $p$.
Eventually, $p$ will be sufficiently large so that the committee size becomes $\Theta(f)$. In that iteration, a reliable committee will be elected regardless of the behavior of the Byzantine nodes.
This ``doubling'' technique is widely used in distributed computing, especially in the context of resource competitive algorithms~\cite{bender15}, where an algorithm's performance scales with the actual severity of failure.

The novelty of our committee election algorithm lies in the design of a better verification mechanism when determining the reliability of a candidate committee.
This mechanism allows us to achieve stronger fault tolerance ($f<(1/3-\delta)n$) than the state-of-the-art ($f<(1/4-\delta)n$ in \cite{augustine20}) in the same model, without assuming any knowledge on the value of $f$ (see \Cref{subsec:related-work} for detailed discussion).
Specifically, in previous best result by Augustine, King, Molla, Pandurangan, and Saia~\cite{augustine20}, each node makes local decision on the validity of a candidate committee based on the size of its own committee list. In contrast, our algorithm first filters out Byzantine committee members that do not appear in many nodes' committee lists (this step reduces the message complexity of subsequent steps). Nodes then forward the filtered committee lists in a specific manner (see \Cref{sec:overview} for more details), allowing each node to obtain a ``union list''. Based on this ``union list'', each node performs a more accurate judgment on the candidate committee's validity.

The following theorem summarizes the guarantees provided by our committee election procedure.
Note that its performance metrics scale with both the number of Byzantine nodes and the amount of messages Byzantine nodes sent.
For instance, if Byzantine nodes in total send $O(n)$ messages, then honest nodes can elect a reliable committee of size $O(\log{n})$ using $\tilde{O}(n)$ messages in total, regardless of the value of $f$.
Similarly, as long as $f=O(n^{1-\alpha})$ where $\alpha>0$ is an arbitrarily small constant, the total message complexity of committee election is $o(n^2)$.

\begin{theorem}\label{thm:committee}
Assume $n$ nodes are in a synchronous distributed system, among which $0\leq f<(1/3-\delta)n$ are faulty, where $\delta>0$ is a constant. Let $G'$ denote the set of correct nodes. There exists a committee election algorithm that, at each node $v\in G'$, outputs: (1) a boolean variable $elected_v$ indicating whether $v$ is in the committee; (2) a set $S_v$ containing the identities of committee members; (3) an upper bound $\mathcal{C}$ on the committee size and a lower bound $\mathcal{C'}$ on the number of correct nodes in the committee.
Moreover, the algorithm guarantees following properties with high probability in $n$.
\begin{enumerate}[topsep=0ex,itemsep=0ex,label=(\alph*)]
	\item Validity: for any $v\in G'$, it holds that $\{\id(u)\mid u\in G'\text{ and }elected_u=1\}\subseteq S_v$.
	\item Bounded committee size: $|\bigcup_{v \in G'}S_v|<\mathcal{C}$, and \highlight{$\mathcal{C}=O(\min \{ f, T/n \}+\log n)$}.
	\item Honest majority: $|\{\id(v)\mid v\in G'\text{ and }elected_v=1\}|>\mathcal{C'}>\left(\frac{2}{3}+\delta'\right)\mathcal{C}$, for some constant $\delta'>0$.
	\item The algorithm has time complexity \highlight[cyan]{$O(\text{poly-log}(n))$} and message complexity \highlight{$\tilde{O}(n+\min\{nf,T\})$}.
	\item All nodes stop execution simultaneously with identical $\mathcal{C}$ and $\mathcal{C'}$ values.
\end{enumerate}
Here, $T\geq 0$ is the total number of messages sent by faulty nodes during algorithm execution.\footnote{\highlight[cyan]{For any faulty node $w$, since the adversary is static and takes over the node prior to the start of the algorithm, every message sent by $w$ is accounted for in $T$, even if $w$ mimics the behavior of a non-faulty node.}}
\end{theorem}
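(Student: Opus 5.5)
We prove \Cref{thm:committee} constructively, with a doubling scheme over iterations $i=0,1,\dots,\lceil\log_2 n\rceil$. In iteration $i$ every correct node independently self-elects with probability $p_i=\min\{1,2^i c\log n/n\}$, where $c$ is a large constant. The expected committee size is $\mu_i=p_i n$, and every node knows $\mu_i$, so all nodes can compute common thresholds $\mathcal{C}_i=(1+\epsilon)\mu_i$ and $\mathcal{C}'_i=(1-\epsilon)(1-(1/3-\delta))\mu_i$. The constant $\epsilon$ is chosen small enough that $\mathcal{C}'_i>(2/3+\delta')\mathcal{C}_i$, which is possible because the honest fraction exceeds $2/3+\delta$. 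Chernoff bounds and a union bound over the $O(\log n)$ iterations then give, w.h.p., for every $i$: the number of correct self-elected nodes exceeds $\mathcal{C}'_i$, and it lies within $(1\pm\epsilon)(n-f)p_i$. This fixes properties (c) and (e) in whichever iteration is accepted, provided we also show that the final committee lists contain fewer than $\mathcal{C}_i$ identities overall.

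The round of iteration $i$ runs as follows. First, each self-elected node announces itself by sending its authenticated identity to every node. This costs $\tilde O(n\mu_i)$ honest messages. Each correct $v$ collects a raw list $L_v$. Second comes filtering: each $v$ forwards each entry of $L_v$ to a random sample of $O(\log n)$ nodes, or more simply to the members of $L_v$. Each member keeps only the identities reported by at least a $\Theta(1)$ fraction of the lists it receives. Third, the filtered lists are forwarded to all nodes, so that each $v$ builds a union list $S_v$. Each $v$ accepts iteration $i$ iff $|S_v|<\mathcal{C}_i$. A binary agreement on accept/reject is then run among the candidate committee. Since honest lists all contain every correct elected node (authenticated broadcasts always arrive), validity (a) holds for every $S_v$. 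To make the union over all $S_v$ bounded, the key lemma is that any Byzantine identity surviving filtering must have been sent to $\Omega(n)$ correct nodes. We can then charge each surviving Byzantine identity $\Omega(n)$ to $T$, and the forwarding guarantees that correct nodes' union lists differ by at most identities that some correct node already saw widely.

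Next comes the doubling/charging argument. If iteration $i$ is rejected, some correct node saw at least $\epsilon\mu_i$ extra identities beyond the correct ones. By the lemma, Byzantine nodes then either spent $\Omega(n\epsilon\mu_i)$ messages or there are at least $\epsilon\mu_i$ Byzantine nodes. So rejection can only persist while $\mu_i\lesssim\min\{f,T/n\}$. Once $\mu_i\ge C\min\{f,T/n\}$, or once $\mu_i\ge cf/\epsilon$, the Byzantine identities can never exceed the slack, and the iteration is accepted. Hence the accepted $\mathcal{C}=O(\min\{f,T/n\}+\log n)$, which is (b). Summing honest cost $\tilde O(n\mu_j)$ over $j\le i$ gives geometric growth $\tilde O(n\mu_i)=\tilde O(n+\min\{nf,T\})$, and each iteration uses polylog rounds, giving (d). For (e), all nodes must agree on acceptance so that they stop simultaneously. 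The step I expect to be hardest is this agreement: a Byzantine-tolerant decision must be reached with $\tilde O(n\mu_i)$ messages even when the current committee is unreliable. I would first try having every correct node send its accept/reject bit only to the members of its own list, letting members run Dolev–Strong–style authenticated broadcast among themselves (cheap because lists are small on acceptance), and then echoing the decision back. Rejection is made conservative: any validly signed reject certificate forces rejection. With this rule, an accepted iteration implies that every correct node saw a small list, which is the condition needed for (b) and (c).
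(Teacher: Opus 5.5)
The step you flag as hardest, turning local accept/reject views into one common decision, is where the real proof lies, and your tentative fix breaks the theorem.

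\textbf{The rejection rule.} Under ``any validly signed reject certificate forces rejection,'' a single Byzantine node can sign a reject in every iteration for $O(\mu_i)$ messages. The adversary then drives $p_i$ to $1$ with $T=\tilde O(n)$, which gives $\mathcal{C}=\Theta(n)$ and $\Theta(n^2)$ honest messages. This contradicts (b) and (d) already for $f=1$. Requiring the certificate to bundle $\mathcal{C}_i$ signed \texttt{ELECT} messages only certifies $f=\Omega(\mu_i)$. It costs far fewer than $n\mu_i$ messages, so the $T/n$ half of the bound still fails. Rejection has to require many independent complaints, and a correct supermajority must be able to override it.

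\textbf{Dolev--Strong.} It is unsuitable here for three reasons. It needs $t+1$ rounds with up to $t=\Theta(\mu_i)$ faulty members, which is not polylog once $\mathcal{C}$ is large. It relays signature chains that exceed the $O(\log N)$-bit message bound. It assumes a common participant set, whereas correct nodes' lists differ. Your remark ``cheap because lists are small on acceptance'' is circular: agreement must also succeed in the rejecting case, where lists may be large and Byzantine-dominated.

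\textbf{Charging.} Your charging step concludes ``either $\Omega(n\epsilon\mu_i)$ messages or $\epsilon\mu_i$ Byzantine nodes.'' But $\mathcal{C}=O(\min\{f,T/n\}+\log n)$ needs \emph{both} lower bounds at the last rejected iteration.

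The missing idea is the paper's dichotomy, which works as follows.
\begin{itemize}
\item Output the sampling-filtered list $S'_v$, and use a union list $Long_v$ only as a detector.
\item The key inclusion is $\bigcup_{v\in G'}S'_v\subseteq\bigcap_{v\in G'}Long_v$. An identity surviving any correct node's sampling is held by at least $n/9$ correct nodes. It is therefore forwarded in \texttt{SERVE} messages by at least $(C\log n)/10$ correct nodes to everyone.
\item Each node with $|Long_v|\ge com\_all$ sends \texttt{LONG} to its list. A member sets $com\_in=1$ after $\epsilon n$ such messages and $com\_ready=1$ after $(1/3-\delta+\epsilon)n$.
\item If every correct member is ready, all output $1$ with no reliance on the committee.
\item Otherwise some correct member is not ready. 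Then some correct $w$ has $|Long_w|<com\_all$, hence $|\bigcup_{v\in G'}S'_v|<com\_all$. The committee is reliable, and a $\Theta(\log n)$-phase phase-king consensus with the polylog leader election of King et al.\ reaches agreement.
\item If, in addition, some correct member is ready, then at least $\epsilon n$ correct nodes complained to all of $G$. Every correct member then has $com\_in=1$, and validity forces output $1$.
\item The decision must also reach nodes whose own $S'_v$ is oversized, for which majority over their own list is meaningless. The paper handles this with \sample{} against nodes holding proper-size lists (\redo{}).
\end{itemize}

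Two smaller points also need fixing. First, broadcasting full filtered lists costs $\Theta(n^2\mu_i)$. Each entry must instead be forwarded with probability $\Theta(\log n/n)$, with a reception threshold and a per-sender cap. Second, taking the union list itself as the output committee invites the $X\cup A$ versus $X\cup B$ partition attack. So $|S_v|<\mathcal{C}_i$ at every correct $v$ does not bound $|\bigcup_{v}S_v|$.
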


We would also like to highlight that the above procedure can be easily integrated into more sophisticated algorithms to solve various distributed computing problems besides renaming, as it ensures all nodes terminate simultaneously.
For example, the following corollary demonstrates that it can be used to solve Byzantine binary consensus. The specific approach involves first reaching consensus within the committee, then broadcasting the result to all nodes, after which each node adopts the majority value among the received results as its final decision.
Though this approach does not outperform best-known algorithms, we stress that its message complexity scales with the actual number of Byzantine nodes and the amount of messages Byzantine nodes sent, which is a trait few existing algorithms posses. Moreover, for certain parameter regimes, it provides similar performance guarantees with the state-of-the-art consensus algorithms which do not utilize strong cryptographic assumptions.

\begin{corollary}\label{cor:consensus}
Assume $n$ nodes are in a synchronous distributed system, among which $0\leq f<(1/3-\delta)n$ are faulty, where $\delta>0$ is a constant.
There exists an algorithm that, with high probability in $n$, solves binary consensus in $\Tilde{O}(1)$ rounds with \highlight{$\tilde O(n+\min\{nf,T\})$} messages.
Here, $T\geq 0$ is the total number of messages sent by faulty nodes during algorithm execution.
\end{corollary}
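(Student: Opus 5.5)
We derive \Cref{cor:consensus} from \Cref{thm:committee} in three phases, executed after the committee election procedure terminates.

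\emph{Phase 1: committee election.} All nodes run the algorithm of \Cref{thm:committee}. By property (e), every correct node stops at the same round and holds identical $\mathcal{C},\mathcal{C}'$. This gives the synchrony needed to start the next phase in lockstep. Each correct $v$ also holds $elected_v$ and $S_v$.

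\emph{Phase 2: consensus inside the committee.} Each elected correct node runs an authenticated Byzantine agreement protocol, such as a Dolev--Strong style protocol or a simpler honest-majority protocol, restricted to the committee. It sends messages only to identities in its list $S_v$, and it processes messages only from senders in $S_v$.

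The main obstacle is that lists $S_v$ may differ across correct nodes and may contain Byzantine identities. Validity (a) ensures every correct elected node lies in every correct list. Property (b) bounds the union of all lists by $\mathcal{C}$. Property (c) guarantees more than $\mathcal{C}'>(2/3+\delta')\mathcal{C}$ correct members.

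We therefore use a protocol whose thresholds are stated in terms of the commonly known $\mathcal{C}'$ rather than $|S_v|$. For example, run phase-king style rounds where a value is adopted only when at least $\mathcal{C}'-(\mathcal{C}-\mathcal{C}')$ supporting messages are received. Agreement and validity then follow because correct nodes form a two-thirds majority of every possible participant set. Byzantine identities outside other nodes' lists can only add at most $\mathcal{C}-\mathcal{C}'<\mathcal{C}/3$ votes.

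To reach $\tilde O(1)$ time, we must avoid $O(\mathcal{C})$-round deterministic agreement when $\mathcal{C}$ is large. We use a randomized committee-internal protocol with expected $O(1)$ or $O(\log n)$ rounds, for example Rabin-style agreement driven by a leader chosen within the committee as in \cite{king06soda}. We cap it at $O(\log n)$ phases so that it succeeds with high probability.

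Each phase costs $O(\mathcal{C}^2)$ messages from correct nodes. Since $\mathcal{C}=O(\min\{f,T/n\}+\log n)$, we get $\mathcal{C}^2\le \mathcal{C}\cdot n=\tilde O(n+\min\{nf,T\})$, using $\mathcal{C}\le n$.

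\emph{Phase 3: dissemination.} Each elected correct node sends its decision to all $n$ nodes. Each node decides the value received from more than $\mathcal{C}'/2$ distinct senders, or the majority among at most $\mathcal{C}$ senders.

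Honest committee members all agree, and they number more than $\mathcal{C}'$. Byzantine senders number fewer than $\mathcal{C}-\mathcal{C}'<\mathcal{C}/3<\mathcal{C}'/2$. So every correct node outputs the common value. Node validity is inherited from committee validity, because each committee member's input is its own input, and that is a valid binary value.

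This phase costs $O(\mathcal{C} n)=\tilde O(n+\min\{nf,T\})$ messages and one round. Adding Phases 1–3 and taking a union bound over the high-probability events gives the claimed bounds.
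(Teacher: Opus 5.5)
Your proposal is correct and essentially matches the paper's proof. The paper takes the committee guaranteed by \Cref{thm:committee}, runs the phase-king procedure \comcon{} inside it with $com\_ready=0$ (its thresholds $com\_g$ and $com\_b$ play the roles of your $\mathcal{C}'$ and $\mathcal{C}-\mathcal{C}'$, and \largelead{} supplies the king, as in \Cref{lem:comcon}), has every node adopt the majority of the committee's broadcast decisions, and takes the costs from \Cref{lem:without-complex}. One detail to make explicit: in your dissemination step each node must count only messages from senders in its own list $S_v$, because only those Byzantine senders are bounded by $\mathcal{C}-\mathcal{C}'$; otherwise up to $f\gg\mathcal{C}$ faulty nodes outside the committee could push the wrong value past your $\mathcal{C}'/2$ threshold.
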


\paragraph{Contribution III: Renaming without shared randomness.}
Once a reliable committee containing mostly correct nodes is elected, strong and order preserving renaming can be accomplished via a similar strategy as in the case where shared randomness is available, though additional care must be taken to handle the challenges that different nodes may hold different committee lists and that the committee's size may be large. See \Cref{sec:overview} for more detailed discussion.
The guarantees provided by our entire algorithm (including both the committee election part and the renaming part) are summarized in the following theorem.

\begin{theorem}\label{thm:without-renaming}
Assume $n$ nodes are in a synchronous distributed system, among which $0\leq f<(1/3-\delta)n$ are faulty, where $\delta>0$ is a constant.
There exists a strong and order preserving renaming algorithm that is correct with high probability in $n$.
Moreover, the algorithm has time complexity $\Tilde{O}(1)$ and message complexity \highlight{$\tilde{O}(n+\min\{nf,T\})$}, with high probability in $n$.
Here, $T\geq 0$ is the total number of messages sent by faulty nodes during algorithm execution.
\end{theorem}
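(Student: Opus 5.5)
We will prove \Cref{thm:without-renaming} by composition. First we run the committee election algorithm of \Cref{thm:committee}, and then a committee-driven renaming phase modeled on the algorithm behind \Cref{thm:with-renaming}. By property (e) of \Cref{thm:committee}, all correct nodes finish committee election in the same round and hold identical values $\mathcal{C}$ and $\mathcal{C}'$, so the renaming phase can start synchronously. That phase then has to overcome two differences from the shared-randomness setting: the committee lists $S_v$ may differ across correct nodes, and the committee size is $\mathcal{C}=O(\min\{f,T/n\}+\log n)$ rather than $O(\log n)$.

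\textbf{Renaming phase.} The plan has four steps.

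\begin{enumerate}[topsep=0ex,itemsep=0ex]
\item \emph{Reporting.} Every correct node $v$ sends $\id(v)$ to every member of $S_v$, which costs $O(n\mathcal{C})$ messages. Validity (a) ensures that every correct elected member receives the identities of all correct nodes.
\item \emph{Filtering the committee.} Each committee member accepts committee messages only from identities it has been told are members by at least a threshold of reporters. This bounds the effective committee by $\mathcal{C}$ by property (b). It also keeps the correct members in the majority, since they number more than $\mathcal{C}'>(2/3+\delta')\mathcal{C}$ by property (c).
\item \emph{Propose and verify.} We elect a leader inside the committee with the \cite{king06soda} leader election run among committee members, using thresholds derived from $\mathcal{C}$ and $\mathcal{C}'$ rather than the unknown exact committee size. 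The leader sorts its local identity list and proposes the rank assignment to the committee. Correct members check it against their own lists: the proposal must contain every identity they received from correct reporters, and it must not contain too many unsupported identities. They then run a Byzantine binary agreement on a single validity bit; since the committee has honest supermajority, any authenticated synchronous agreement among $O(\mathcal{C})$ parties in $\tilde{O}(1)$ rounds suffices. If the bit is $0$, we re-elect a leader. Each attempt succeeds with constant probability, because the leader is correct with constant probability, so $O(\log n)$ attempts suffice with high probability.
\item \emph{Dissemination.} The accepted assignment is sent back to each reporting node. A node adopts the new identity vouched for by more than $\mathcal{C}-\mathcal{C}'$ members, a threshold that Byzantine members alone cannot reach.
\end{enumerate}

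\textbf{Correctness.} Uniqueness and order preservation follow because all correct nodes adopt ranks taken from one agreed sorted list. Strongness, $M=n$, needs the list to have size at most $n$ while containing all correct identities. Byzantine identities may legitimately appear, since a Byzantine node can report to the committee like a correct one. Padding by Byzantine nodes is prevented by requiring that each listed identity be supported by an authenticated report seen by some correct member, so the list contains only real nodes, hence at most $n$ entries. We expect this to be the main obstacle: verification must ensure that any accepted list includes every correct identity, yet verifying a list of size $\Theta(n)$ at each of $\mathcal{C}$ members costs $\Theta(n\mathcal{C})$ messages. We will charge this to the budget $\tilde{O}(n\mathcal{C})=\tilde{O}(n+\min\{nf,T\})$, which matches the claimed complexity. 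If cheaper verification is needed, a fallback is to let each correct node check its own rank consistency and complain, with complaints aggregated by the committee.

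\textbf{Complexity.} Summing over the phases, committee election costs $\tilde{O}(n+\min\{nf,T\})$ messages and $O(\text{poly-log}(n))$ rounds. Each renaming step costs $\tilde{O}(n\mathcal{C}+\mathcal{C}^2)\subseteq\tilde{O}(n+\min\{nf,T\})$ messages, using $\mathcal{C}\le n$. There are $O(\log n)$ leader attempts, each of $\tilde{O}(1)$ rounds. A union bound over the high-probability events of \Cref{thm:committee}, leader election and agreement gives the theorem.
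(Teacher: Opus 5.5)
Your overall architecture matches the paper's: committee election, then $O(\log n)$ leader attempts, each ending with in-committee agreement on one validity bit and majority-based dissemination. However, the propose-and-verify step as you describe it fails in two ways.

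\textbf{Liveness with a correct leader.} Members cannot tell which reporters are correct, so the only implementable check is ``the proposal contains every identity I received''.
\begin{itemize}[topsep=0ex,itemsep=0ex]
\item A Byzantine node can report its identity to a random half of the committee. A correct leader that sorts only its local list misses it with probability $1/2$, and the members that heard it then reject.
\item With enough such Byzantine nodes, almost every correct leader is voted down, so the constant per-attempt success probability is lost.
\item A threshold on the check alone does not help, since an identity reported to 90\% of the members can still miss the leader.
\end{itemize}
The missing idea is the paper's \echoone{}. Members forward their raw lists of signed $\langle ID,\id(u)\rangle$ messages to the leader. Each member checks the proposal only against a filtered list $List_v$: it keeps $\id(u)$ only if $u$ returns signed $\langle ANS\rangle$ receipts from at least half of $C\log n$ randomly sampled members. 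This guarantees that any identity in some correct member's $List_v$ was reported to $\Omega(com\_all)$ correct members, and hence reaches the leader.

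Your claim of $\tilde O(1)$ rounds per attempt is also unjustified. With $O(\log N)$-bit messages, moving a $\Theta(n)$-entry list directly between the leader and a member takes $\Omega(n/\log N)$ rounds, so both directions need \bounce{}. Because the committee may have size $\Theta(n)$, the many-to-one direction additionally needs relays to accept each source only with probability $(C\log n)/com\_all$. Otherwise each relay forwards $\Theta(com\_all\log n)$ items to the leader. This fractional forwarding is sufficient precisely because of the $\Omega(com\_all)$ guarantee provided by the filter.

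\textbf{Safety with a Byzantine leader.} Agreeing on a single bit does not yield ``one agreed sorted list''.
\begin{itemize}[topsep=0ex,itemsep=0ex]
\item Through its choice of which members its Byzantine nodes report to, the adversary can arrange that different groups of correct members receive different proposals. Each is locally valid: the same correct identities with different Byzantine padding, hence shifted ranks.
\item All correct members vote valid and the attempt terminates.
\item A correct node may then receive no rank vouched for by more than $\mathcal{C}-\mathcal{C}'$ members, or several ranks. An arbitrary choice among several can create collisions across nodes.
\end{itemize}
Your fallback of letting each node check only its own rank cannot detect this.

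The paper instead verifies \emph{after} adoption. In \echothree{}, every node sends its adopted rank, together with the leader-signed $\langle NewID\rangle$ message, to its whole committee. Every correct member lies in every correct node's list, and every correct identity lies in every correct member's $List_v$. Hence all correct members see the same adopted ranks of all correct nodes. They unanimously flag any missing, duplicate, out-of-range, or order-violating rank, and the validity property of \comcon{} turns this into $redo=1$.

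Separately, your in-committee agreement must be randomized and must tolerate differing membership views to run in $\tilde O(1)$ rounds. That is what \comcon{} built on \largelead{} provides.
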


As can been, the message complexity of this algorithm scales nicely with the actual number of Byzantine nodes and the actual amount of messages Byzantine nodes sent (instead of worst-case upper bounds), while maintaining poly-logarithmic time complexity.
In particular, the message complexity is $o(n^2)$ as long as $f\in O(n^{1-\alpha})$ for some arbitrarily small constant $\alpha>0$.
Moreover, the message complexity could be as low as $\tilde{O}(n)$ if the Byzantine nodes send limited messages in total ($O(n)$ in particular).
To the best of our knowledge, this is the first Byzantine resilient renaming algorithm that can achieve sub-quadratic communication cost.
Our result also indicates that fault-tolerant renaming is faster in synchronous systems, as Alistarh, Gelashvili, and Vladu~\cite{alistarh15} have showed an $\Omega(n^2)$ message complexity lower bound for solving crash resilient renaming in asynchronous systems.

\subsection{Related Work}\label{subsec:related-work}

\paragraph{Renaming.}
Renaming was first introduced in \cite{attiya90} for asynchronous message-passing systems with crash failures. It has since been extensively studied in both shared-memory and message-passing systems under various failure models. In this part, we focus on Byzantine resilient renaming in synchronous message-passing systems; interested readers can refer to surveys such as \cite{castaneda11,alistarh15survey} for broader overviews.

Solutions for consensus~\cite{lamport82} and reliable broadcast~\cite{bracha85} could be adapted to solve renaming in Byzantine settings by having nodes agree on the set of participating identities. However, this approach typically requires a number of rounds that grows linearly with the maximum number of failures~\cite{dolev82}.
The first explicit treatment of renaming in synchronous message-passing systems with Byzantine failures was presented by Okun and Barak~\cite{okun2006renaming}. Their work described how to achieve strong and order preserving renaming using Byzantine agreement, with linear time complexity. They also proposed a fast renaming algorithm that runs in $O(\log n)$ time, at the cost of a target namespace with size polynomial in $n$. Both algorithms tolerate up to \highlightblue{$n/3-1$} failures.
Subsequent work by Okun, Barak, and Gafni~\cite{okun08} introduced a strong, non-order-preserving renaming algorithm that tolerates \highlightblue{up to $n-1$} Byzantine failures with linear time complexity. They also presented an almost-strong renaming algorithm with $O(\log n)$ time complexity, tolerating up to \highlightblue{$n/3-1$} failures. All algorithms mentioned so far are deterministic.
Randomized approaches have also been explored for Byzantine resilient renaming. For example, Denysyuk and Rodrigues~\cite{denysyuk13} proposed a randomized algorithm that uses cryptographic primitives to solve strong renaming in $O(\log n)$ time, tolerating up to \highlightblue{$n-1$} Byzantine failures.

Despite strong fault tolerance and low time complexity, a common limitation of above results is their quadratic message complexity. Recently, \cite{bai25podc} broke this barrier at the cost of linear in $f$ time complexity and strong assumption (shared randomness).
In contrast,  \highlight[cyan]{our algorithm achieves sub-quadratic message complexity for a wide range of parameter regimes} while maintaining poly-logarithmic time complexity, without using shared randomness.

\highlight[cyan]{\Cref{tbl:results} summarizes the most relevant existing results alongside our contributions. 
}

\begin{table}[t]
\centering
\begin{footnotesize}
\begin{tabular}{cccccccc}
\toprule
& Fault-tolerance & Runtime & \makecell{Total\\ Messages} & \makecell{Total\\ Bits} & Strong & \makecell{Order\\ Preserving} & \highlight[cyan]{Randomness} \\
\midrule
\cite{okun2006renaming} & Byzantine ($f<n/3$) & \highlight[cyan]{$O(n^2)$} & $O(n^4)$ & $\tilde{O}(n^4)$ & yes & yes & \highlight{-} \\
\cite{okun2006renaming} & Byzantine ($f<n/3$) & \highlight[cyan]{$\tilde O(n)$ } & $\tilde  O(n^3)$ & $\tilde{O}(n^3)$ & yes & yes & \highlight{-} \\
\cite{okun2006renaming} & Byzantine ($f<n/3$) & $O(\log n)$ & $\tilde O(n^2)$ & $\tilde O(n^3)$ & - & - & - \\
\cite{okun08} & Byzantine ($f<n$) & $\tilde O(n)$ & $\tilde O(n^2)$ & $\tilde O(n^2)$ & yes & - & - \\
\cite{okun08} & Byzantine ($f<n/3$) & $O(\log n)$ & $\tilde O(n^2)$ & $\tilde O(n^3)$ & - & - & - \\
\cite{denysyuk13} & Byzantine ($f<n$) & $O(\log n)$ & $\tilde O(n^2)$ & $\tilde O(n^2)$ & yes & - & private \\
\cite{denysyuk13b} & Byzantine ($f<n/3$) & $O(\log n)$ & $\tilde O(n^2)$ & $\tilde O(n^3)$ & - & yes & - \\
\cite{bai25podc} & Byzantine ($f<(1/3-\delta)n$) & $\tilde O(f)$ & $\tilde O(n)$ & $\tilde O(n)$ & yes & yes & shared \\
This work & Byzantine ($f<(1/3-\delta)n$) & $O(\text{poly-log}(n))$ & $\tilde O(n)$ & $\tilde O(n)$ & yes &yes & shared \\
This work & Byzantine ($f<(1/3-\delta)n$) & $O(\text{poly-log}(n))$ & $\tilde O(nf)$ & $\tilde O(nf)$ & yes &yes & private \\
\bottomrule
\end{tabular}
\end{footnotesize}
\vspace{-1ex}
\caption{\highlight[cyan]{
Comparison of existing Byzantine fault-tolerant renaming algorithms in the synchronous message-passing model.
In the ``randomness'' column, ``private'' means each node uses private random bits, ``shared'' means all nodes additionally use shared random bits, and ``-'' means the algorithm is deterministic.
}}\label{tbl:results}
\vspace{-2ex}
\end{table}

\paragraph{Committee election.}
The committee-based paradigm is widely adopted in distributed computing to reduce time and/or communication overhead. It has served as a subroutine in many consensus and leader election algorithms. The concept of committee was first introduced by Bracha in 1987~\cite{bracha1987log}, with early works including~\cite{cooper1993fast, ostrovsky1994simple, russell2001perfect}.
When Byzantine nodes are present, shared randomness and cryptographic tools are often used to help elect a reliable committee~\cite{kumar2024sublinear,abraham25podc,bai25podc}.
If nodes do not know the identities of each other in prior, then committee election becomes much more challenging.

The most closely related work is by Augustine, King, Molla, Pandurangan, and Saia~\cite{augustine20}, which achieves committee election with $O(\text{poly-log}(n))$ round complexity and $O((\min\{t,n^2\}+n)\log n)$ bit complexity, where $t$ is the total number of bits sent by Byzantine nodes. Their algorithm tolerates up to $(1/4-\delta)n$ Byzantine failures and is the first to achieve a message complexity that scales with the actual severity of failures.
Our committee election process is inspired by \cite{augustine20}, but incorporates new mechanisms and techniques, resulting in an algorithm with similar performance, stronger fault-tolerance, and superior ease of integration. Specifically, our algorithm's time complexity and message/communication complexity are on par with that of \cite{augustine20}, with following advantages: (1) our algorithm tolerates $(1/3-\delta)n$ failures whereas \cite{augustine20} tolerates $(1/4-\delta)n$; (2) our algorithm does not require any knowledge on $f$ whereas \cite{augustine20} needs to know the exact value of $f$ (not knowing $f$ or replacing $f$ with an upper bound will render the complexity bounds of \cite{augustine20} invalid); (3) our algorithm ensures every correct node receives a reliable committee list whereas \cite{augustine20} only distributes committee lists to correct committee members.

We also stress that, though we assume message authentication, our committee election algorithm works as is in the same model as \cite{augustine20}. In particular, almost all claims in \Cref{thm:committee} still hold in the same model as \cite{augustine20}, with the only adjustment required being the committee size's upper bound $\mathcal{C}$ is $O(\min \{ f\log{n}, T/n \}+\log n)$ instead of $O(\min \{ f, T/n \}+\log n)$.
(\highlight[cyan]{Interested} readers can refer to Appendix~\ref{app-sec:committee-elect-alt-analysis} to see the alternative analysis of the committee election algorithm in the model of \cite{augustine20}. We encourage readers to finish \Cref{sec:analysis-no-shared-rnd} before proceeding to Appendix~\ref{app-sec:committee-elect-alt-analysis}.)

\section{Technical Overview}\label{sec:overview}

This section presents a technical overview of our two renaming algorithms, as well as the committee election scheme which enables them.
We begin by introducing renaming with shared randomness, which serves to showcase some key ideas that we employ during algorithm design.
Subsequent parts delve into the setting without shared randomness, which is much more challenging. In that case, we first tackle fast and scalable committee election. Then, we address the renaming problem, explaining what additional techniques are used along the way.

\paragraph{Renaming with shared randomness.}
The availability of a global coin dramatically simplifies the task as it allows us to: (1) effortlessly establish a reliable, small-sized committee with a guaranteed honest majority, and (2) leverage this committee to execute a time efficient and message competitive renaming procedure.

For committee election, a random global pool is established by independently sampling identities in $[N]$ with probability $\Theta((\log n)/n)$ using shared random bits.
Nodes whose identities reside in this pool self-elect into a candidate committee and broadcast their candidacy. Each node $v$ then verifies received election messages against the pool to construct a local candidate committee list $S_v$.
Candidate committee members execute a vector consensus algorithm to reach agreement on a committee list $\hat{S}$, which is then propagated throughout the network. For each node $v$, the final committee list $S'_v$ is determined through threshold-based validation, where nodes accept identities confirmed by sufficiently many candidate committee members. A critical property enforced by the above scheme is that all nodes have \emph{identical} committee lists.

For new identities generation and distribution, we implement an iterative leader-based identity assignment procedure based on the committee that is generated above.
This procedure contains multiple iterations, in which the node with the $i$-th smallest identity in the committee plays the role of leader in the $i$-th iteration.
Each iteration begins with nodes broadcasting their identities to committee members, allowing each committee member to construct a local list of identities that are present. Then, each committee member sends its local identity list to the leader using random intermediate nodes for forwarding.
The mechanism of randomized intermediate forwarding---particularly effective for transmitting large messages efficiently---is employed extensively throughout our algorithms. To modularize this frequently used mechanism, we abstract it as a dedicated communication primitive called $\bounce$, formalized in \Cref{subsec:bounce}.
The current leader aggregates the forwarded messages to construct its identity list, and assigns new rank-based identities for the nodes that are in the list.
The leader then distributes this rename-list to all committee members, again leveraging the $\bounce$ primitive to reduce time cost.
After receiving the rename-list, each committee member checks if every node in its local identity list receives a new identity, and distributes new identities to corresponding nodes. Upon receiving these messages, a node simply accepts an arbitrary new identity in these messages, and send echo messages to the committee to inform the new identity it has chosen. These echo messages are used for the following validation.
Upon gathering the new identities of all nodes, the committee members check if there exist: duplicate identities (two nodes with identical new identities), or invalid identities (nodes with new identities not in $[n]$), or identities with invalid order ($\id(u)<\id(v)$ yet $\nid(u)>\nid(v)$). Any violation triggers a committee member to vote to continue into next iteration.
Byzantine resilience is achieved through consensus on whether to continue into next iteration, ensuring termination occurs only if all consistency checks pass.
In summary, this propose-verify-commit approach with rotating leadership and distributed auditing enables reliable renaming.

Before proceeding to the next part, we briefly describe the \bounce\ primitive mentioned above.
Suppose node $u$ wants to send a message of size $\Theta(n)$ to node $v$. Since our model limits each message to be of $O(\log N)$ bits, direct sending costs $\Omega(n/\log{N})$ rounds. Instead, $u$ can split the message into $\Theta(n)$ chunks each of size $O(1)$; and for each chunk asks $\Theta(\log{n})$ random nodes to forward that chunk to $v$. In this way, $v$ can receive the entire message within one round reliably, as each chunk is given to at least one correct node with high probability.
We often call this mechanism ``bounce forwarding'' throughout the paper, see \Cref{subsec:bounce} for more details.

\paragraph{Committee election without shared randomness.}
As have been sketched in \Cref{sec:intro}, we implement an adaptive committee election algorithm through exponentially increasing election probabilities. This strategy ensures that the protocol efficiently converges to a committee with sufficient correct nodes while maintaining logarithmic round complexity and competitive message complexity.

The protocol initializes with a base probability $p = \Theta((\log n)/n)$ and doubles the election probability after each iteration until either a valid committee is formed or all nodes are included in the committee.
In each iteration, nodes independently elect themselves with probability $p$ and broadcast election messages; meanwhile, each node $v$ also collect received election messages and form a local candidate committee list $S_v$.
The core verification step then occurs, which processes the raw candidate set $S_v$ into a verified committee set $S_v'$, and determines whether next iteration is required via a boolean variable $redo_v$. This verification step is the key that enables the algorithm to tolerate up to $(1/3-\delta)n$ Byzantine nodes.

To clearly explain our verification process, we must first clarify the requirements for our committee composition. We aim for a committee satisfying:
(1) every correct committee member is included in the committee list of every correct node;
and (2) the size of the union of committee lists held by all correct nodes has an upper bound denoted by $com\_all$, in which correct committee members constitute the majority.
(Note that compared with the case where shared randomness is available, we no longer demand all correct nodes hold identical committee lists.)

Now, we describe our verification process, which guarantees one of two outcomes: either the committee meets all the above requirements and all nodes acknowledge its validity, or the committee fails to meet the requirements and all nodes reject it.
This process begins by letting each node $v$ perform a preliminary validation on its local committee list $S_v$: for each identity $\id(u)$ in $S_v$, node $v$ asks $\Theta(\log n)$ random nodes about whether $\id(u)$ appears in their local committee lists, and removes $\id(u)$ from $S_v$ if the number of received confirmations falls below a predefined threshold.
This step helps identify Byzantine committee members who have not widely announced their presence.
In the subsequent algorithm, this threshold-based ``random sampling'' verification mechanism will be employed repeatedly. We therefore abstract it into a verification primitive, denoted as $\sample$, formalized in \Cref{subsec:sample}.
We would like to highlight this ``random sampling'' technique, which is inspired by \cite{augustine20}, in that it has several favorable properties.
First, by doing random sampling, the set of nodes sampled by a correct node 
\highlight[cyan]{with high probability}
contain mostly correct nodes, regardless of the behavior of the adversary.
Second, this process allows nodes to accurately estimate the distribution of a variable among all nodes with limited cost: the node does not need to query all nodes.
Third, if the adversary wants to bias the outcome of this process, it has to spend a lot of resources. This allows the resource expenditure of our algorithm to scale with the actual severity of failures.

After preliminary validation, each node broadcasts a random subset of its local committee list.
\highlightblue{The subset size is carefully designed so that the broadcast process does not incur too much time or too many messages, while each honest committee member appears in sufficiently many subsets.}
Upon receiving these list fragments from other nodes, every node computes the union of its local list and the received list fragments.
If the union exceeds the size limit $com\_all$, the node notifies all committee members.
If a committee member receives excessive notifications, it deems current committee invalid. Otherwise, it considers current committee to be valid.
At this point, correct committee members reach agreement on committee's validity using a leader-based phase king style consensus algorithm, followed by a global consensus among all nodes to finalize the decision.

This verification mechanism ensures that for a committee member to pass the preliminary validation, it must have presented itself to most nodes. Consequently, it will likely be broadcast by a correct node to all others and included in the union list. This step is critical for tolerating up to almost $1/3$ fraction of Byzantine nodes.
Particularly, in prior algorithms \highlight[cyan]{(e.g., \cite{augustine20})}, a scenario could arise where each local committee list appears valid (that is, containing mostly correct nodes), yet each such locally-valid list contains \emph{distinct} Byzantine nodes. As a result, the union of these lists exceeds the size limit and contains too many Byzantine nodes.
\highlightblue{To mitigate this, prior algorithms can only tolerant $1/4$ fraction of Byzantine nodes.}
In our algorithm, each node obtains a local union list, and any Byzantine node that passes preliminary validation will with high probability appear in the local union list of every correct node. If too many such Byzantine committee members exist, every correct node can detect the anomaly locally and vote to invalidate the committee, thereby achieving stronger fault tolerance.

\highlight[cyan]{
We conclude this part with a concrete example to illustrate the effectiveness of the above procedure, and demonstrate why previous algorithms such as \cite{augustine20} fail when the fraction of faulty nodes approaches $1/3$.
Consider a setting where $3001$ nodes---among which $1000$ are faulty---are building a committee of size $301$.
Suppose $201$ correct nodes (denoted as $X$) elect themselves as candidates, and two distinct groups of faulty nodes each of size $100$ (denoted as $A$ and $B$ respectively) also elect themselves as candidates.
Faulty nodes can control their candidacy announcements so that roughly half of the correct nodes initially observe $X \cup A$ as the candidate set, while the other half observe $X \cup B$ as the candidate set.
In previous algorithms, the random sampling mechanism will validate both $A$ and $B$, as each identity in $A\cup B$ is supported by $2/3$ fraction of nodes.
Consequently, for any correct node, the final committee consists a near-equal split of correct and faulty nodes. Moreover, no correct node is aware of this, as each node only maintains a local view and lacks a global perspective of the entire committee.\footnote{Conversely, when the fraction of faulty nodes is restricted to $1/4$, groups $A$ and $B$ cannot be validated simultaneously, as each node requires support from $3/4$ fraction of all nodes to be admitted into the final committee.}
Our algorithm addresses this vulnerability by the aforementioned fragment-union mechanism.
Even if the correct nodes' initial observations on the candidate committee are partitioned (i.e., $X \cup A$ and $X \cup B$), our algorithm requires each node to broadcast a small and carefully sized fragment of its local candidate list to all other nodes. By aggregating these fragments, every node can reconstruct the union $X \cup A \cup B$. This aggregated union exposes the anomaly, allowing all nodes to locally invalidate the current candidate committee.
}

\paragraph{New identity assignment without shared randomness.}
In case global shared coin is not available, we rely on a leader to assign new identities and delegate verification authority to the committee.
Our algorithm ensures that any improper new identity assignment will be detected by the committee, triggering subsequent iterations with new leaders, whereas valid assignments receive committee approval to finalize the renaming process.
We maintain the invariant that correct leaders always produce valid assignments that pass all validity checks.
By employing a \highlight{scalable leader election algorithm based on \cite{king06soda}} that returns a correct node with constant probability, $O(\log n)$ iterations suffice to accomplish renaming with high probability.

We now give more detail regarding each iteration. First, all nodes transmit their identities to committee members, who compile local identity lists from received messages.
The identities on these local lists first go through a threshold-based random sampling validation process similar to $\sample$, and then are relayed to the leader through the $\bounce$ communication primitive.
This enables the leader to construct a comprehensive identity list that contains every correct committee member's local list.
The leader assigns rank-based new identities to the nodes in its list and disseminates these assignments to the committee via the $\bounce$ primitive.
Each committee member subsequently verifies the validity of assignments by examining following critical aspects: whether all nodes in the local list receive assignments, whether identity collisions occur, whether all identities fall within $[n]$, and whether all identities are order preserving.
The verification culminates in a two-tier consensus process: committee members first reach agreement on the assignments' validity, followed by global consensus among all nodes.

To conclude this section, we remark that proper combination of the leader-based paradigm and the committee-based paradigm is another key integrant that leads us to the final algorithm. Indeed, though a correct committee contains mostly honest nodes, the views of individual committee members are inherently different due to Byzantine nodes. Eliminating such inconsistency could be too costly, in that the cost to find a correct leader might be lower.

\section{Preliminary}\label{sec:preliminary}

Recall that we assume message authentication.
Throughout algorithm description, whenever a node $v$ constructs and sends a message of the form $\langle\cdot\rangle$, it indicates that $v$ signs this message, allowing the receiver to verify that this message indeed is sent by $v$.
In addition, if a node $w$ forwards a message of the form $\langle\cdot\rangle$ received from $v$ to another node $u$, then $u$ can also verify that this forwarded message indeed originates from $v$.
This transitive verifiability is a crucial component in our algorithmic construction for achieving Byzantine-resilient renaming.

The reminder of this section introduces two distributed subroutines that serve as key building blocks for our algorithms: bounce forwarding (\Cref{subsec:bounce}) and random sampling (\Cref{subsec:sample}).
\highlightblue{Note that in the pseudocode, $\epsilon$ and $C$ are constants to be specified later.}

\subsection{Bounce Forwarding}\label{subsec:bounce}

\begin{figure}[!t]
\hrule
\vspace{1ex}
{$\bounce(sender_v, Src_v, Dest_v, Item_v, prob_v)$ executed at node $v$.}
\vspace{.5ex}
\hrule
\begin{footnotesize}
\begin{algorithmic}[1]
\If{($sender_v==1$)}
	\For{($item_v \in Item_v$)}
		Send $\langle ITEM, item_v \rangle$ to $C\log n$ random nodes.
	\EndFor
\EndIf
\For{($i \in [N]$)}
	\State $Cnt_v[i] \gets 0$.
	\State Set $Accept_v[i] \gets 1$ with probability $prob_v$, otherwise set $Accept_v[i] \gets 0$.
\EndFor
\For{(every $\langle ITEM, item_u \rangle$ message received from some node $u \in Src_v$)}
	\If{($Accept_v[\id(u)]==1$ \textbf{and} $Cnt_v[\id(u)] < (1+\epsilon)\log n$)}
		\State Send $\langle ECHO, item_u \rangle$ to all nodes in $Dest_v$.
		\State \highlight{$Cnt_v[\id(u)] \gets Cnt_v[\id(u)]+1$.}
	\EndIf
\EndFor
\State $Item_v' \gets \emptyset$.
\For{(every $\langle ECHO, item_u \rangle$ message received)}
	$Item_v' \gets Item_v' \cup \{ item_u \}$.
\EndFor
\State \textbf{return} $Item_v'$.
\end{algorithmic}
\end{footnotesize}
\hrule
\vspace{1ex}
\caption{Pseudocode of \bounce.}\label{fig:alg-bounce}
\vspace{-3ex}
\end{figure}

The ``bounce forwarding'' mechanism, which is shown in \Cref{fig:alg-bounce}, allows a set of source nodes to each send a large message to a set of destination nodes via a set of intermediate nodes. This enables time-efficient large message transmission in bounded bandwidth model. To illustrate this mechanism, consider an instance in which nodes in $\tilde{S}\subseteq V$ want to send large messages to nodes in $\tilde{D}\subseteq V$, where $V$ denotes the set of all nodes. Every node executes \bounce\xspace with $Src=\tilde{S}$ and $Dest=\tilde{D}$. Moreover, for every node $\tilde{v}\in\tilde{S}$, for the message it intends to transmit, node $\tilde{v}$ will partition it into multiple fragments such that each fragment fits into one message in the considered communication model. When executing \bounce, each node in $\tilde{S}$ sets $sender=1$ and sets $Item$ to its message fragments, while each node not in $\tilde{S}$ sets $sender=0$ and sets $Item=\emptyset$. Now, during the execution, as the pseudocode suggests, each sender node will send each of its message fragments to $\Theta(\log{n})$ random nodes for forwarding. This helps ensure each fragment is received by at least one correct node. On the other hand, for any node $v$, if it receives a request from node $u$ to forward a fragment, node $v$ will send this fragment to nodes in $Dest$ if the following criteria are met: (1) $u$ is in $Src$; (2) the total number of fragments that $v$ has forwarded for $u$ is not too large; and (3) node $u$ is ``accepted'' by $v$, where each node in $Src$ is ``accepted'' by $v$ with a probability specified when invoking \bounce. Note that the last two criteria are introduced to allow us further tweak the total communication cost of \bounce, and to counter the behavior that Byzantine nodes in $Src$ spam fragments.

\subsection{Random Sampling}\label{subsec:sample}

\begin{figure}[t!]
\hrule
\vspace{1ex}
{$\sample(query_v, reply_v, Item_v, thresh_v)$ executed at node $v$.}
\vspace{.5ex}
\hrule
\begin{footnotesize}
\begin{algorithmic}[1]
\If{($query_v==1$)}
	\State $Map_v \gets$ an empty map.
	\For{(every $item_v \in Item_v$)}
		\State $Map_v[item_v] \gets C\log n$ random nodes.
		\State {Send $\langle ASK, item_v\rangle$ {to nodes in} $Map_v[item_v]$.}
	\EndFor
\EndIf
{
\If{($reply_v==1$)}
\For{($i \in [N]$)}
	$Cnt_v[i] \gets 0$.
\EndFor
\For{(every $\langle ASK, item_u\rangle$ message received from node $u$)}
	\If{($item_u \in Item_v$ \textbf{and} $Cnt_v[\id(u)] < (1+\epsilon)C\log n$)}
		\State Reply $\langle YES, item_u\rangle$.
		\State $Cnt_v[\id(u)] \gets Cnt_v[\id(u)]+1$.
	\EndIf
\EndFor
\EndIf
}
\State $Item_v' \gets \emptyset$.
\If{($query_v==1$)}
\For{(every $item_v \in Item_v$)}
	\If{(at least $thresh_v \cdot C\log n$ messages of the form $\langle YES, item_v\rangle$ received from nodes in $Map_v[item_v]$)}
		\State $Item_v' \gets Item_v' \cup \{ item_v \}$.
	\EndIf
\EndFor
\EndIf
\State \textbf{return} $Item_v'$.
\end{algorithmic}
\end{footnotesize}
\hrule
\vspace{1ex}
\caption{Pseudocode of \sample.}\label{fig:alg-sample}
\vspace{-3ex}
\end{figure}

The ``random sampling" mechanism, shown in \Cref{fig:alg-sample}, enables a set of querying nodes to efficiently verify the existence of their data items with limited communication costs.
To illustrate this mechanism, consider an instance where a set of nodes $\tilde{Q} \subseteq V$ wish to verify whether their locally held data items are also present within another set of nodes $\tilde{R} \subseteq V$, where $V$ denotes the set of all nodes.
For each querying node in $\tilde{Q}$, it sets $query=1$ and sets $Item$ to be the set of data items it wants to verify. Nodes not in $\tilde{Q}$ set $query=0$. For each responding node in $\tilde{R}$, it sets $reply=1$ and sets $Item$ to be its own local set of data items. Nodes not in $\tilde{R}$ set $reply=0$.
During execution, as the pseudocode indicates, each querying node sends each of its data items to $\Theta(\log n)$ randomly chosen nodes in $V$.
\highlightblue{This random, limited-scope dissemination ensures low communication cost while preventing Byzantine nodes from manipulating the results unless they spend a lot of resources to let $\tilde{R}$ be aware of the queried items in the first place.}
On the responder side, if a node $v\in\tilde{R}$ receives a query for a specific data item from some node $u$, it replies with a positive acknowledgment only if the following criteria are met: (1) the queried item exists in $v$'s local $Item$ set; and (2) the total number of queries $v$ has already processed from $u$ is below a predefined threshold. The second criterion is introduced to prevent Byzantine querying nodes from spamming responders with excessive requests.
Finally, each querying node collects acknowledgments for its queried items. A data item is included in the output set $Item'_v$ only if the number of positive replies received from the specific nodes originally queried for that item reaches a threshold.
This decision step provides robustness against node failures or malicious non-responders, effectively filtering out items that are not widely held among the replying nodes.

\section{Renaming Algorithm with Shared Random Bits}\label{sec:alg-with-shared-rnd}

\begin{figure}[t!]
\hrule
\vspace{1ex}
{Renaming algorithm with shared random bits executed at node $v$.}
\vspace{.5ex}
\hrule
\begin{footnotesize}
\begin{algorithmic}[1]
\State $elected_v \gets 0$, $Pool \gets \emptyset$, $S_v \gets\emptyset$, $S_v' \gets\emptyset$.
\State $com\_all \gets (1+\epsilon)C\log n$, $com\_g \gets (1-\epsilon)(2/3 + \delta)C\log n$, $com\_b \gets com\_all - com\_g$.
\For{($i \gets 1 \textbf{ to } N$)}
	\State Add $i$ to $Pool$ with probability $(C\log n)/n$ using shared randomness.
\EndFor
\If{($ \id(v) \in Pool $)}
	\State $elected_v \gets 1$.
	\State Send $\langle ELECT, \id(v) \rangle$ to all nodes.
\EndIf
\For{(every $\langle ELECT, \id(u) \rangle$ message received)}
	\If{($\id(u) \in Pool$)}
		$S_v \gets S_v \cup \{ \id(u) \}$.
	\EndIf
\EndFor
\If{($ elected_v==1 $)}
	\State $\hat{S}_v\gets $ Run $\texttt{VectorConsensus}(S_v, com\_all, com\_g, com\_b)$ with nodes in $S_v$.
\EndIf
\If{($elected_v==1$)}
	\For{($\id(u)\gets 1$ \textbf{to} $N$)}
		\If{($\id(u) \in \hat{S}_v$)} Send $\langle LIST, \id(u) \rangle$ to all nodes. \EndIf
	\EndFor
\EndIf
\For{(every $\langle LIST, \id(u) \rangle$ message received)}
	\If{(at least $com\_b$ messages of the form $\langle LIST, \id(u) \rangle$ received from nodes in $S_v$)}
		\State $S_v' \gets S_v' \cup \{ \id(u)\}$.
	\EndIf
\EndFor
\State $\nid(v)\gets $ Run $\shareid(elected_v, S_v', com\_all, com\_g,  com\_b)$ with all nodes.
\State \textbf{return} $\nid(v)$.
\end{algorithmic}
\end{footnotesize}
\hrule
\vspace{1ex}
\caption{Pseudocode of the renaming algorithm with shared random bits.}\label{fig:alg-srb-main}
\vspace{-3ex}
\end{figure}

Recall that we assume there are less than $(1/3-\delta)n$ Byzantine nodes, where $\delta>0$ is an arbitrarily small constant.
\Cref{fig:alg-srb-main} gives the high-level pseudocode of our renaming algorithm when shared randomness is available.
Specifically, our algorithm first elects a committee with the following guarantees:
(1) the size of the committee is at most $com\_all= (1+\epsilon)C\log n$; (2) the committee contains more than $com\_g = {(1-\epsilon)(2/3 + \delta)C\log n}$ correct nodes; and (3) the committee contains less than $com\_b = com\_all - com\_g$ Byzantine nodes.
\highlightblue{Here, $\epsilon>0$ is a sufficiently small constant, and $C>0$ is a sufficiently large constant. ($\epsilon$ and $C$ will be specified in the analysis.)}
Then, we utilize this committee to distributed new identities.

\paragraph{Committee election.}
With shared randomness, committee election can be done efficiently with relative ease.
The process begins with all nodes constructing an identical candidate pool: add each $i\in [N]$ to the pool with probability $(C\log n)/n$ using shared randomness.
For a node $v$, if its identity $\id(v)$ is in the pool, the node is considered elected and broadcasts an $ELECT$ message to announce its candidacy.
Upon receiving such a message from a node $u$, any node $v$ independently verifies $u$'s claim by checking if $u$'s identity is in the pool. If the verification passes, $u$ is added to $v$'s local committee list $S_v$. This yields a local view of the committee at each node. To transform these (potentially different) local views into a single, consistent global committee list, all elected nodes subsequently participate in a \vectorcon{} procedure.
Specifically, each elected node's local committee list can be interpreted as a bit vector of length $N$, in which the $i$-th entry is $1$ iff the elected node believes the node with identity $i$ is in its local committee list. These vectors are sparse as $O(n)$ entries could be $1$. Our \vectorcon{} procedure ensures all correct committee members agree on an identical vector, see Appendix~\ref{app-sec:vector-con} for more details.
Once consensus is reached, the agreed-upon committee list $\hat{S}$ is propagated to all nodes by committee members.
For any node $v$, its final committee list $S'_v$ is determined by a threshold mechanism: a node $u$ in $\hat{S}$ is recognized as a committee member by $v$ if node $v$ receives endorsements from at least $com\_b$ nodes that are in $S_v$ (recall $com\_b$ bounds the number of Byzantine nodes in the committee), ensuring robustness against adversarial behavior.
We conclude this part by noting that the committee election process with shared randomness guarantees strong consistency: all correct nodes output identical committee lists.

\begin{figure}[t!]
\hrule
\vspace{1ex}
$\shareid(elected_v, S_v', com\_all, com\_g,  com\_b)$ executed at node $v$.
\vspace{.5ex}
\hrule
\begin{footnotesize}
\begin{algorithmic}[1]
\State $\nid(v) \gets \perp$.
\For{($j\gets 1$ \textbf{to} $|S'_v|$ )}
	\State $lead_v \gets$ the $j$-th smallest identity in $S_v'$.
	\State Send $\langle ID, \id(v) \rangle$ to all nodes in $S'_v$.
	\If{($elected_v==1$)}
		\State $List_v \gets \emptyset$, $ListMsg_v \gets \emptyset$.
		\For{(every $\langle ID, \id(u) \rangle$ message received)}
			\State $List_v \gets List_v \cup \{ \id(u) \}$.
			\State $ListMsg_v \gets List_v \cup \{ \langle ID, \id(u) \rangle \}$.
		\EndFor
	\EndIf
	\State $ListMsg_v' \gets $ Run $\bounce(elected_v, S_v', \{lead_v\}, ListMsg_v, 1)$.\label{line:shared-rename-bounce-1}
	\If{($\id(v)==lead_v$)}
		\State $List_v' \gets$ all identities in $ListMsg_v'$.
		\State $RankMsg_v \gets \emptyset$.
		\For{(every $\id(u) \in List_v'$)}
			\State $rank_u \gets$ the rank order of $\id(u)$ in $List_v'$.
			\State $RankMsg_v \gets RankMsg_v \cup \{ \langle NewID, \id(v), \id(u), rank_u \rangle \}$.
		\EndFor
	\EndIf
	\State $RankMsg_v' \gets $ Run $\bounce(\mathbb{1}_{\id(v)==lead_v}, \{lead_v\}, S_v', RankMsg_v, 1)$.\label{line:shared-rename-bounce-2}
	\If{($elected_v==1$)}
		\State $com\_redo_v \gets 0$, $Map_v \gets $ an empty map.
		\For{(every $\langle NewID, lead_v, \id(u), rank_u \rangle \in RankMsg_v'$)}
			\State \highlight{$Map_v[\id(u)] \gets \langle NewID, lead_v, \id(u), rank_u \rangle$.}
		\EndFor
		\For{(every $\id(u) \in List_v$)}
			\If{($Map_v[\id(u)] == \perp$)}
				$com\_redo_v \gets 1$
			\algorithmicelse
				\xspace Send $\langle ECHO1, Map_v[\id(u)] \rangle$ to node $u$.
			\EndIf
		\EndFor
	\EndIf
	\State $Msg_v \gets \bot$.
	\For{(every $\langle ECHO1, \langle NewID, lead_v, \id(v), rank_v\rangle \rangle$ message received from nodes in $S_v'$)}
		\State $Msg_v \gets \langle NewID, lead_v, \id(v), rank_v\rangle$, $\nid(v)\gets rank_v$.
	\EndFor
	\State Send $\langle ECHO2, \id(v),  Msg_v \rangle$ to nodes in $S_v'$.
	\If{($elected_v==1$)}
		\State $Name_v \gets$ an empty map.
		\For{(every $\langle ECHO2, \id(u), \langle NewID, lead_v, \id(u), rank_u\rangle \rangle$ received from node $u$ with $\id(u)\in List_v$)}
			\State {$Name_v[\id(u)] \gets rank_u$.}
		\EndFor
		\If{($\exists u_1 \neq u_2, Name_v[u_1]==Name_v[u_2]$)}
			$com\_redo_v \gets 1$.
		\EndIf
		\If{($\exists u, Name_v[u]>n$)}
			$com\_redo_v \gets 1$.
		\EndIf
		\If{($\exists u_1 < u_2, Name_v[u_1] > Name_v[u_2]$)}
			$com\_redo_v \gets 1$.
		\EndIf
		\State  $com\_redo_v' \gets $ Run $\sharecon(com\_redo_v)$ with nodes in $S_v'$.
		\State Send $\langle RET,  com\_redo_v' \rangle$ to all nodes.
	\EndIf
	\State $redo_v \gets$ the majority value of $com\_redo'$ among $RET$ messages received from nodes in $S'_v$.
	\If{($redo_v==0$)}
		\textbf{break}.
	\EndIf
\EndFor
\State \textbf{return} $\nid(v)$.
\end{algorithmic}
\end{footnotesize}
\hrule
\vspace{1ex}
\caption{Pseudocode of \shareid.}\label{fig:alg-srb-shareid}
\vspace{-3ex}
\end{figure}

\paragraph{Distribute new identities.}
Once a committee is elected, we utilize it to accomplish the renaming task; see \Cref{fig:alg-srb-shareid}.
The procedure \shareid{} contains multiple iterations. In each iteration, a committee member becomes the leader and coordinates with the rest of the committee to assign new identities to all nodes. Following this assignment process, the committee further conducts a validity assessment and reach agreement on whether to halt execution (in case the assignment is valid) or continue into next iteration (in case the assignment is invalid).
Next, we describe each iteration in detail.

The leader of iteration $j$ is the node with the $j$-th smallest identity in the committee.
(Recalled that above committee election process ensures all nodes obtain identical committee lists, hence nodes agree on identical leader in each iteration.)
At the start of each iteration, each node broadcasts its original identity to the committee.
Each committee member $v$ adds these identities into its local identity list $List_v$ and stores these identity announcement messages in a local list $ListMsg_v$. Subsequently, every committee member $v$ transmits $ListMsg_v$ to the leader via the \bounce{} primitive.
The leader then aggregates all identities in the received messages into a unified list $List'$ and assigns new identities to the nodes by computing the rank order of each node's original identity in $List'$. Then, the leader transmits the new identity assignment to the committee via the \bounce{} primitive.

In the rest of the iteration, each committee member first validates and distributes the new identity assignment, and then reach consensus on whether to continue into next iteration.
More specifically, each committee member $v$ verifies whether every identity in its local $List_v$ has been assigned a new identity in the leader's proposal. If any identity is missing an assignment, $v$ sets $com\_redo_v$ to $1$, voting for proceeding to the next iteration.
Otherwise, for each node $u \in List_v$, committee member $v$ forwards the leader's proposal of $u$'s new identity via an $ECHO1$ message.
\highlightblue{Once node $u$ received $ECHO1$ messages, it picks an arbitrary new identity appearing in $ECHO1$ messages (all new identities for $u$ would be identical in a correct execution).
Node $u$ will then notify the committee about its final new identity via $ECHO2$ messages.
Critically, $u$ incorporates the leader's proposal appearing in $ECHO1$ messages into $ECHO2$ messages; this prevents $u$ from incorrectly accusing the leader or the committee for disseminating invalid new identity.}
Based on received $ECHO2$ messages, each committee member $v$ constructs a mapping from nodes' original identities to their new identities.
Then, three critical checks are performed: identity uniqueness (no new identity is claimed by multiple nodes), range validity (no new identity exceeds $n$) and order preserving (new identities preserve the relative order of original identities). Any violation triggers committee member $v$ to vote for next iteration.
To reach agreement on whether the next iteration is needed, the committee executes \sharecon{}, \highlightblue{which is a standard binary consensus algorithm~\cite{lynch1996distributed}}. Then, the committee nodes broadcast the consensus result to all nodes. A node only terminates at the end of this iteration and commits to its new identity when the majority of the consensus results it received confirms a collective decision of not executing the next iteration. Otherwise, the node proceeds to the next iteration.

\section{Analysis for the Algorithm with Shared Randomness}\label{sec:analysis-shared-rnd}

In this section, we formally prove the correctness and the complexity of the renaming algorithm with shared random bits.
Throughout the analysis, we assume there are less than $(1/3-\delta)n$ Byzantine nodes where $\delta>0$ is an arbitrarily small constant, $c>4$ is a constant of our choice, $\epsilon$ is a sufficiently small constant satisfying $\epsilon < \min \{ 1/100, \delta/3,({1-3\delta})/9\}$, $C$ is a sufficiently large constant satisfying $ C > (100c)/\epsilon ^ 4$.
For the ease of presentation, we often let $G$ denote the set of correct nodes with $elected=1$ (i.e., correct committee members), and let $G'$ denote the set of all correct nodes.

We begin the analysis by proving that all nodes obtain identical committee lists (i.e., all nodes have identical $S'$), which contains all correct committee members. Moreover, more than $2/3$ fraction of the list correspond to correct nodes.

\begin{lemma}\label{lem:with-s}
For the algorithm in \Cref{fig:alg-srb-main}, the following properties hold with probability at least $1-n^{-8c}$:
\begin{itemize}[topsep=1ex,itemsep=0ex]
	\item for any $v, u \in G'$, it holds that $S_v' = S_u'$;
	\item $G \subseteq \bigcap_{v \in G'} S_v'$;
	\item $|\bigcup_{v \in G'}S_v'|<com\_all$,  $|G| > com\_g > \frac{2}{3}com\_all$, $com\_b = com\_all-com\_g$.
\end{itemize}
\end{lemma}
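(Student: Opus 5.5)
The proof combines a Chernoff bound on the pool, the agreement property of \vectorcon{}, and a counting argument on the $LIST$ threshold. First I would bound the committee composition.

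The pool is drawn with shared randomness independently of the static adversary's choice of faulty nodes, and each of the $n$ participating identities lands in it independently with probability $(C\log n)/n$.
\begin{itemize}
\item The number of elected correct nodes $|G|$ has mean at least $(2/3+\delta)C\log n$, since there are more than $(2/3+\delta)n$ correct nodes.
\item The number of elected nodes overall has mean $C\log n$.
\end{itemize}
By multiplicative Chernoff bounds with deviation $\epsilon$, and using $C>100c/\epsilon^4$, each failure probability is at most $e^{-\Theta(\epsilon^2 C\log n)}\le n^{-9c}$. Hence $|G|>(1-\epsilon)(2/3+\delta)C\log n=com\_g$ and the total number of elected nodes is below $(1+\epsilon)C\log n=com\_all$. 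From this, $com\_b$ bounds the number of faulty pool members. The inequality $com\_g>\frac23 com\_all$ reduces to $(1-\epsilon)(2/3+\delta)>\frac23(1+\epsilon)$, which holds because $\epsilon<\delta/3$ and $\epsilon$ is small.

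Next I would use that every correct elected node broadcasts $ELECT$ and that pool membership is verifiable by all. Then for every $v\in G'$ we have $G\subseteq S_v$, and $S_v\subseteq Pool\cap(\text{participating identities})$. In particular $|S_v|<com\_all$ and $S_v$ contains fewer than $com\_b$ faulty nodes. I would then invoke the guarantee of \vectorcon{} from Appendix~\ref{app-sec:vector-con} under exactly these parameters. Its members are all of $G$ together with at most $com\_b$ faulty nodes, and the correct members form more than a $2/3$ fraction. The guarantee gives a common vector $\hat S$ at all correct elected nodes, with validity: every identity in $G$ (held by all correct inputs) lies in $\hat S$. If the appendix's validity is phrased differently, I would use its threshold form, namely that any entry that is $1$ at all correct inputs is output as $1$.

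Finally I would analyze the $LIST$ threshold step at an arbitrary $v\in G'$.
\begin{itemize}
\item \emph{Inclusion.} For $\id(u)\in\hat S$, all $|G|>com\_g\ge com\_b$ correct committee members send $\langle LIST,\id(u)\rangle$, and they all lie in $S_v$, so $v$ accepts $\id(u)$. Here $com\_g\ge com\_b$ because $com\_g>\frac23 com\_all$.
\item \emph{Exclusion.} For $\id(u)\notin\hat S$, only faulty members of $S_v$ can endorse it. There are fewer than $com\_b$ of them, and authentication prevents any forgery, so $v$ rejects $\id(u)$.
\end{itemize}
Thus $S'_v=\hat S$ for every $v\in G'$. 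This gives the first bullet of the lemma, the second bullet via $G\subseteq\hat S$, and the third bullet since $\hat S\subseteq Pool$ restricted to real participants. If needed I would argue that \vectorcon{} outputs only identities held by some correct input, or else intersect with $S_v$; either way the union has size below $com\_all$.

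A union bound over the few bad events gives probability at least $1-n^{-8c}$. The main obstacle I expect is matching the hypotheses and guarantees of \vectorcon{} precisely: in particular, ensuring its output contains no spurious identities beyond the pool, so that $|\hat S|<com\_all$. If this is not directly guaranteed, I would filter $\hat S$ by pool membership, which is locally checkable since the pool is shared. Whether such a filter still gives the bound depends on $\hat S$ containing only participating identities; an identity in $Pool$ belonging to no node would inflate $|\hat S|$ but never yields a real node. The statement concerns identities, so I would rely on the vector consensus validity that entries output as $1$ are supported by some correct input.
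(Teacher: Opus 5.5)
Your proposal is correct and follows essentially the same route as the paper, which uses the same three steps: Chernoff bounds on the shared pool, then the agreement and validity of \vectorcon{} (\Cref{lem:consensus-vector}), then the $com\_b$ threshold on $LIST$ messages, which forces $S'_v=\hat S$ for every correct $v$. You also derive $G\subseteq\hat S$ explicitly from the binary validity of \vectorcon{} (an entry that is $1$ at every correct input is output as $1$), a step the paper leaves implicit when it calls the lemma ``immediate'' from its claims.
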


\begin{proof}
Notice that shared randomness are generated after the adversary chooses Byzantine nodes, hence the probability that a Byzantine node is added into the candidate pool equals to the the probability that a correct node is added into the candidate pool. We prove the lemma via three claims.
In \Cref{clm:with-s-1}, we show that the lists $S$ of all nodes have desirable properties, but they are not necessarily identical.
Since the union of all nodes' $S$ has limited size, in \Cref{clm:with-s-2}, we show that correct committee nodes output identical $\hat{S}$.
Lastly, in \Cref{clm:with-s-3}, we show that after the committee distributes $\hat{S}$, all nodes obtain identical $S'$.

\begin{claiminline} \label{clm:with-s-1}
With probability at least $1-n^{-10c}$, it holds that $G \subseteq \bigcap_{v \in G'} S_v$, $|\bigcup_{v \in G'}S_v|<com\_all$,  $|G| > com\_g > \frac{2}{3}com\_all$, and $com\_b = com\_all-com\_g$.
\end{claiminline}

\begin{proof}
$G \subseteq \bigcap_{v \in G'} S_v$ is true since $G \subseteq Pool$ and nodes in $G$ send $ELECT$ messages.
Since each identity in $[N]$ is added into $Pool$ with probability $(C\log n)/ n$ and there are $n$ identities in total, by a Chernoff bound~\cite{motwani95}, with probability at least $1-n^{-11c}$, it holds that $|\bigcup_{v \in G'}S_v| < com\_all$.
Since there are at least $(2/3+\delta)n$ correct nodes, again by a Chernoff bound, with probability at least $1-n^{-11c}$, we have $|G| > com\_g$.
Moreover, $com\_g > \frac{2}{3}com\_all$ is true by the definition of $\epsilon$, and $com\_b = com\_all-com\_g$ is true by the definition of $com\_b$.
\end{proof}

\begin{claiminline} \label{clm:with-s-2}
With probability at least $1-n^{-10c}$, the output of \vectorcon{} satisfies:
\begin{itemize}[topsep=1ex,itemsep=0ex]
	\item agreement: for any $v, u \in G$, it holds that $\hat S_v = \hat S_u$;
	\item validity: for any $v \in G$, if $\id(u) \in \hat S_v$, then \highlight{$\id(u) \in S_w$ for some $w \in G$}.
\end{itemize}
\end{claiminline}
\begin{proof}
With probability at least $1-n^{-10c}$, \Cref{clm:with-s-1} holds.
Assume this is indeed the case. Then, by \Cref{lem:consensus-vector} in Appendix~\ref{app-sec:vector-con}, agreement and validity properties hold.
\end{proof}

\begin{claiminline} \label{clm:with-s-3}
With probability at least $1-n^{-10c}$, if there exists an $S^*$ such that $\hat S_v = S^*$ for every $v\in G$, then $S_u' = S^*$ for every $u \in G'$.
\end{claiminline}

\begin{proof}
With probability at least $1-n^{-10c}$, properties in \Cref{clm:with-s-1} hold.
Assume there exists an $S^*$ such that $\hat S_v = S^*$ for every $v\in G$.
Fix an $\id(w)$ and a $u \in G'$.
If $\id(w) \in S^*$, there are more than $com\_g > com\_b$ nodes in $S_u$ sending $\langle LIST, \id(w) \rangle$, which means $\id(w) \in S_u'$.
Otherwise, if $\id(w) \notin S^*$, there are less than $com\_all - com\_g = com\_b$ nodes in $S_u$ sending $\langle LIST, \id(w) \rangle$, which means $\id(w) \notin S_u'$.
Consider all choices of $\id(w)$ and $u$, the claim is proved.
\end{proof}

Combine above three claims, the lemma is immediate.
\end{proof}

\Cref{lem:with-naming} assumes the existence of a global leader, which can be easily achieved by letting the node with the $j$-th smallest identity in $S'$ be the leader of the $j$-the phase, as all correct nodes have identical $S'$.
If the leader is correct, then all nodes quit \shareid{} in this phase;
otherwise, if the global leader is Byzantine and creates invalid renaming, then all nodes continue into the next phase.

\begin{lemma}\label{lem:with-naming}
For the \shareid{} algorithm in \Cref{fig:alg-srb-shareid}, during any phase, the following properties hold with probability at least $1-n^{-7c}$:
\begin{itemize}[topsep=1ex,itemsep=0ex]
	\item there exists $lead^*$ such that for any $v \in G'$, $lead_v = lead^*$;
	\item for any $v, u \in G'$, $redo_v = redo_u$;
	\item if $lead^* \in G$, then for any $v \in G'$, it holds that $redo_v=0$;
	\item if there exists $v \in G'$ with $redo_v=0$, then $\nid(\cdot)$ forms a strong and order preserving renaming.
\end{itemize}
\end{lemma}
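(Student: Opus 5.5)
We condition throughout on the high-probability event of \Cref{lem:with-s}: all correct nodes hold an identical list $S'$, $G\subseteq S'$, $|S'|<com\_all$, and $S'$ contains more than $com\_g>\frac{2}{3}com\_all$ correct nodes and hence fewer than $com\_b$ Byzantine ones. We also condition on the event that every \bounce{} invocation in the phase delivers each fragment sent by a correct sender to every correct destination. Each fragment is sent to $C\log n$ random nodes, and with $prob=1$ at least one of them is correct with probability $1-(1/3)^{C\log n}$. A correct relay forwards at most $(1+\epsilon)\log n$ fragments per source. A Chernoff bound shows that, whp, a correct relay receives fewer than this many fragments from any correct source of a given list, so no fragment is dropped. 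A union bound over at most $\mathrm{poly}(n)$ fragments, plus the failure probability of \sharecon{}, gives the claimed $1-n^{-7c}$.

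The first two bullets come directly from this setup. Since $S'$ is identical at all correct nodes, the $j$-th smallest identity is the same everywhere, which gives $lead^*$. For agreement on $redo$, the standard binary consensus \sharecon{} run among $S'$ (with honest majority above $2/3$) gives every $v\in G$ the same $com\_redo'_v$. Every correct node then receives $RET$ messages from all of $G$, which is more than half of $S'$. Byzantine members contribute fewer than $com\_b$ messages, and authentication prevents forging. So the majority is the common value, and all $redo_v$ agree.

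For the third bullet, take $lead^*\in G$. Every correct node sends $\langle ID,\cdot\rangle$ to every member of $S'$. By the \bounce{} guarantee, $List'$ at the leader contains $\bigcup_{v\in G}List_v$, so no correct member finds a missing entry. The leader's ranks are ranks in a single list, so they are distinct and order preserving. We need $|List'|\le n$ for the range check. Each entry is a signed $ID$ message, and a Byzantine node can contribute only its own identity, so at most $n$ real identities appear and ranks lie in $[n]$. $ECHO1$ messages carry the leader's signed proposal. So any $ECHO2$ accepted by a correct member that contains $\langle NewID, lead^*,\id(u),rank_u\rangle$ must report the leader's true value, even when $u$ is Byzantine. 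All three checks therefore pass at every $v\in G$. By consensus validity (all correct inputs are $0$), the output is $0$, and the $RET$ majority gives $redo_v=0$.

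The last bullet is the crux. If some correct $v$ has $redo_v=0$, then by agreement the consensus output is $0$. By validity, some $w\in G$ had $com\_redo_w=0$. This is the main obstacle: a single correct member's local checks must certify a global property. Every correct node $u$ is in $List_w$. Since $com\_redo_w=0$, $Map_w[\id(u)]$ was defined and sent to $u$ via $ECHO1$, so $u$ adopts some rank and reports it back in $ECHO2$, which $w$ receives. Any rank $u$ adopts is a leader-signed $NewID$ for $\id(u)$. The step I would check carefully is that this signed value must equal the one $w$ holds; otherwise an equivocating leader could feed different values to different committee members. I would handle this by having $w$ record $Name_w$ from the signed proposal embedded in $u$'s $ECHO2$, which is exactly what $u$ committed to. Then $Name_w$ restricted to the correct nodes equals their final $\nid$. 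Uniqueness, range, and order checks at $w$ over this set give distinctness, values in $[n]$, and order preservation among correct nodes. Strong renaming only requires distinctness within $[n]$, so these properties suffice.
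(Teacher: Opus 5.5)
Your proof follows the paper's essentially step for step. You condition on \Cref{lem:with-s}, get leader agreement from the identical $S'$, and get $redo$ agreement from \sharecon{} plus the $RET$ majority. You use \bounce{} delivery for the correct-leader case. For the last bullet you use each correct committee member's $ECHO2$ view of every correct node's adopted rank, which is just the contrapositive of the paper's ``every member of $G$ detects the violation, so consensus outputs $1$''. Your handling of the range check via signed $ID$ messages, and of Byzantine $ECHO2$s under a correct leader, is more careful than the paper's one-liner.

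One numeric slip needs fixing. A correct relay expects about $C\log n\cdot|ListMsg_x|/n=\Theta(C\log n)$ fragments from each correct source. So the per-source forwarding cap your Chernoff bound must stay under is $(1+\epsilon)C\log n$, as the paper's proof uses. The literal $(1+\epsilon)\log n$ in the \bounce{} pseudocode is too small: with that cap, fragments would be dropped with high probability.
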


\begin{proof}
We first sketch the proof.
Based on identical committee lists, a global leader can be easily obtained for each phase.
Our proof relies on four key claims.
\Cref{clm:with-naming-1} shows that binary consensus can be reached within the committee, and the result can be distributed to all nodes.
\Cref{clm:with-naming-2} shows that all correct committee members will send their local identity lists to the leader, allowing the leader to create a more comprehensive list containing all correct committee members' local identity lists, or the leader must be a Byzantine node.
\Cref{clm:with-naming-3} proves that if the leader is correct, the committee will decide to quit \shareid{} and all nodes will make the same decision.
\Cref{clm:with-naming-4} shows that if the leader is Byzantine and creates an invalid renaming, the committee will decide to continue into the next phase and all nodes will make the same decision.

We now start the lemma proof. Note that with probability at least $1-n^{-8c}$, \Cref{lem:with-s} holds.
Hence, throughout this proof, we assume that: (1) for any $v, u \in G'$, it holds that $S_v' = S_u'$; and (2) for any $v, u \in G'$, it holds that $lead_v = lead_u$.

\begin{claiminline}\label{clm:with-naming-1}
The output of \sharecon{} satisfies:
\begin{itemize}[topsep=1ex,itemsep=0ex]
	\item agreement: for any $v, u \in G$, it holds that $com\_redo'_v = com\_redo'_u$;
	\item validity: for any $v \in G$, $com\_redo'_v = com\_redo_u$ for some $u \in G$.
\end{itemize}
\end{claiminline}

\begin{proof}
Standard binary consensus algorithm will yield above results, such as \sharecon{} in \cite{lynch1996distributed}.
\end{proof}

By \Cref{clm:with-naming-1}, there exists $com\_redo^*$ such that $com\_redo'_v=com\_redo^*$ for any $v \in G'$. Then by \Cref{lem:with-s},  since $|G| > com\_g > \frac{2}{3}com\_all$, we have $redo_v=com\_redo^*$ for any $v \in G'$.
That is, all nodes always agree on whether to continue into the next phase based on the committee's consensus result.

\begin{claiminline}\label{clm:with-naming-2}
The following properties hold with probability at least $1-n^{-10c}$:
\begin{itemize}[topsep=1ex,itemsep=0ex]
	\item $G' \subseteq \bigcap_{v \in G} List_v$;
	\item if $lead^* \in G$, then $\bigcup_{v \in G} List_v \subseteq List_{lead^*}'$.
\end{itemize}
\end{claiminline}

\begin{proof}
Fix any $u \in G'$.
By \Cref{lem:with-s}, $G \subseteq S_u'$, hence $u$ sends $\langle ID, \id(u) \rangle$ messages to all nodes in $G$.
As a result, for any node $v \in G$, we have $\id(u) \in List_v$.
This implies $G' \subseteq \bigcap_{v \in G} List_v$.

Fix any node $x \in G$.
By the above proof, $G' \subseteq List_x$.
Node $x$ sends each identity in $List_x$ to $C\log n$ random nodes during the execution of \bounce{} in Line~\ref{line:shared-rename-bounce-1} of \Cref{fig:alg-srb-shareid}. By a Chernoff bound, with probability at least $1-n^{-11c}$, node $x$ sends less than $(1+\epsilon)C\log n$ messages to any node during the execution of \bounce{} in Line~\ref{line:shared-rename-bounce-1} of \Cref{fig:alg-srb-shareid}.
Thus the messages sent by $x$ to correct nodes during the execution of \bounce{} in Line~\ref{line:shared-rename-bounce-1} of \Cref{fig:alg-srb-shareid} are forwarded to $lead^*$.
Fix any $\id(y) \in List_x$.
Since there are more than $2n/3$ correct nodes,
by a Chernoff bound, with probability at least $1-n^{-11c}$, there exists a correct node that forwards $\id(y)$ to $lead^*$ during the execution of \bounce{} in Line~\ref{line:shared-rename-bounce-1} of \Cref{fig:alg-srb-shareid}.
Take a union bound over at most $n^2$ choices of $x$ and $\id(y)$, we conclude that with probability at least $1-n^{-10c}$,
every identity in $\bigcup_{v \in G} List_v$ is forwarded to $lead^*$ during the execution of \bounce{} in Line~\ref{line:shared-rename-bounce-1} of \Cref{fig:alg-srb-shareid}.
Therefore, if $lead^* \in G$, then $\bigcup_{v \in G} List_v \subseteq List_{lead}'$.
\end{proof}

\begin{claiminline}\label{clm:with-naming-3}
With probability at least $1-n^{-9c}$, if $lead^* \in G$, then for any $v \in G'$, it holds that $redo_v=0$.
\end{claiminline}

\begin{proof}
By a Chernoff bound, with probability at least $1-n^{-10c}$, leader node $lead^*$ sends less than $(1+\epsilon)C\log n$ $NewID$ messages to any node, so all correct nodes forward $NewID$ messages during the execution of \bounce{} in Line~\ref{line:shared-rename-bounce-2} of \Cref{fig:alg-srb-shareid}.
Fix any $\id(u) \in List_{lead^*}'$.
Leader node $lead^*$ sends $\id(u)$ in $NewID$ messages to $C \log n$ random nodes.
By \Cref{lem:with-s}, all correct nodes have $G$ in their $S'$.
So all nodes in $G$ receive $\id(u)$ during the execution of \bounce{} in Line~\ref{line:shared-rename-bounce-2} of \Cref{fig:alg-srb-shareid}.
Consider all choices of $\id(u)$, we conclude that all nodes in $G$ receive $List_{lead^*}'$ during the execution of \bounce{} in Line~\ref{line:shared-rename-bounce-2} of \Cref{fig:alg-srb-shareid}.
By \Cref{clm:with-naming-2}, with probability at least $1-n^{-10c}$,
$G' \subseteq \bigcap_{v \in G} List_v$ and
$\bigcup_{v \in G} List_v \subseteq List_{lead^*}'$.
Therefore, after $ECHO1$ messages are delivered, all correct nodes have $Msg \neq \bot$ and all nodes in $G$ have $com\_redo=0$.
Since $lead^* \in G$, the renaming is strong and order preserving, implying all nodes in $G$ have $com\_redo=0$ after $ECHO2$ messages are delivered.
By \Cref{clm:with-naming-1}, all nodes in $G$ have $com\_redo'=0$.
Since $|G| > com\_g > \frac{2}{3}com\_all$, we conclude that $redo_v=0$ for any $v \in G'$.
\end{proof}

By \Cref{clm:with-naming-3}, we know with probability at least $1-n^{-9c}$, if the current leader is correct, then all nodes will stop execution after current phase.

\begin{claiminline} \label{clm:with-naming-4}
The following properties hold with probability at least $1-n^{-9c}$:
\begin{itemize}[topsep=1ex,itemsep=0ex]
	\item if there exists $u \in G'$ with $Msg_u = \bot$, then $redo_v=1$ for every $v \in G'$;
	\item if $\nid(\cdot)$ does not form a strong and order preserving renaming among all correct nodes, then $redo_v=1$ for every $v \in G'$.
\end{itemize}
\end{claiminline}

\begin{proof}
By \Cref{lem:with-s} and \Cref{clm:with-naming-2}, if there exists a correct node $u$ with $Msg_u = \bot$, then $\id(u)$ exists in $List$ of all nodes in $G$, and all nodes in $G$ have $Map[\id(u)]=\bot$ and $com\_redo=1$.
By \Cref{clm:with-naming-1}, 
all nodes in $G$ have  $com\_redo'=1$.
Apply \Cref{lem:with-s} again, since $|G| > com\_g > \frac{2}{3}com\_all$, for every $v \in G'$, we have $redo_v=1$.

Now consider correct nodes with $Msg \neq \bot$.
By \Cref{lem:with-s}, $G \subseteq \bigcap_{v \in G'} S_v'$, so all correct nodes send new identities to all nodes in $G$.
Note that if $\nid(\cdot)$ does not form a strong and order preserving renaming among all correct nodes, there are three possible reasons.
(1) $\nid(\cdot)$ violates identity uniqueness, so there exists $u_1 \neq u_2$ with $\nid(u_1)=\nid(u_2)$, in which case all nodes in $G$ have $Name[u_1]=Name[u_2]$.
(2) $\nid(\cdot)$ violates identity validity, so there exists $u_3$ with $\nid(u_3)\notin[n]$, in which case all nodes in $G$ have $Name[u_3]>n$.
(3) $\nid(\cdot)$ violates order preserving,
so there exists $\id(u_4) < \id(u_5)$ with $\nid(u_4) > \nid(u_5)$, in which case all nodes in $G$ have $Name[u_4] > Name[u_5]$.
In any of these three cases, all nodes in $G$ have $com\_redo=1$.
By \Cref{clm:with-naming-1}, all nodes in $G$ have $com\_redo'=1$.
Then, by \Cref{lem:with-s}, since $|G| > com\_g > \frac{2}{3}com\_all$, all correct nodes have $redo=1$.
\end{proof}

By \Cref{clm:with-naming-4}, we know with probability at least $1-n^{-9c}$, if there exists $v \in G'$ with $redo_v=0$, then $\nid(\cdot)$ forms a strong and order preserving renaming among all correct nodes.

Combine above claims immediately lead to the lemma.
\end{proof}

The following lemma shows the complexity of our first renaming algorithm.
We note that the round complexity bottleneck comes from executing \vectorcon{}; while the message complexity bottleneck comes from executing \bounce{} multiple times.

\begin{lemma}\label{lem:with-complex}
For the algorithm in \Cref{fig:alg-srb-main}, with probability at least $1-n^{-6c}$, it has time complexity $O(\log^6 n)$ and message complexity $O(n\log^3 n)$.
\end{lemma}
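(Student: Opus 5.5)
We bound time and messages phase by phase, conditioning throughout on the good events of \Cref{lem:with-s} and \Cref{lem:with-naming}, which hold with probability at least $1-n^{-7c}$ per phase. The one extra fact we need is a bound on the number of phases of \shareid.

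\textbf{Committee election.} By \Cref{clm:with-s-1}, the committee size is $O(\log n)$. We split the cost of \Cref{fig:alg-srb-main} into four steps.
\begin{itemize}
\item Building $Pool$ is local and costs nothing.
\item The $ELECT$ broadcasts take one round and $O(n\log n)$ messages from correct nodes.
\item \vectorcon{} runs among $O(\log n)$ committee members. By the complexity statement in Appendix~\ref{app-sec:vector-con}, which we expect to be the $O(\log^6 n)$ round bottleneck, it costs polylog rounds. Its messages are $\mathrm{poly}(\log n)$ per pair times $O(\log^2 n)$ pairs, plus the cost of sparse vectors of up to $O(\log n)$ ones. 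We must check that this total is $O(n\log^3 n)$.
\item Each correct committee member sends $\langle LIST,\cdot\rangle$ for each of the $O(\log n)$ entries of $\hat S$ to all $n$ nodes. This is done in $O(\log n)$ sequential rounds because of message size, and costs $O(n\log^2 n)$ messages.
\end{itemize}

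\textbf{Number of phases.} Each phase's leader is the $j$-th member of the common list $S'$. Whether a member is correct is fixed before the shared randomness is drawn, so the ordering of $S'$ by identity is independent of corruption. By \Cref{lem:with-s}, more than $2/3$ of $S'$ is correct, and by \Cref{lem:with-naming} the first correct leader ends the execution. We first try the simplest argument: the number of phases is at most $|S'| = O(\log n)$ deterministically. A sharper argument would note that the first correct index is $O(1)$ with high probability, but it is not needed if the per-phase cost is $O(n\log^2 n)$. So our target is $O(n\log^2 n)$ messages and $O(\log^5 n)$ rounds per phase.

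\textbf{Per-phase cost.} We bound each step of a phase separately.
\begin{itemize}
\item The $ID$ broadcasts to $S'$ cost $O(n\log n)$ messages.
\item In the first \bounce{}, each correct committee member sends $|List_v| \le n$ items, each to $C\log n$ nodes, for $O(n\log^2 n)$ messages over $O(\log n)$ members. Each correct forwarder forwards at most $(1+\epsilon)\log n$ items per source to the single destination.
\item Time is the main obstacle. A node sends at most one message per recipient per round, so a member sending $n C\log n$ items to random nodes needs only $O(\log n)$ rounds with high probability, by a Chernoff and balls-in-bins bound on per-recipient load. The leader receives up to $O(n\log^2 n)$ $ECHO$s, so it needs $O(\log^2 n)$ rounds, since each of the $n$ senders sends it $O(\log^2 n)$ messages.
\item The second \bounce{} is symmetric: one source sends $n$ items, and each correct node forwards $O(\log n)$ items to $O(\log n)$ destinations, for $O(n\log^2 n)$ messages.
\item $ECHO1$ and $ECHO2$ cost $O(n\log n)$ messages.
\item \sharecon{} on $O(\log n)$ members costs $O(\log^3 n)$ messages and $O(\log n)$ rounds.
\item The $RET$ broadcast costs $O(n\log n)$ messages.
\end{itemize}
For the count, only messages from correct nodes are counted, and the $Cnt$ thresholds cap how much a correct node forwards for Byzantine requests.

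Summing $O(\log n)$ phases of $O(n\log^2 n)$ messages each gives $O(n\log^3 n)$ messages, and the polylog time is dominated by \vectorcon. A union bound over $O(\log n)$ phases and all Chernoff events, each with failure probability at most $n^{-7c}$, gives overall failure probability at most $n^{-6c}$.
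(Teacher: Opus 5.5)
Your proposal is correct and follows essentially the same decomposition as the paper's proof. You bound the committee size by $O(\log n)$, then cost the $ELECT$ broadcast, \vectorcon, the $LIST$ broadcast, and each of the at most $|S'|=O(\log n)$ phases of \shareid{} at $O(n\log^2 n)$ messages and polylogarithmically many rounds. Your open ``must check'' for \vectorcon{} closes immediately: plug $\hat c=com\_all=O(\log n)$ and $d<com\_all$ into the appendix bounds of $O(\hat c^5 d)$ rounds and $O(\hat c^7 d)$ messages. This gives exactly the claimed $O(\log^6 n)$ rounds and $O(\log^8 n)\subseteq O(n\log^3 n)$ messages.

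Your unused aside that the first correct leader has index $O(1)$ with high probability is false. A static adversary can corrupt the nodes with the smallest identities, so every Byzantine pool member precedes every correct one in $S'$, and the first correct leader can sit at index $\Theta(\log n)$ when $f=\Theta(n)$. It is therefore right that you rely only on the deterministic bound $|S'|=O(\log n)$.
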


\begin{proof}
By \Cref{clm:with-s-1} of \Cref{lem:with-s}, with probability at least $1-n^{-8c}$, $|G| \le |\bigcup_{v \in G'}S_v|<com\_all$, where $com\_all=O(\log n)$.
Hence, sending $ELECT$ messages incurs time complexity $O(1)$ and message complexity $O(n\log n)$.
\vectorcon{} incurs time complexity $O(\log^6 n)$ and message complexity $O(\log^8 n)$.
Sending $LIST$ messages incurs time complexity $O(1)$ and message complexity $O(n\log n)$.

Now consider \shareid{} in \Cref{fig:alg-srb-shareid}.
Consider an arbitrary phase among the $O(\log n)$ phases.
Sending $ID$ messages incurs time complexity $O(1)$ and message complexity $O(n\log n)$.
In the first call of \bounce{},
by a Chernoff bound, with probability at least $1-n^{-8c}$, any $v \in G$ sends less than $com\_all$ messages to any node, thus this invocation incurs time complexity $O(\log n)$ and message complexity $O(n\log^2 n)$.
Similarly, the second call of \bounce{} has time complexity $O(\log n)$ and message complexity $O(n\log^2 n)$.
Sending $ECHO1$ and $ECHO2$ messages incur time complexity $O(1)$ and message complexity $O(n\log n)$.
By the analysis in \cite{lynch1996distributed}, the execution of \sharecon{} incurs time complexity $O(\log n)$ and message complexity $O(\log^2 n)$.
Lastly, sending $RET$ messages incurs time complexity $O(1)$ and message complexity $O(n\log n)$.

To sum up, the total time complexity is $O(\log^6n)$, and the total message complexity is $O(n\log^3 n)$.
\end{proof}

We conclude this section with a proof of \Cref{thm:with-renaming}.

\begin{proof}[Proof of \Cref{thm:with-renaming}]
By \Cref{lem:with-s}, with probability at least $1-n^{-8c}$, there are more than $\frac{2}{3}com\_all$ correct nodes with $elected=1$, so the condition $lead^* \in G$ in \Cref{lem:with-naming} is satisfied at least once among all phases. Then by \Cref{lem:with-naming}, with probability at least $1-n^{-7c}$, all correct nodes halt with $\nid(\cdot)$ forming a strong and order preserving renaming among all correct nodes.
By \Cref{lem:with-complex}, with probability at least $1-n^{-6c}$, the entire algorithm has time complexity $\Tilde{O}(1)$ and message complexity $\Tilde{O}(n)$.
\end{proof}

\section{Renaming Algorithm without Shared Random Bits}\label{sec:alg-without-shared-rnd}

\subsection{Committee Election}\label{subsec:alg-com-without-shared-rnd}

\begin{figure}[t!]
\hrule
\vspace{1ex}
$\committee$ executed at node $v$.
\vspace{.5ex}
\hrule
\begin{footnotesize}
\begin{algorithmic}[1]
\State $p\gets(C\log n)/n$, $elected_v \gets 0$, $S_v \gets \emptyset$, $S_v' \gets \emptyset$.
\For{($j \gets 1$ \textbf{to} $\log n$)}
	\State $p \gets (C\log n)/n \cdot 2^{j-1}$.
	\If{(\highlightblue{$p \ge  1/C$})}
		$p\gets 1$. 
	\EndIf
	\State $ S_v \gets \emptyset$, $ElectMsg_v \gets$ an empty map.
	\State Set $elected_v \gets 1$ with probability $p$, otherwise $elected_v \gets 0$. 
	\If {($elected_v==1$)}
		Send $\langle ELECT, \id(v)\rangle$ to all nodes.
	\EndIf
	\For {(every $\langle ELECT, \id(u)\rangle$ message received)}
		\State $S_v \gets S_v \cup \{ \id(u) \}$.
		\State $ElectMsg_v[\id(u)] \gets \langle ELECT, \id(u)\rangle$.
	\EndFor
\If{($p==1$)}
	\highlight{$S_v' \gets S_v$, $redo\gets 0$, $com\_all \gets n+1$, $com\_g \gets (2/3+\epsilon)n$, $com\_b \gets com\_all - com\_g$}.
\Else
	\State {$com\_all \gets (1+2\epsilon) pn$, $com\_g \gets (1-\epsilon)(2/3 + \delta) pn$, $com\_b \gets com\_all - com\_g$}.
	\State $(S_v', redo_v)\gets $ Run $ \clst( elected_v, S_v, ElectMsg_v, com\_all, com\_g, com\_b)$ with all nodes.
\EndIf
	\If{($redo_v==0$)}
		\textbf{break}.
	\EndIf
\EndFor
\end{algorithmic}
\end{footnotesize}
\hrule
\vspace{1ex}
\caption{Pseudocode of \committee.}\label{fig:alg-nsrb-comelect}
\vspace{-3ex}
\end{figure}

Similar to the setting when shared randomness is available, to accomplish renaming, we begin with electing a committee of bounded size.
However, without shared randomness, this process becomes much more complicated.
Moreover, after this process, all nodes may have different views on the constitution of the committee.
Nevertheless, by guaranteeing that the intersection of all correct nodes' committee lists contains many correct nodes, and that the union of all correct nodes' committee lists contains limited Byzantine nodes, this committee is sufficient to help us accomplish remaining (and many other tasks) efficiently.

\Cref{fig:alg-nsrb-comelect} gives the pseudocode of the committee election process when shared randomness is not available. This algorithm operates in iterations, progressively increasing the probability of nodes joining the committee until a valid committee with desirable properties is obtained.

More specifically, the algorithm is initialized with the base probability $(C\log n)/n$, where $C$ is a sufficiently large constant of our choice. After each iteration, the probability $p$ of joining the committee doubles. When $p$ exceeds the threshold $1/C$, all nodes join the committee (set $p$ to $1$), ensuring termination within $\log n$ iterations. In each iteration, each node $v$ independently elects itself as a committee member with probability $p$ by setting $elected_v $ to $1$ and broadcasting an $ELECT$ message that contains its identity $\id(v)$.
Each node $v$ maintains a local view of the committee, denoted as $S_v$, by aggregating the $ELECT$ messages received.
Each node also computes parameters relevant to the committee's size.
Particularly, when $p < 1$, the size of the committee is bounded by $com\_all=(1+2\epsilon)pn$, with at least $com\_g = (1-\epsilon)(2/3 + \delta)pn$ correct nodes and less than $com\_b = com\_all - com\_g$ Byzantine nodes. Then, the $\clst$ procedure is executed at each node $v$ to transform the initial set of committee nodes $S_v$ into a new \highlightblue{filtered} set of committee nodes $S_v'$ and produces a status indicator $redo_v$ signaling the validity of $S_v'$.
The committee election process is successful if $redo_v$ equals 0. (Our algorithm ensures all nodes obtain identical $redo$ value.)
On the other hand, when $p=1$ (meaning all nodes join the committee), parameters are set such that $redo_v=0$, $com\_all = n+1$, \highlight{$com\_g=(2/3+\epsilon)n$}, and $com\_b=com\_all - com\_g$.
Next, we introduce the $\clst$ procedure in detail.

\begin{figure}[t!]
\hrule
\vspace{1ex}
$\clst( elected_v, S_v, ElectMsg_v, com\_all, com\_g, com\_b)$ executed at node $v$.
\vspace{.5ex}
\hrule
\begin{footnotesize}
\begin{algorithmic}[1]
\State $S_v' \gets $ Run $\sample(1, 1, S_v, (1-\epsilon)(2/3+\delta))$.
\State $Serve_v \gets \emptyset$.
\For{(every $\id(u) \in S_v$)}
	Add $\id(u)$ to set $Serve_v$ with probability $(C\log n) / n$.
\EndFor
\For{(every $\id(u) \in Serve_v$)}
	Send $\langle SERVE, ElectMsg_v[\id(u)]\rangle$ to all nodes.
\EndFor
\State $Long_v \gets S_v$.
\For{($i \in [N]$)}
	$Cnt_v[i] \gets 0$, $IdCnt_v[i] \gets 0$.
\EndFor
\For{(every $\langle SERVE, \langle ELECT, \id(u) \rangle \rangle$ message received from node $w$)}
	\If{($Cnt_v[\id(w)] < (1+\epsilon)C\log n$)}
		\State $Cnt_v[\id(w)] \gets Cnt_v[\id(w)]+1$, $IdCnt_v[\id(u)] \gets IdCnt_v[\id(u)] +1$.
		\If{($IdCnt_v[\id(u)] \ge (C\log n)/10$)}
			\State $Long_v \gets Long_v \cup \{ \id(u) \}$.
		\EndIf
	\EndIf
\EndFor
\If{($|Long_v| \ge com\_all$)}
	Send $\langle LONG\rangle$ to all nodes in $S_v$.
\EndIf
\State $com\_out_v \gets \bot$.
\If{($elected_v == 1$)}
	\State $ com\_ready_v \gets 0 $, $ com\_in_v \gets 0$.
	\If{(receive at least $\epsilon n$ $\langle LONG\rangle$ messages)}
		$com\_in_v \gets 1$.
	\EndIf
	\If{(receive at least $(1/3-\delta+\epsilon)n$ $\langle LONG\rangle$ messages)}
		$com\_ready_v \gets 1$.
	\EndIf
	\State $com\_out_v\gets $ Run $\comcon( com\_ready_v, com\_in_v, S_v', com\_all, com\_g,  com\_b)$ with nodes in $S_v'$.
\EndIf
\State $redo_v\gets $ Run $\redo(elected_v, S_v', com\_out_v, com\_all)$ with all nodes.
\State \textbf{return} $(S_v', redo_v)$.
\end{algorithmic}
\end{footnotesize}
\hrule
\vspace{1ex}
\caption{Pseudocode of $\clst$. }\label{fig:alg-clst}
\vspace{-3ex}
\end{figure}

\paragraph{The $\clst$ procedure.}
The $\clst$ procedure filters the initial set of committee members $S_v$ into a smaller (and more reliable) set of committee members $S_v'$ and produces an indicator $redo_v$ denoting the validity of $S_v'$; see \Cref{fig:alg-clst} for its pseudocode.

The $\clst$ procedure executed at node $v$ initiates with a call to the $\sample$ primitive (recall \Cref{fig:alg-sample}). Here, every node participates both as a querying node and as a responding node, using its local committee list $S_v$ as the set of data items to verify.
This allows every node $v$ to detect Byzantine candidates in $S_v$ which do not widely announce their candidacy, producing a refined committee list $S_v'$.

Following the executions of $\sample$, for any node $v$, for each node $u$ in $S_v$, node $v$ adds $u$ to the set $Serve_v$ with probability $(C\log n) / n$.
Node $v$ then broadcasts a set of $SERVE$ messages, each containing an original $ELECT$ message for the nodes in $Serve_v$. (Recall $ELECT$ message are collected during \committee, see \Cref{fig:alg-nsrb-comelect}.)
This enables each node $v$ to construct an extended committee list $Long_v$ by augmenting its local list $S_v$ with identities that appear in sufficiently many $SERVE$ messages.

At this point, all nodes need to reach consensus on the validity of the current committee.
To this end, we first achieve consensus within the current committee.
For each node $v$, it sends a $LONG$ message to the nodes in $S_v$ if the size of $Long_v$ exceeds $com\_all$.
For each committee member $v$, it sets $com\_in_v$ to $1$ if at least $\epsilon n$ $LONG$ messages are received, and it sets $com\_ready_v $ to $1$ if at least $(1/3-\delta+\epsilon)n$ $LONG$ messages are received.
Intuitively, $com\_in_v=1$ indicates committee member $v$ votes for ``my current committee seems invalid and next iteration \emph{might be} needed'', whereas $com\_ready_v=1$ indicates ``next iteration \emph{is} needed'' (our algorithm ensures any $com\_ready_v=1$ triggers next iteration).
Once $com\_in_v$ and $com\_ready_v$ are set, each committee member $v$ executes the $\comcon$ Byzantine consensus procedure, which will be introduced later, with the refined committee members in $S_v'$ to reach consensus on $com\_out_v$.
Then, each node $v$ executes the $\redo$ Byzantine consensus procedure, which again will be introduced later, with all nodes to reach consensus on $redo_v$.
The $\clst$ procedure stops once all nodes find $redo=0$.
At that point, each node $v$ uses $S_v'$ as its final committee list.

\begin{figure}[t!]
\hrule
\vspace{1ex}
$\comcon( com\_ready_v, com\_in_v, S_v', com\_all, com\_g,  com\_b)$ executed at node $v$.
\vspace{.5ex}
\hrule
\begin{footnotesize}
\begin{algorithmic}[1]
\State $com\_out_v \gets com\_in_v$.
\For{($i\gets 1$ \textbf{to} $\Theta(\log n)$)}
	\State Send $\langle INIT, com\_out_v \rangle$ to nodes in $S_v'$.
	\If{(at least $com\_g$ $\langle INIT, 1 \rangle$ messages are received from nodes in $S_v'$)}
		Send $\langle ECHO, 1 \rangle$ to nodes in $S_v'$.
	\ElsIf{(at least $com\_g$ $\langle INIT, 0 \rangle$ messages are received from nodes in $S_v'$)}
		Send $\langle ECHO, 0 \rangle$ to nodes in $S_v'$.
	\EndIf
	\If{($com\_ready_v==0$)}
	\State $strong_v \gets 0$.
		\If{(at least $com\_b$ $\langle ECHO, 1 \rangle$ messages are received from nodes in $S_v'$)}
			$com\_out_v \gets 1$.
		\ElsIf{(at least $com\_b$ $\langle ECHO, 0 \rangle$ messages are received from nodes in $S_v'$)}
			$com\_out_v \gets 0$.
		\EndIf
		\If{(at least $com\_g$ $\langle ECHO, 1 \rangle$ messages are received from nodes in $S_v'$)}
			$strong_v \gets 1$.
		\ElsIf{(at least $com\_g$ $\langle ECHO, 0 \rangle$ messages are received from nodes in $S_v'$)}
			$strong_v \gets 1$.
		\EndIf
	\EndIf
	\State $lead_v\gets $ Run {$\largelead(S_v', com\_all, com\_g, com\_b)$} with all nodes in $S_v'$.
	\If{($\id(v)==lead_v$)}
		Send $\langle LEAD, com\_out_v \rangle$ to all nodes in $S_v'$.
	\EndIf
	\If{($com\_ready==0$ \textbf{and} $strong_v==0$)} 
		\State $com\_out_v \gets$ value of $ com\_out_u$ in the $\langle LEAD, com\_out_u \rangle$ message received from the node with identity $lead_v$.
	\EndIf
\EndFor
\State \textbf{return} $com\_out_v$.
\end{algorithmic}
\end{footnotesize}
\hrule
\vspace{1ex}
\caption{Pseudocode of $\comcon$.}\label{fig:alg-comcon}
\vspace{-3ex}
\end{figure}

\begin{figure}[t!]
\hrule
\vspace{1ex}
$\redo(elected_v, S_v', com\_out_v, com\_all)$ executed at node $v$.
\vspace{.5ex}
\hrule
\begin{footnotesize}
\begin{algorithmic}[1]
\State $redo_v \gets 0$.
\If {($elected_v==1$)}
	Send $\langle COM, com\_out_v\rangle$ to all nodes.
\EndIf
\highlight{
\If{($|S_v'| < com\_all$)}
	$redo_v \gets$ the majority value of $com\_out$ among all received $COM$ messages from $S'_v$.
\EndIf
\State $redo_v' \gets$ Run $\sample{}(\mathbb{1}_{|S_v'| \geq com\_all},\mathbb{1}_{|S_v'| < com\_all}, \{redo_v\}, 1/2)$.
\If{($|S_v'| \geq com\_all$ \textbf{and} $0 \notin redo_v'$)}
	$redo_v \gets 1$.
\EndIf
}
\State \textbf{return} $redo_v$.
\end{algorithmic}
\end{footnotesize}
\hrule
\vspace{1ex}
\caption{Pseudocode of \redo.}\label{fig:alg-redo}
\vspace{-3ex}
\end{figure}

\paragraph{$\comcon$ and $\redo$.}
Now we introduce the two Byzantine consensus algorithms mentioned above: namely $\comcon$ and $\redo$.

The pseudocode of $\comcon$ is given in \Cref{fig:alg-comcon}.
This procedure operates through $\Theta(\log n)$ iterations. For each node $v$, the protocol initializes its output $com\_out_v$ with its input $com\_in_v$, then enters the core loop.
In each iteration, node $v$ first sends its current decision via $INIT$ messages to the committee members in $S_v'$. Then, it processes the incoming $INIT$ messages from the committee members in $S_v'$ as follows: upon receiving at least $com\_g$ $INIT$ messages with identical values (either 1 or 0), the node echos this value to the nodes in $S_v'$ via $ECHO$ messages.
At this point, for a node $v$ with input $com\_ready_v = 0$, it initializes $strong_v$ to $0$.
(If $v$ has $com\_ready_v = 1$, then it is confident that all nodes' eventual agreed value will be identical to $v$'s input, hence $v$ does not need to participate in subsequent process.)
Then, it sets $com\_out_v$ and $strong_v$ as follows. If at least $com\_b$ $ECHO$ messages that contain $1$ are received from $S_v'$, it sets $com\_out_v$ to $1$; otherwise, if at least $com\_b$ $ECHO$ messages that contain $0$ are received, it sets $com\_out_v$ to $0$.
Moreover, if at least $com\_g$ $ECHO$ messages that contain $1$ are received from $S_v'$, it sets $strong_v$ to $1$; otherwise, if at least $com\_g$ $ECHO$ messages that contain $0$ are received, it sets $strong_v$ to $0$.
Intuitively, once a node $v$ finds $strong_v=1$, then it is confident that all nodes have identical decision value.
For the nodes that are still uncertain with their decisions (i.e., $com\_ready_v = 0$ and $strong_v = 0$), extra leader coordination is employed.
Specifically, committee members execute $\largelead$ to elect a leader $u$, which broadcasts its decision $com\_out_u$ via $LEAD$ messages to all nodes in the committee.
{($\largelead$ employs the leader election algorithm in \cite{king06soda}, augmented with the techniques introduced in \cite{augustine20} to ensure compatibility with our specific model.)}
Only nodes with uncertain decisions ($com\_ready_v = 0$ and $strong_v = 0$) adopt the leader's proposal, preserving consensus integrity while preventing indecision.

The $\redo$ procedure disseminates the committee's decision to all nodes, see \Cref{fig:alg-redo} for its pseudocode.
Within the procedure, each node $v$ initializes its decision $redo_v$ to $0$. Each committee member $u$ broadcasts its consensus result $com\_out_u$ (obtained during \comcon) via $COM$ messages to all nodes.
Then, we handle nodes differently based on whether they possess oversized committee lists.
On the one hand, if the size of the verified committee $S_v'$ is less than $com\_all$, then node $v$ updates $redo_v$ to be the majority value contained within the $COM$ messages received from $S_v'$.
On the other hand, if node $v$ detects an oversized committee (i.e., $|S_v'| \geq com\_all$), then it triggers an execution of the $\sample$ primitive. In this invocation, nodes with oversized committees act as querying nodes, issuing queries with data item $0$, while nodes with proper-size committees act as responding nodes, using their local $redo$ as the data item. By setting proper threshold for $\sample$, each querying node can determine whether $0$ is held by a majority of nodes with proper-size committees and set its $redo$ accordingly. This ensures all nodes eventually obtain a consistent value of $redo$.

\subsection{Distribute New Identities}\label{subsec:alg-rename-without-shared-rnd}

\begin{figure}[t!]
\hrule
\vspace{1ex}
$\oldid(elected_v, S_v', com\_all, com\_g,  com\_b)$ executed at node $v$.
\vspace{.5ex}
\hrule
\begin{footnotesize}
\begin{algorithmic}[1]
\State $ redo_v \gets 1 $, $\nid(v) \gets \id(v)$.
\For{($i\gets 1$ \textbf{to} $\Theta(\log n)$)}
	\State  $lead_v\gets $ Run $\lead(elected_v, S_v', com\_all, com\_g, com\_b)$ with all nodes.
	\State  $(List_v, List_v') \gets$ Run $\echoone(elected_v, S_v', lead_v, com\_all)$ with all nodes.
	\State $(\nid(v), Msg_v, com\_redo_v)\gets$ Run $\echotwo(elected_v, S_v', lead_v, List_v, List_v')$ with all nodes.
	\State  $redo_v\gets $ Run $\echothree(elected_v, S_v', lead_v, List_v, List_v', com\_redo_v, Msg_v, com\_all, com\_g, com\_b)$ with all nodes.
	\If{($redo_v==0$)}
		\textbf{break}.
	\EndIf
\EndFor
\State \textbf{return} $\nid(v)$.
\end{algorithmic}
\end{footnotesize}
\hrule
\vspace{1ex}
\caption{Pseudocode of \oldid.}\label{fig:alg-oldid}
\vspace{-3ex}
\end{figure}

After a reliable committee $S_v'$ is elected (that is, after the execution of \committee), committee members try to distribute new identities to all nodes using the \oldid{} procedure, as specified in \Cref{fig:alg-oldid}. This algorithm transforms nodes' identity namespace from $[N]$ to $[n]$.
At a high-level, \oldid{} employs similar strategies of \shareid{}. However, without shared randomness, the elected committee is not necessarily of size $O(\log n)$ and different nodes may hold different views on the constitution of the committee. Therefore, \oldid{} relies on extra mechanisms to enforce correctness and efficiency, resulting in increased complexity.

Within \oldid, each node $v$ initializes $redo_v$ to $1$ and $\nid(v)$ to $\id(v)$ before entering a core iterative loop for $O(\log n)$ times. Each iteration progresses through four subroutines.
First, the $\lead$ procedure elects a new leader among the committee members.
Subsequently, the \echoone{} procedure allows committee members to collect nodes' current identities.
Specifically, if node $v$ is a committee member, the returned list $List_v$ contains $v$'s local view on the identities that are present. If node $v$ is the leader, \echoone{} further returns an augmented list $List_v'$, which encompasses a more comprehensive set of identities.
The core new identity assignment process occurs in the $\echotwo$ procedure, where the leader computes and distributes new rank-based identities for all nodes, after which committee members endorse and confirm these assignments via \highlight{$ECHO1$} messages.
After the execution of this procedure, each node $v$ holds its new identity (i.e., $\nid(v)$), as well as the message that announced this new identity sent by the leader (i.e., $Msg_v$).
For each committee member, \echotwo{} further returns an indicator signaling the validity of the current new identity assignment.
Finally, the $\echothree$ procedure performs comprehensive consistency verification across all identity assignments, and set $redo$ value accordingly.
After nodes reach consensus on their $redo$ values, they continue into the next iteration if and only if the consensus outcome is $1$.

Next, we describe these four subroutines in detail.

\begin{figure}[t!]
\hrule
\vspace{1ex}
$\lead(elected_v, S_v', com\_all, com\_g, com\_b)$ executed at node $v$.
\vspace{.5ex}
\hrule
\begin{footnotesize}
\begin{algorithmic}[1]
\If{($elected_v==1$)}
	\State $com\_lead_v\gets $ Run {$\largelead(S_v', com\_all, com\_g, com\_b)$} with all nodes in $S_v'$.
	\State $com\_lead_v' \gets com\_lead_v$.
	\State Send $\langle INIT, com\_lead_v' \rangle$ to all nodes in $S_v'$.
	\If{(at least $com\_g$ of $\langle INIT, com\_lead_w' \rangle$ messages are received from nodes in $S_v'$ for some $ com\_lead_w'$ )}
		\State Send $\langle ECHO, com\_lead_w' \rangle$ to all nodes in $S_v'$.
	\EndIf
	\If{(at least $com\_b$ messages of $\langle ECHO, com\_lead_w' \rangle$ are received from nodes in $S_v'$ for some $ com\_lead_w'$ )}
		\State $com\_lead_v' \gets com\_lead_w'$.
	\EndIf
	\If{(at least $com\_g$ messages of $\langle ECHO, com\_lead_w' \rangle$ are received from nodes in $S_v'$ for some $ com\_lead_w'$)}
		\State $strong_v \gets 1$.
	\EndIf
	\State $strong_v'\gets$ Run $\comcon(0, strong_v, S_v', com\_all, com\_g,  com\_b)$ with all nodes in $S_v'$.
	\If{($strong_v'==0$)}
		$com\_lead_v' \gets \bot$.
	\EndIf
	\State Send $\langle LEAD, com\_lead_v' \rangle$ to all nodes.
\EndIf
\State $lead_v \gets $ the majority value of $com\_lead'$ among $LEAD$ messages received from $S_v'$.
\State \textbf{return} $lead_v$.
\end{algorithmic}
\end{footnotesize}
\hrule
\vspace{1ex}
\caption{Pseudocode of \lead.}\label{fig:alg-lead}
\vspace{-3ex}
\end{figure}

\paragraph{The $\lead$ procedure.}
This procedure implements a leader election mechanism that combines ``consensus within committee'' with ``global notification'' to elect a leader that is known to all nodes, as formalized in \Cref{fig:alg-lead}.
The procedure begins by letting committee members execute a leader election process: for a committee member $v$, it first runs the $\largelead$ subroutine within the verified committee $S_v'$ to obtain the identity of a candidate leader $com\_lead_v$.
Node $v$ also stores the value of $com\_lead_v$ in another variable $com\_lead_v'$.
Then, committee member $v$ broadcasts an $INIT$ message containing the identity of $com\_lead_v'$ to the committee $S_v'$.
Upon receiving $com\_g$ $INIT$ messages with identical leader identity from nodes in $S_v'$, node $v$ sends an $ECHO$ message containing this leader identity to the committee $S_v'$.
Once $com\_b$ $ECHO$ messages are received from the committee $S_v'$ containing an identical leader identity, node $v$ updates $com\_lead_v'$ to that identity.
Moreover, once $com\_g$ $ECHO$ messages are received from the committee $S_v'$ containing an identical leader identity, node $v$ sets $strong_v$ to $1$.
At this point, each committee member $v$ runs the $\comcon$ procedure with the committee $S_v'$ to reach agreement on $strong_v$.
Let the consensus outcome be $strong_v'$.
In case $strong_v'=0$, implying consensus on leader cannot be achieved, node $v$ sets $com\_lead_v'$ to $\bot$ to prevent unreliable leadership. 
Committee members then disseminate their final leadership decisions through $LEAD$ messages to all nodes.
Every node $v$ sets $lead_v$ as the majority value among the $LEAD$ messages received from its committee list $S_v'$.
In a nutshell, \lead{} guarantees: the procedure either produces a global leader, or all nodes decide that no reliable leader can be established, formally indicated by setting $lead = \perp$.

\begin{figure}[t!]
\hrule
\vspace{1ex}
$\echoone(elected_v, S_v', lead_v, com\_all)$ executed at node $v$.
\vspace{.5ex}
\hrule
\begin{footnotesize}
\begin{algorithmic}[1]
\State $PreList_v \gets \emptyset$, $PreListMsg_v \gets \emptyset$, $List_v \gets \emptyset$, $List_v' \gets \emptyset$.
\State Send $\langle ID, \id(v) \rangle$ to all nodes in $S_v'$.
\If{($elected_v==1$)}
	\State $Node_v \gets$ an empty map.
	\For{(every $\langle ID, \id(u) \rangle$ message received)}
		\State $PreList_v \gets PreList_v \cup \{ \id(u) \}$, \highlight[cyan]{$PreListMsg_v \gets PreListMsg_v \cup \{ \langle ID, \id(u) \rangle \}$.}
		\State Reply $\langle ANS, \id(v), \id(u) \rangle$ to node $u$.
		\State $Node_v[\id(u)] \gets C\log n$ random identities in $S_v'$.
		\For{(every $\id(w) \in Node_v[\id(u)]$)}
			Send $\langle ASK, \id(v), \id(u), \id(w) \rangle$ to node $u$.
		\EndFor
	\EndFor
\EndIf
\State $Ans_v \gets$ an empty map, $AnsCnt_v \gets$ an empty map.
\For{(every $\langle ANS, \id(w), \id(v) \rangle$ message received from {some node $w$ with $\id(w)\in S_v'$})}
	\State $Ans_v[\id(w)] \gets \langle ANS, \id(w), \id(v) \rangle$.
\EndFor
\For{(every $\id(u) \in S_v'$)}
	$AnsCnt_v[\id(u)] \gets 0$.
\EndFor
\For{(every $\langle ASK, \id(w), \id(v), \id(u) \rangle$ message received from some node $w$ with $\id(w)\in S_v'$)}
	\If{(($|Ans_v[\id(u)]|>0$) \textbf{and} ($AnsCnt_v[\id(w)] \le C\log n$))}
		\State Reply $\langle VALID, \id(v), Ans_v[\id(u)] \rangle$ to node $w$.
		\State $AnsCnt_v[\id(w)] \gets AnsCnt_v[\id(w)]+1$.
	\EndIf
\EndFor
\If{($elceted_v==1$)}
	\For{(every $\id(u) \in PreList_v$)}
		\For{($\langle VALID, \id(u), \langle ANS, \id(w), \id(u) \rangle \rangle$ messages received from node $u$)}
			\If{(there are at least $(C\log n)/2$ messages with $ \id(w) \in  Node_v[\id(u)]$)}
				\State $ List_v \gets List_v \cup \{ \id(u) \}$.
			\EndIf
		\EndFor
	\EndFor
\EndIf
\State $PreListMsg_v' \gets $ Run $\bounce(elected_v, S_v', \{lead_v\}, PreListMsg_v, (C\log n)/com\_all)$.
\If{($\id(v)==lead_v$)}
	\State $List_v' \gets$ all identities in \highlight{$PreListMsg_v'$}.
\EndIf
\State \textbf{return} $(List_v, List_v')$.
\end{algorithmic}
\end{footnotesize}
\hrule
\vspace{1ex}
\caption{Pseudocode of \echoone.}\label{fig:alg-echoone}
\vspace{-3ex}
\end{figure}

\paragraph{The $\echoone$ procedure.}
This procedure aggregates the identities of all nodes to the committee and the leader for new identity assignment, as formalized in \Cref{fig:alg-echoone}.
$\echoone$ begins with all nodes broadcasting their original identities to the committee via $ID$ messages.
For each committee member $v$, it builds a preliminary identity list $PreList_v$ according to the received $ID$ messages, \highlightblue{it also stores all received $ID$ messages in $PreListMsg_v$}.
For each $ID$ message received from node $u$, committee member $v$ sends an acknowledgment back to $u$ via an $ANS$ message.
The \echoone{} procedure also implements a validation mechanism.
Specifically, for each $ID$ message received from node $u$, committee member $v$ selects $C\log n$ random nodes from the committee and send $ASK$ messages to $u$ to ask for the $ANS$ messages that $u$ received from these committee members.
On the other hand, for any node $u$, upon receiving an $ASK$ message from committee $v$ inquiring about committee $w$'s acknowledgment for $u$'s $ID$ message, $u$ performs two checks: (1) whether the number of responses already sent to $v$ remains below threshold $C \log n$, and (2) whether an $ANS$ message from $w$ has been previously received. If both conditions are satisfied, $u$ responds a $VALID$ message to $v$ containing the $ANS $ message received from $w$.

At this point, each committee member $v$ constructs its final identity list $List_v$ through the following process: for each identity $\id(u)$ in the preliminary list $PreList_v$, $\id(u)$ is retained if $v$ has received at least $(C\log n)/2$ acknowledgments for $\id(u)$ from the nodes that were queried via $ASK$ messages.
Then, each committee member $v$ sends $PreListMsg_v$ to the leader via the $\bounce$ primitive. In this invocation of \bounce, each node $u$ accepts any node in its committee list $S'_u$ as a message source with probability $(C\log n)/com\_all$, avoiding excessive message complexity. The leader aggregates all received $PreListMsg$ into \highlight{$PreListMsg'$} and then extracts the identities contained within \highlight{$PreListMsg'$} to construct a comprehensive list $List'$.
$\echoone$ ultimately returns a local identity list \highlight{$List$} for each committee member, and a comprehensive identity list $List'$ for the leader.

\begin{figure}[t!]
\hrule
\vspace{1ex}
$\echotwo(elected_v, S_v', lead_v, List_v, List_v')$ executed at node $v$.
\vspace{.5ex}
\hrule
\begin{footnotesize}
\begin{algorithmic}[1]
\If{($\id(v)==lead_v$)}
	\State $RankMsg_v \gets \emptyset$.
	\For{(every $\id(u) \in List_v'$)}
		\State $rank_u \gets$ the rank order of $\id(u)$ in $List_v'$.
		\State $RankMsg_v \gets RankMsg_v \cup \{ \langle NewID, \id(v), \id(u), rank_u \rangle \}$.
	\EndFor
\EndIf
\State $RankMsg_v' \gets $ Run $\bounce(\mathbb{1}_{\id(v)==lead_v}, \{lead_v\}, S_v', RankMsg_v, 1)$.
\If{($elected_v==1$)}
	\State $com\_redo_v \gets 0$, $Map_v \gets $ an empty map.
	\For{(every $\langle NewID, lead_v, \id(u), rank \rangle \in RankMsg_v'$)}
		\State \highlight[cyan]{$Map_v[\id(u)] \gets \langle NewID, lead_v, \id(u), rank \rangle$.}
	\EndFor
	\For{(every $\id(u) \in List_v$)}
		\If{($Map_v[\id(u)] == \perp$)}
			$com\_redo_v \gets 1$
		\algorithmicelse
			\xspace Send $\langle ECHO1, Map_v[\id(u)] \rangle$ to node $u$.
		\EndIf
	\EndFor
\EndIf
\State $Msg_v \gets \bot$.
\For{(every $\langle ECHO1, \langle NewID, lead_v, \id(v), rank\rangle \rangle$ message received from nodes in $S_v'$)}
	\State $Msg_v \gets \langle NewID, lead_v, \id(v), rank\rangle$, $\nid(v)\gets rank$.
\EndFor
\State \textbf{return} $(\nid(v), Msg_v, com\_redo_v)$.
\end{algorithmic}
\end{footnotesize}
\hrule
\vspace{1ex}
\caption{Pseudocode of \echotwo.}\label{fig:alg-echotwo}
\vspace{-3ex}
\end{figure}

\paragraph{The $\echotwo$ procedure.}
This procedure combines leader-driven identity assignment with committee-based verification to ensure reliable new identity allocation, as formalized in \Cref{fig:alg-echotwo}.
Within $\echotwo$, the leader node first computes and distributes new identity assignments based on its identity list $List'$. Specifically, for each node $u$ with $\id(u)$ in $List'$, the leader calculates the new identity $rank_u$ based on the rank of $\id(u)$ in $List'$.
Then, the leader transmits the new identity assignments to the committee via the $\bounce$ primitive.
Each committee node $v$ uses a mapping $Map_v$ to record the received new identity assignments. It checks through this mapping for each $\id(u)\in List_v$: if the node $u$ lacks a proper assignment (i.e., $Map_v[\id(u)]$ does not have a record), committee node $v$ sets $com\_redo_v$ to $1$; otherwise, $v$ forwards new identity assignment to node $u$ via an $ECHO1$ message. Each node $v$ extracts its new identity $\nid(v)$ from an arbitrary $ECHO1$ message.
Ultimately, when this procedure ends, each node $v$ obtains its new identity. If $v$ is a committee node, it has one more piece of information: an indicator $com\_redo_v$ that will be used in subsequent step.

\begin{figure}[t!]
\hrule
\vspace{1ex}
$\echothree(elected_v, S_v', lead_v, List_v, List_v', com\_redo_v, {Msg_v}, com\_all, com\_g, com\_b)$\\
executed at node $v$.
\vspace{.5ex}
\hrule
\begin{footnotesize}
\begin{algorithmic}[1]
\State Send $\langle ECHO2, \id(v),  Msg_v \rangle$ to nodes in $S_v'$.
\If{($elected_v==1$)}
	\State $Name_v \gets$ an empty map.
	\For{(every $\langle ECHO2, \id(u), \langle NewID, lead_v, \id(u), rank_u\rangle \rangle$ received from node $u$ with $\id(u)\in List_v$)}
		\State $Name_v[\id(u)] \gets rank_u$.
	\EndFor
	\If{($\exists u_1 \neq u_2, Name_v[u_1]==Name_v[u_2]$)}
		$com\_redo_v \gets 1$.
	\EndIf
	\If{($\exists u, Name_v[u]>n$)}
		$com\_redo_v \gets 1$.
	\EndIf
	\If{($\exists u_1 < u_2, Name_v[u_1] > Name_v[u_2]$)}
		$com\_redo_v \gets 1$.
	\EndIf
	\State $com\_redo_v'\gets $ Run $\comcon( 0, com\_redo_v, S_v', com\_all, com\_g,  com\_b)$ with nodes in $S_v'$.
	\State Send $\langle RET,  com\_redo_v' \rangle$ to all nodes.
\EndIf
\State $redo_v \gets$ the majority value of $com\_redo'$ among $RET$ messages received from $S_v'$.
\State \textbf{return} $redo_v$.
\end{algorithmic}
\end{footnotesize}
\hrule
\vspace{1ex}
\caption{Pseudocode of \echothree.}\label{fig:alg-echothree}
\vspace{-3ex}
\end{figure}

\paragraph{The $\echothree$ procedure.}
This procedure, shown in \Cref{fig:alg-echothree}, determines the validity of the new identity assignment in current iteration.
In \echothree, each node first broadcasts its new identity to the committee via $ECHO2$ messages.
\highlightblue{(Note that each $ECHO2$ message contains the original $NewID$ message from the leader.)}
Then, each committee member $v$ constructs a mapping $Name_v$ from old identities to new ones according to the received $ECHO2$ messages.
Committee member $v$ then performs following checks: identity uniqueness (no new identity is claimed by multiple nodes), range validity (no new identity exceeds $n$), and order preserving (new identities preserve the relative order of original identities). Any violation triggers $v$ to vote for next iteration: $v$ sets $com\_redo_v$ to $1$.
Once $com\_redo$ values are set, committee members run $\comcon$ to reach agreement on the value of $com\_redo$: for each committee member $v$, denote the return value of $\comcon$ as $com\_redo_v'$.
The final part is to disseminate the committee's consensus outcome to all nodes: each committee node $v$ broadcasts \highlight{$com\_redo_v'$} to all nodes via $RET$ messages, while every node sets its $redo$ to be the majority value among the received $RET$ messages from the committee.
To sum up, $\echothree$ allows all nodes to correctly reach agreement on whether current new identities are valid or another iteration is required.

\section{Analysis for the Algorithm without Shared Randomness}\label{sec:analysis-no-shared-rnd}

In this section, we analyze the renaming algorithm where shared randomness is not required.
In the following, we still assume less than $(1/3-\delta)n$ nodes are Byzantine, where $\delta > 0$ is an arbitrarily small constant. The definitions of $c$, $\epsilon$, and $C$ are the same with \Cref{sec:analysis-shared-rnd}; \highlightblue{further define $\epsilon'$ to be a small constant taking value $\epsilon/2$}.
We use $G$ to denote correct committee members and use $G'$ to denote all correct nodes.

\subsection{Preliminary}

We begin with the analysis of a subroutine, namely \largelead, which is an extension of known results~\cite{king06soda, augustine20}.
\largelead{} is a leader election algorithm with a time complexity poly-logarithmic in the number of participating nodes, and a message complexity (almost) quadratic in the number of the participating nodes.
Hence, when the participating nodes are the committee members, we obtain a leader with a message complexity scaling with the size of the committee, and therefore the actual number of faulty nodes.
\largelead{} is used in two places in our renaming algorithm:
(1) in \comcon{}, where a leader is needed for a phase king style consensus algorithm;
and (2) in \oldid{}, where a leader is needed for assigning new identities.\footnote{\highlight{Recently, \cite{bhangale25} raised a concern about \cite{king06soda} and provided a solution without affecting the leader election process's performance. We have verified this fix can be incorporated into our setting as well.}}

\begin{lemma}[\cite{king06soda, augustine20}]\label{lem:without-cite-leader}
Assume nodes in $G$ are executing \largelead{} in \Cref{fig:alg-comcon} (and in \Cref{fig:alg-lead}).
Assume the following properties hold:
\begin{itemize}[topsep=1ex,itemsep=0ex]
	\item $G \subseteq \bigcap_{v \in G'} S_v'$;
	\item $|\bigcup_{v \in G}S_v'|<com\_all$, $|G| > com\_g >(\frac{2}{3}+\epsilon')com\_all$, $com\_b = com\_all-com\_g$.
\end{itemize}
Then the following properties hold:
\begin{itemize}[topsep=1ex,itemsep=0ex]
	\item validity:  with constant probability, for any $v \in G$, $lead_v \in G$;
	\item agreement:  with constant probability, for any $v, u \in G$, $lead_v=lead_u$;
	\item it has time complexity $O(\text{poly-log}(com\_all))$ and message complexity $O(com\_all^2\cdot\text{poly-log}(com\_all))$.
\end{itemize}
\end{lemma}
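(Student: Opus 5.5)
The plan is a reduction to the original King--Saia leader election protocol~\cite{king06soda}, run among the committee with the adaptations of~\cite{augustine20}. We do not reprove its internal probabilistic machinery. The key step is to show that, under the two hypotheses, the committee looks to that protocol like a fully connected system of roughly $com\_all$ participants with fewer than a $(1/3-\epsilon')$ fraction of Byzantine participants. Every correct participant should also be able to address every other correct participant. Validity, agreement and the complexity bounds should then transfer, with $n$ replaced by $com\_all$.

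First I would fix the participant universe as $U=\bigcup_{v\in G}S_v'$, which has $|U|<com\_all$. Every correct committee member only sends to and accepts messages from its own $S_v'\subseteq U$, so Byzantine participants lie in $U\setminus G$. Their number is less than $com\_all-com\_g=com\_b<(1/3-\epsilon')com\_all$.

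Next I would check connectivity. By $G\subseteq\bigcap_{v\in G'}S_v'$, each correct member $v$ has all of $G$ in its list, so correct-to-correct communication is complete. The views differ only in which Byzantine identities appear. Here I would follow the approach of \cite{augustine20}. Wherever \cite{king06soda} uses a count of the form ``at least a $2/3$ fraction of participants'', replace it by the thresholds $com\_g$ (for strong support) and $com\_b$ (for weak support). These are computed identically at all correct members, since $com\_all,com\_g,com\_b$ are common values. Sampling steps that choose random participants (for example, the node-to-array assignments of the election tree) should draw from the local $S_v'$. Since $|G|>(2/3+\epsilon')com\_all$ and $|S_v'|\le|U|<com\_all$, each sample still hits $G$ with probability at least $2/3+\epsilon'$. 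Chernoff bounds then give the same good-majority guarantees that \cite{king06soda} derives for $n$ participants.

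With this simulation established, validity and agreement (each with constant probability) follow from the corresponding lemma of \cite{king06soda}, including the correction of~\cite{bhangale25}. For complexity, the protocol has $O(\text{poly-log}(m))$ rounds and $O(m^2\,\text{poly-log}(m))$ messages on $m$ participants. Correct members only send within $S_v'$, so the cost is bounded with $m=com\_all$. Message authentication ensures forwarded votes cannot be forged; this replaces the extra machinery of \cite{augustine20} for non-authenticated settings.

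The main obstacle is the non-identical views: a Byzantine node may appear in some $S_v'$ and not in others. This could split correct nodes' counts, for example by sending a vote only to members that list it. I would handle this with a counting argument. Such a node can add at most $com\_b$ to any tally, while correct support always exceeds $com\_g$. So any two correct members that pass a $com\_g$ threshold agree, and a value backed by correct majority always passes. This is exactly the gap $com\_g-com\_b>com\_all/3$ that the protocol's thresholds need.
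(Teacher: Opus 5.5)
\textbf{Gap.} The problem is your claim that validity and agreement ``follow from the corresponding lemma of \cite{king06soda}''. That protocol only guarantees \emph{almost-everywhere} agreement: with the stated probability, a $1-o(1)$ fraction of the correct participants end with the same leader, but the remaining correct participants may end with no leader or a different one. The lemma asserts $lead_v=lead_u$ for \emph{all} $v,u\in G$, and the phase-king argument in \Cref{lem:comcon} relies on exactly this. Your final counting paragraph does not close the gap. It addresses inconsistent views of Byzantine members, and its premise that correct support ``always exceeds $com\_g$'' already presupposes that all of $G$ hold the same value, which is what is missing.

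\textbf{Fix.} Add one all-to-all round inside the committee: each $v\in G$ sends its current leader to all of $S_v'$, and each member adopts the majority value it receives from its list. Let $A\subseteq G$ be the agreeing set. Then $|A|\ge(1-o(1))|G|>(\frac{2}{3}+\frac{\epsilon'}{2})com\_all$ for large $com\_all$. Since $A\subseteq G\subseteq S_u'$ for every $u\in G$, node $u$ receives more than $\frac{2}{3}com\_all$ votes for the agreed leader. Because $|S_u'|<com\_all$, it receives fewer than $\frac{1}{3}com\_all$ other votes. Validity carries over because the agreed leader is correct with constant probability.

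\textbf{Complexity.} This round costs $O(com\_all^2)$ messages, and it is where the quadratic term in the lemma comes from. The protocol of \cite{king06soda} itself uses only $O(com\_all\cdot\text{poly-log}(com\_all))$ messages. So your quadratic bound still holds, but the quadratic term is needed for this extra round, not for the protocol itself.

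\textbf{A second concern.} Drawing the election-tree assignment from each member's local $S_v'$ would not work as stated. In \cite{king06soda} that layout must be common to all correct participants, and locally sampled layouts would disagree. This part should be delegated to the adaptation of \cite{augustine20} rather than replaced.
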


\begin{proof}
King, Saia, Sanwalani, and Vee~\cite{king06soda} devise a leader election algorithm that achieves validity and almost everywhere agreement (i.e., agreement among $1-o(1)$ fraction of correct participating nodes), with a time complexity of $O(\text{poly-log}(com\_all))$ and a message complexity of $O(com\_all \cdot \text{poly-log}(com\_all))$.
However, to make it work in our model, we make the same adaptions as in \cite{augustine20}, where \cite{augustine20} calls \cite{king06soda} as a subroutine.
After the leader election algorithm is executed, to achieve everywhere agreement, we need another round in which each participant that already knows the leader broadcasts its decision while each participant that does not know the leader adopts the majority value among the received messages.
\highlightblue{This incurs one additional round.}
Since the total time complexity is $O(\text{poly-log}(com\_all))$, and there are at most $com\_all^2$ messages in each round, the total message complexity is $O(com\_all^2\cdot\text{poly-log}(com\_all))$.
\end{proof}

\subsection{Correctness of the committee election process}

We now start the analysis for the committee election process. We start with the following lemma which states that committee consensus can be reached under particular conditions.
\comcon{} incorporates a phase king style consensus algorithm inspired by \cite{lenzen22}.
It contains multiple phases.
In each phase, a leader is elected that will broadcast its opinion, and nodes that are not confident about their local opinions will adopt the leader's opinion.
Under lemma assumptions, either all nodes have the same input which makes all nodes confident about their local opinions and never adopt the leader's opinion, or there are limited Byzantine nodes and the algorithm works like a standard phase king consensus.

\begin{lemma}\label{lem:comcon}
Assume nodes in $G$ are executing \comcon{} in \Cref{fig:alg-comcon}.
Assume the following properties hold:
\begin{itemize}[topsep=1ex,itemsep=0ex]
	\item $G \subseteq \bigcap_{v \in G'} S_v'$;
	\item if there exists $v \in G$ with $com\_ready_v=1$, then for any $v \in G$,  it holds that $com\_in_v=1$;
	\item if there exists $v \in G$ with $com\_ready_v=0$, then $|\bigcup_{v \in G}S_v'|<com\_all$, \\
	$|G| > (1+\epsilon)com\_g >(1+\epsilon)(\frac{2}{3}+\epsilon')com\_all$, $com\_b = com\_all-com\_g$.
\end{itemize}
Then the following properties hold with probability at least {$1-n^{-10c}$}:
\begin{itemize}[topsep=1ex,itemsep=0ex]
	\item validity: if there are at least $com\_g$ nodes in $G$ with $com\_in=value$, then all nodes in $G$ have $com\_out=value$;
	\item agreement: for any $v, u \in G$, it holds that $com\_out_v=com\_out_u$.
\end{itemize}
\end{lemma}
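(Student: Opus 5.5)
I will split the argument according to whether some correct committee member has $com\_ready=1$.

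\emph{Case 1: some $v\in G$ has $com\_ready_v=1$.} By the second assumption, every node in $G$ has $com\_in=1$, so every correct node initializes $com\_out=1$. Nodes with $com\_ready=1$ never change $com\_out$, since both the ECHO-based update and the leader-adoption step are guarded by $com\_ready=0$. For nodes with $com\_ready=0$, the third assumption also applies (such a node exists), so the counting argument below is available. I will show by induction over iterations that $1$ stays the unanimous value among $G$ and that every correct node with $com\_ready=0$ sets $strong=1$ in each iteration, and so never adopts the leader's value. The same argument covers the situation where all of $G$ has $com\_ready=1$: then nothing ever changes and both properties are immediate. Validity also holds in this case, since $com\_g$ nodes holding $0$ cannot exist.

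\emph{Case 2: every $v\in G$ has $com\_ready_v=0$.} Here the third assumption gives $|\bigcup_{v\in G}S'_v|<com\_all$, $|G|>(1+\epsilon)com\_g$, and $G\subseteq S'_v$ for all correct $v$. The number of Byzantine identities appearing in any correct node's $S'$ is therefore less than $com\_all-|G|<com\_b$. This yields three standard phase-king facts, which I will check one after another.
\begin{enumerate}[topsep=0ex,itemsep=0ex]
\item \emph{Echo uniqueness.} Two correct nodes cannot ECHO different values in the same iteration, because $com\_g$ INIT messages for each value would require more than $2com\_g-com\_all>com\_all-com\_g$ overlapping senders, a majority of which are correct and send one value only. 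Hence at most one value receives at least $com\_b$ ECHOs, since Byzantine senders alone supply fewer than $com\_b$.
\item \emph{Persistence.} If all of $G$ holds $b$ at the start of an iteration, every correct node receives more than $com\_g$ INIT$(b)$, echoes $b$, and receives at least $|G|>com\_g$ ECHO$(b)$. So every correct node sets $com\_out=b$ and $strong=1$, ignores the leader, and $b$ persists. This gives validity: if at least $com\_g$ correct nodes start with $b$, the other correct nodes cannot gather $com\_g$ INIT$(\bar b)$, so only $b$ can be echoed.
\item \emph{Good king.} If some correct node has $strong=1$ with value $b$, it received at least $com\_g$ ECHO$(b)$, hence more than $com\_g-com\_b$ from correct nodes. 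Then every correct node receives at least $com\_b$ ECHO$(b)$ and sets $com\_out=b$. By echo uniqueness it cannot receive enough ECHO$(\bar b)$, so all correct nodes hold $b$ before the leader step. If the leader elected by \largelead{} is correct and agreed upon, the non-strong nodes adopt its value, which equals $b$ when any strong node exists. All of $G$ then agrees, and persistence preserves this.
\end{enumerate}

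The assumptions of \Cref{lem:without-cite-leader} hold here (with $|G|>com\_g>(2/3+\epsilon')com\_all$), so each of the $\Theta(\log n)$ iterations has a correct agreed leader independently with constant probability. Choosing the hidden constant in $\Theta(\log n)$ large enough makes the failure probability of all iterations at most $n^{-10c}$.

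The main obstacle is the threshold arithmetic. I need to verify that $com\_g-com\_b\ge$ (number of correct nodes required for $com\_b$ ECHOs) and that $2com\_g-com\_all$ exceeds the Byzantine count; this is where the factor $(1+\epsilon)$ on $|G|$ and $com\_g>(2/3+\epsilon')com\_all$ are used. A second point needing care is Case 1 with mixed ready values: nodes with $com\_ready=0$ must never receive $com\_b$ ECHO$(0)$. This follows because no correct node holds $0$, so correct nodes never send INIT$(0)$, and Byzantine nodes alone fall short of $com\_g$ INIT$(0)$ and of $com\_b$ ECHO$(0)$.
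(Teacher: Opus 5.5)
Your proposal is correct and follows essentially the same route as the paper. Your Case~1 merges the paper's first two cases (all-ready and mixed-ready, where every node in $G$ starts and stays at $1$ and every non-ready node sets $strong=1$), and your Case~2 is its third case: your persistence and good-king facts are its persistence claim and its claim that a phase with a correct agreed leader produces agreement, with echo uniqueness being what the paper argues inline.

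Two small slips. First, in echo uniqueness the overlap of more than $2com\_g-com\_all>com\_b$ senders guarantees only that at least one of them is correct, not a majority; one correct sender already gives the contradiction. Second, the threshold arithmetic needs only $com\_g>\frac{2}{3}com\_all$ and $|G|>com\_g$, not the extra $(1+\epsilon)$ factor.
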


\begin{proof}
We first give an overview of the proof. Recall the algorithm shown in \Cref{fig:alg-comcon}.
Flag $strong$ and flag $com\_ready$ have similar effects, in that setting either flag to $1$ means the node is confident about the output, thus no longer adopts the opinion from the leader.
The algorithm is designed so that if there exists node with $com\_ready=1$, then all nodes have $strong=1$.
On the other hand, as long as the leader is a correct node, nodes that are not confident will have the same output with the confident nodes.
In \Cref{clm:comcon-1}, we prove that if consensus is reached in any phase, then nodes will not change opinions in later phases.
Then in \Cref{clm:comcon-2}, we prove that a correct leader is elected at least once, thus consensus is reached at least once, which means the algorithm still achieves consensus after all phases.

We now prove the lemma. Consider three cases.

\smallskip\textsc{Case one}: all nodes in $G$ have $com\_ready=1$. By assumption, all nodes in $G$ have $com\_in=1$.
Since all nodes in $G$ have $com\_ready=1$, all nodes in $G$ set $com\_out=1$ at the beginning and make no further changes. Therefore, validity and agreement are enforced.

\smallskip\textsc{Case two}: some node in $G$ has $com\_ready=1$ while some other node has $com\_ready=0$.
By assumption, $|\bigcup_{v \in G}S_v'|<com\_all$, {$|G| > com\_g$}.
Also by assumption, all nodes in $G$ have $com\_in=1$, so more than $com\_g$ nodes in $G$ send $\langle INIT, 1 \rangle$ messages, and more than $com\_g$ nodes in $G$ send $\langle ECHO, 1 \rangle$ messages.
\highlightblue{Therefore, after $ECHO$ messages are delivered, all nodes in $G$ with $com\_ready=1$ have $com\_out=1$, while all nodes in $G$ with $com\_ready=0$ have $strong=1$ and $com\_out=1$.}
Note that for any $u \in G$, node $u$ only receives messages from $|S_u'|<com\_all$ nodes, so there are less than $com\_all-com\_g=com\_b$ messages other than $\langle INIT, 1 \rangle$, so $u$ never sets $com\_out_u=0$ because of $ECHO$ messages.
\highlightblue{Lastly, since all nodes in $G$ with $com\_ready=0$ have $strong=1$, no such node changes $com\_out$ any more.}
Therefore, all nodes in $G$ have $com\_out=1$ eventually, as required.

\smallskip\textsc{Case three}: all nodes in $G$ have $com\_ready=0$.
This is the most complicated case.
By assumption, $|\bigcup_{v \in G}S_v'|<com\_all$, $|G| > (1+\epsilon)com\_g >(1+\epsilon)(\frac{2}{3}+\epsilon')com\_all$.

\begin{claiminline}\label{clm:comcon-1}
During a phase of \comcon{}, if at least $com\_g$ nodes in $G$ start with identical $com\_out=value^* \in \{ 0, 1 \}$, then all nodes in $G$ end with identical $com\_out=value^*$.
\end{claiminline}

\begin{proof}
If at least $com\_g$ nodes in $G$ have $com\_out=value^* \in \{ 0, 1 \}$, then at least $com\_g$ nodes in $G$ send $\langle INIT, value^* \rangle$ messages, and at least $com\_g$ nodes in $G$ send $\langle ECHO, value^* \rangle$ messages. Therefore, all nodes in $G$ have $strong=1$ and $com\_out = value^*$ after $ECHO$ messages are delivered.
Note that for any $u \in G$, node $u$ only receives messages from $|S_u'|<com\_all$ nodes, so there are less than $com\_all-com\_g=com\_b$ messages other than $\langle INIT, value^* \rangle$, so $u$ never sets $com\_out_u= \neg value^*$ because of $ECHO$ messages.
Lastly, since all nodes in $G$ have $strong=1$, no node changes $com\_out$ any more, thus all nodes in $G$ have $com\_out = value^*$ by the end of the phase.
\end{proof}

Above claim implies validity for case three. Specifically, if at least $com\_g$ nodes in $G$ have $com\_in= value^* \in \{ 0, 1 \}$ at the beginning of \comcon, then all nodes in $G$ have $com\_out=value^*$ by the end of phase one. By \Cref{clm:comcon-1}, after all phases, all nodes in $G$ have $com\_out=value^*$.

\begin{claiminline}\label{clm:comcon-2}
With probability at least $1-n^{-10c}$, there exists a phase $i$ of \comcon{} such that,
at the end of phase $i$,
for any nodes $v, u \in G$, $com\_out_v=com\_out_u$.
\end{claiminline}

\begin{proof}
By \Cref{lem:without-cite-leader}, during any phase, with constant probability, there exists a node $w \in G$ satisfying $lead_v = \id(w)$ for any $v \in G$. Since there are sufficient many phases, we know with probability at least $1-n^{-10c}$, there exists a phase $i$ and a node $w \in G$ satisfying $lead_v = \id(w)$ for any $v \in G$ in phase $i$.

Consider such a phase $i$, there are two cases.

In the first case, all correct nodes in $G$ have $strong=0$. Then after receiving $LEAD$ messages, all nodes in $G$ have $com\_out = com\_out_w$.

In the second case, there exists $x \in G$ with $strong_x=1$.
Then $x$ receives at least $com\_g$ messages of $\langle ECHO, com\_out_x \rangle$. (Since $com\_g > \frac{2}{3}com\_all > \frac{2}{3}|S_x'|$, node $x$ receives less than $com\_g$ messages of $\langle ECHO, \neg com\_out_x \rangle$.)
Among these messages, at least $com\_g - com\_b > com\_b$ are from nodes in $G$.
So after $ECHO$ messages are delivered, every node in $G$ receives at least $com\_b$ messages of $\langle ECHO, com\_out_x \rangle$, and sets $com\_out$ to $com\_out_x$, including the leader $w$.
Moreover, since at least $com\_b$ correct nodes send $\langle ECHO, com\_out_x \rangle$, we know at least $com\_g-com\_b$ correct nodes send $\langle INIT, com\_out_x \rangle$, which means any correct node receives less than $com\_all - com\_g + com\_b < com\_g$ messages of $\langle INIT, \neg com\_out_x \rangle$, implying no correct node sends $\langle ECHO, \neg com\_out_x \rangle$.
So after $ECHO$ messages are delivered, there is no correct node that receives at least $com\_b$ messages of $\langle ECHO, \neg com\_out_x \rangle$, and {no correct node sets} $com\_out$ to $ \neg com\_out_x$, including the leader $w$.
Therefore, after receiving $LEAD$ messages, for any node $y \in G$ with $strong_y=1$, $com\_out_y$ does not change and $com\_out_y = com\_out_x=com\_out_w$; for any node $z \in G$ with $strong_z=0$, $com\_out_z = com\_out_w$.

We conclude that, in either case, all nodes in $G$ have 
$com\_out = com\_out_w$.
\end{proof}

Above claim implies agreement for case three.
Specifically, by \Cref{clm:comcon-2}, with probability at least $1-n^{-10c}$, there exists a phase $i$ such that $com\_out_v=com\_out_u$ for any nodes $v, u \in G$ at the end of phase $i$.
By \Cref{clm:comcon-1}, for any phase after $i$, for any nodes $v, u \in G$, we still have $com\_out_v=com\_out_u$.
\end{proof}

The following lemma states that the assumptions of \Cref{lem:comcon} indeed are guaranteed.
In particular, it describes the output of procedure \clst{}.

\begin{lemma} \label{lem:without-largecore}
For the \clst{} algorithm shown in \Cref{fig:alg-clst}, the following properties hold with probability at least $1-n^{-7c}$:
\begin{itemize}[topsep=1ex,itemsep=0ex]
	\item $G \subseteq \bigcap_{v \in G'} S_v'$;
	\item if there exists $v \in G$ with $com\_ready_v=1$, then for any $v \in G$, $com\_in_v=1$;
	\item if there exists $v \in G$ with $com\_ready_v=0$, then $|\bigcup_{v \in G}S_v'|<com\_all$, \\
	$|G| > (1+\epsilon)com\_g >(1+\epsilon)(\frac{2}{3}+\epsilon')com\_all$, $com\_b = com\_all-com\_g$.
\end{itemize}
\end{lemma}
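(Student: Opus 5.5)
The plan is to prove the three properties separately, each via Chernoff bounds and a union bound over at most $\text{poly}(n)$ identities and nodes, with every individual failure probability pushed below $n^{-8c}$. Throughout I use that the iteration probability $p\ge (C\log n)/n$ is fixed before the coins are tossed and that the adversary is static.

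\textbf{Baseline counts.} First I record that, with high probability, $|G|>(1-\epsilon)(2/3+\delta)pn=com\_g$, and in fact $|G|\ge(1-\epsilon/2)(2/3+\delta)pn$, which exceeds $(1+\epsilon)com\_g$ once $\epsilon$ is small relative to $\delta$. I also record that the total number of elected nodes is below $(1+\epsilon)pn$. The inequality $com\_g>(2/3+\epsilon')com\_all$ then reduces to an arithmetic check using $\epsilon<\delta/3$.

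\textbf{Property 1: $G\subseteq\bigcap_{v\in G'}S_v'$.} Fix $u\in G$ and $v\in G'$. Node $u$ broadcasts $ELECT$, so every correct node has $\id(u)$ in its list $S$. In $\sample$, $v$ queries $C\log n$ random nodes. By Chernoff, more than $(1-\epsilon)(2/3+\delta)C\log n$ of them are correct, and each of these answers YES. The per-querier cap $(1+\epsilon)C\log n$ is not reached, because $v$ is correct and, again by Chernoff, sends fewer than that many queries to any single node. Hence $\id(u)$ survives the filter.

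\textbf{Property 2: if some $v\in G$ has $com\_ready_v=1$, then every $w\in G$ has $com\_in_w=1$.} Node $v$ receives at least $(1/3-\delta+\epsilon)n$ $LONG$ messages, and since fewer than $(1/3-\delta)n$ nodes are faulty, at least $\epsilon n$ of them come from correct nodes. A correct node sends $LONG$ to every node in its $S$. By the argument for Property 1, every $w\in G$ lies in every correct node's $S$. So $w$ receives those same $\epsilon n$ correct $LONG$ messages and sets $com\_in_w=1$.

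\textbf{Property 3: if some $v\in G$ has $com\_ready_v=0$, then $|\bigcup_{v\in G}S_v'|<com\_all$.} I expect this to be the main obstacle, and I would argue by contrapositive. Suppose $|\bigcup_{v\in G}S'_v|\ge com\_all$; I will show every $w\in G$ receives at least $(1/3-\delta+\epsilon)n$ $LONG$ messages. The argument has three steps.

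\begin{enumerate}
\item Define the set $B^*$ of \emph{widely announced} identities: those $\id(u)$ held in the $S$ of at least a $(1-2\epsilon)(2/3+\delta)$-ish fraction of all nodes. Any identity that passes some correct node's $\sample$ filter must be in $B^*$ with high probability, since otherwise its $C\log n$ random samples would contain too few YES answers. I take a union bound over identities that actually appear in some correct node's list. Therefore $\bigcup_{v\in G}S'_v\subseteq B^*$ and $|B^*|\ge com\_all$.
\item Fix $\id(u)\in B^*$. It lies in the $S$ of at least roughly $(1/3+\delta)n$ correct nodes, because a $(2/3+\delta)$ fraction of all nodes minus fewer than $(1/3-\delta)n$ faulty ones leaves that many correct holders. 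Each holder independently includes $\id(u)$ in $Serve$ with probability $(C\log n)/n$. So the number of correct $SERVE$ messages about $u$ has mean about $(1/3+\delta)C\log n$, and with high probability it is at least $(C\log n)/10$. Correct senders never exceed the per-sender cap. Hence every correct node places $\id(u)$ in $Long$, giving $B^*\subseteq Long_x$ for all $x\in G'$.
\item It follows that $|Long_x|\ge com\_all$ for every correct $x$, so every correct node sends $LONG$ to its $S$, which contains all of $G$. Each $w\in G$ therefore receives at least $|G'|>(2/3+\delta)n\ge(1/3-\delta+\epsilon)n$ $LONG$ messages and sets $com\_ready_w=1$, contradicting the hypothesis.
\end{enumerate}

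The delicate points are calibrating the "widely announced" fraction against the $\sample$ threshold $(1-\epsilon)(2/3+\delta)$ via Chernoff, and making sure the union bound covers only polynomially many identities despite Byzantine spam. The per-sender caps handle the spam: Byzantine $SERVE$ messages can only add identities to $Long$, which is harmless for this direction of the argument. Collecting the failure probabilities of the three properties gives the claimed bound $1-n^{-7c}$.
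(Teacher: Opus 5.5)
Your proposal is correct and follows essentially the paper's proof. Properties 1 and 2 are argued exactly as in the paper: a Chernoff bound on the \sample{} replies, and the $\epsilon n$ correct $LONG$ senders, which reach every node of $G$.

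Your Property 3 is the contrapositive of the paper's route. The paper first shows $\bigcup_{v\in G'}S'_v\subseteq\bigcap_{v\in G'}Long_v$ by splitting on whether fewer than $n/9$ correct nodes hold the identity. That is a cruder threshold than your $\approx(1/3+\delta)n$, but both work given $\epsilon<\delta/3$. The paper then notes that any $v\in G$ with $com\_ready_v=0$ forces some correct node to have $|Long|<com\_all$, which is your Step 3 read in the other direction.

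One arithmetic slip in your baseline counts needs fixing. You have
\[
(1+\epsilon)com\_g=(1-\epsilon^2)\left(\tfrac{2}{3}+\delta\right)pn,
\]
which is strictly larger than $(1-\epsilon/2)(2/3+\delta)pn$ for every $\epsilon<1/2$, whatever $\delta$ is. So the bound $|G|\ge(1-\epsilon/2)(2/3+\delta)pn$ does not give $|G|>(1+\epsilon)com\_g$, and choosing $\epsilon$ small relative to $\delta$ does not help.

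The fix is to apply Chernoff around the mean $p|G'|>(2/3+\delta)pn$ with relative deviation about $\epsilon^2/2$. The failure probability is then about $\exp(-\epsilon^4 pn/12)$. This is polynomially small with the required exponent precisely because $pn\ge C\log n$ and $C>100c/\epsilon^4$, which is the reason the paper's $C$ scales as $\epsilon^{-4}$.
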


\begin{proof}
Recall the pseudocode shown in \Cref{fig:alg-clst}.
For each node, the committee list $S'$ is initialized as $S$.
In \Cref{clm:without-largecore-1}, we prove that the initial committee list $S$ contains all correct committee members.
In \Cref{clm:elected-large-1}, we further prove that these correct committee members are still contained in the filtered list $S'$, giving the first property of the lemma.
\Cref{clm:elected-large-2} gives the second property of the lemma.
In \Cref{clm:elected-large-3}, we show that $SERVE$ messages help each node builds a local long list, which is guaranteed to contain the union list of all correct committee members' filtered lists.
Lastly, in \Cref{clm:elected-large-4}, we show that if there exists a node's local long list that does not exceed the length limit, then all committee members have trust-worthy committee lists, giving the last property of the lemma.

\begin{claiminline} \label{clm:without-largecore-1}
With probability at least $1-n^{-10c}$, it holds that $G \subseteq \bigcap_{v \in G'} S_v$, {$|G| > (1+\epsilon)com\_g >(1+\epsilon)(\frac{2}{3}+\epsilon')com\_all$}, $com\_b = com\_all-com\_g$.
\end{claiminline}

\begin{proof}
$G \subseteq \bigcap_{v \in G'} S_v$ is true since all nodes in $G$ send $ELECT$ messages during \committee{} (see \Cref{fig:alg-nsrb-comelect}).
Note that there are at least $(2/3+\delta)n$ correct nodes and at most $n$ correct nodes, hence if $p=1$, we have $com\_all=n+1$,  $com\_g=(2/3+\epsilon)n$ and $|G|=(2/3+\delta)n$, so the claim holds; if $p<1$, by a Chernoff bound, with probability at least $1-n^{-10c}$, $|G| > (1+\epsilon)com\_g$ and $com\_g > (\frac{2}{3}+\epsilon')com\_all$ are true by the definition of $\epsilon$ and $\epsilon'$, $com\_b = com\_all-com\_g$ is true by the definition of $com\_b$.
\end{proof}

\begin{claiminline}\label{clm:elected-large-1}
With probability at least $1-n^{-9c}$, $G \subseteq \bigcap_{v \in G'} S_v'$.
\end{claiminline}

\begin{proof}
By \Cref{clm:without-largecore-1}, with probability at least $1-n^{-10c}$, it holds that $G \subseteq \bigcap_{v \in G'} S_v$.
Fix a node $w\in G'$ and a node $u \in G$.
When executing \sample{} in \Cref{fig:alg-sample}, node $w$ samples $C\log n$ random nodes to validate $\id(u)$.
By a Chernoff bound, with probability at least $1-n^{-10c}$, node $w$ sends less than $(1+\epsilon)C\log n$ $ASK$ messages to any node.
Since there are at least $(2/3+\delta)n$ correct nodes each having $\id(u)$ in its $S$, by a Chernoff bound, with probability at least $1-n^{-10c}$, node $w$ receives at least $(1 - \epsilon)(2/3+\delta) C \log n$ $YES$ messages for $\id(u)$.
Hence, $\id(u) \in S_w'$.
Take a union bound over at most $n^2$ choices of $w$ and $u$, with probability at least $1-n^{-9c}$, we have $G \subseteq \bigcap_{v \in G'} S_v'$.
\end{proof}

Note that \Cref{clm:elected-large-1} gives the first property in the lemma statement.

\begin{claiminline}\label{clm:elected-large-2}
With probability at least $1-n^{-10c}$,
if there exists some node $v \in G$ with $com\_ready_v=1$, then for any node $v \in G$, it holds that $com\_in_v=1$.
\end{claiminline}

\begin{proof}
By \Cref{clm:without-largecore-1}, with probability at least $1-n^{-10c}$, we have $G \subseteq \bigcap_{v \in G'} S_v$.
Assume there exists $v \in G$ with $com\_ready_v=1$.
Then $v$ receives at least $(1/3-\delta+\epsilon)n$ $LONG$ messages, implying that at least $\epsilon n$ correct nodes have $|Long| \ge com\_all$. 
Since every correct node sends $LONG$ messages to all nodes in $G$,
we know all nodes in $G$ receive at least $\epsilon n$ $LONG$ messages and have $com\_in=1$.
\end{proof}

Note that \Cref{clm:elected-large-2} gives the second property in the lemma statement.

\begin{claiminline}\label{clm:elected-large-3}
With probability at least $1-n^{-9c}$,
$\bigcup_{v \in G'}S_v' \subseteq \bigcap_{v \in G'} Long_v$.
\end{claiminline}

\begin{proof}
There are at most $n$ identities in $S_v$ for any correct node, and there are at most $n$ correct nodes in total, by a Chernoff bound, with probability at least $1-n^{-11c}$, any correct node sends at most $(1+\epsilon)C\log n$ $SERVE$ messages.
\highlightblue{So, for any $y\in G'$ and any identity $\id(z)$,
as long as at least $(C\log n)/10$ correct nodes have $\id(z) \in Serve$, it will be the case that $\id(z) \in Long_y$.}

Fix a node $v \in G'$ and an identity $\id(w) \in S_v'$.

Assume there are less than $n/9$ correct nodes with $\id(w) \in S$.
In such case, when nodes execute \sample{}, node $v$ samples $C\log n$ random nodes and sends $ASK$ messages about $\id(w)$.
Let random variable $X$ be the number of nodes with $\id(w) \in S$ that $w$ samples.
Then $\ee[X] < (1/9+1/3-\delta)C \log n < \frac{1-\epsilon}{1+\epsilon}(2/3+\delta)C \log n$.
Apply a Chernoff bound, we conclude that $\Pr[X \ge (1-\epsilon)(2/3+\delta)C \log n]  \le e^{-{(\epsilon^2/9)} \cdot C \log n} < n^{-11c}$.
That is, if there are less than $n/9$ correct nodes with $\id(w) \in S$, then with probability at least $1-n^{-11c}$, we have $\id(w) \notin S_v'$.

So, assume there are at least $n/9$ correct nodes with $\id(w) \in S$.
By a Chernoff bound, with probability at least $1-n^{-11c}$, there are at least $(C\log{n})/10$ correct nodes with $\id(w) \in Serve$. So every correct node receives at least $(C\log n)/10$ times of $SERVER$ messages containing $\id(w)$ and has $\id(w) \in Long$.

Take a union bound over at most $n^2$ choices of $v$ and $\id(w)$, we know with probability at least $1-n^{-9c}$, it holds that $\bigcup_{v \in G'}S_v' \subseteq \bigcap_{v \in G'} Long_v$.
\end{proof}

\begin{claiminline}\label{clm:elected-large-4}
With probability at least $1-n^{-8c}$, if there exists some $v \in G$ with $com\_ready_v=0$, then  $|\bigcup_{v \in G'}S_v'|<com\_all$, $|G| > (1+\epsilon)com\_g > (1+\epsilon)(\frac{2}{3}+\epsilon')com\_all$, $com\_b = com\_all-com\_g$.
\end{claiminline}

\begin{proof}
Assume there exists some $v \in G$ with $com\_ready_v=0$.
Node $v$ receives less than $(1/3-\delta+\epsilon)n$ $LONG$ messages, which means there are less than $(1/3-\delta+\epsilon)n$ correct nodes with $|Long| \ge com\_all$.
Therefore, there exists a correct node $w$ with $Long_w < com\_all$.
By \Cref{clm:elected-large-3}, with probability at least $1-n^{-9c}$, $\bigcup_{v \in G'}S_v' \subseteq \bigcap_{v \in G'} Long_v$.
Therefore $|\bigcup_{v \in G'}S_v'| \le |Long_w| < com\_all$.
By \Cref{clm:without-largecore-1}, with probability at least $1-n^{-10c}$, we further have $|G| > (1+\epsilon)com\_g >(1+\epsilon)(\frac{2}{3}+\epsilon')com\_all$ and $com\_b = com\_all-com\_g$.
\end{proof}

We conclude the proof by noting that \Cref{clm:elected-large-4} gives the last property in the lemma statement.
\end{proof}

The following lemma shows that all nodes quit \clst{} in the same phase, and the decision is made by committee consensus.

\begin{lemma}\label{lem:without-redo}
For the \clst{} algorithm shown in \Cref{fig:alg-clst}, the following properties hold with probability at least $1-n^{-5c}$:
\begin{itemize}[topsep=1ex,itemsep=0ex]
	\item validity: for any $v \in G'$, $redo_v = com\_out_u$ for some $u \in G$;
	\item agreement: for any $v, u \in G'$, $redo_v = redo_u$.
\end{itemize}
\end{lemma}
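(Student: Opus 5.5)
First I will condition on the high-probability events already established. By \Cref{lem:without-largecore}, the hypotheses of \Cref{lem:comcon} hold, so with probability at least $1-n^{-10c}$ every $u\in G$ outputs the same value $com\_out^*$ from \comcon{}. By \Cref{clm:elected-large-3}, every $S_v'$ lies inside every correct $Long_w$. The case $p=1$ needs no work, since there all nodes set $redo=0$ directly. So assume $p<1$ and split the correct nodes into ``small'' nodes, with $|S_v'|<com\_all$, and ``large'' nodes, with $|S_v'|\ge com\_all$. I would first show that every small node sets $redo_v=com\_out^*$, and then that every large node ends with the value held by the small nodes. Agreement and validity then follow together.

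\emph{Small nodes.} Take a correct $v$ with $|S_v'|<com\_all$. This node has $|Long_v|\ge|S_v'|$, but that alone says nothing. The useful step is \Cref{clm:elected-large-3}: $\bigcup_{w\in G'}S_w'\subseteq Long_v$. To compare $|G|$ with $|S_v'|$, I would use $G\subseteq S_v'$ from \Cref{clm:elected-large-1} together with $|G|>(1+\epsilon)com\_g$ from \Cref{clm:without-largecore-1}. Then $v$ hears $COM$ messages from all of $G$, which gives more than $com\_g$ copies of $com\_out^*$. It hears fewer than $com\_all-com\_g=com\_b<com\_g$ messages from the remaining members of $S_v'$. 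So the majority is $com\_out^*$. This needs no claim about the global union: $|S_v'|<com\_all$ holds locally, and the bound on $|G|$ holds with high probability whenever $p<1$. Hence every small node gets $redo_v=com\_out^*$, and since $redo_v$ then equals $com\_out_u$ for some $u\in G$, validity holds for small nodes.

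\emph{Large nodes.} I expect this to be the main obstacle. The plan is to show that correct small nodes form a large majority of all nodes, so that \sample{} with threshold $1/2$ reports their common value. I would argue as follows. If some correct node is large, then $\bigcup_{w\in G'}S_w'$ has size at least $com\_all$. By \Cref{clm:elected-large-3}, every correct node then has $|Long|\ge com\_all$ and sends $LONG$ to all of $G$. So every $u\in G$ receives at least $(2/3+\delta)n$ $LONG$ messages, sets $com\_ready_u=1$ and $com\_in_u=1$, and by \Cref{lem:comcon} (case one) $com\_out^*=1$. In this situation a large node cannot find $0$ held by a majority of sampled responders: only correct small nodes, holding $com\_out^*=1$, or Byzantine nodes, fewer than $(1/3-\delta)n$ of them, can answer YES for item $0$. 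By a Chernoff bound the sampled YES count stays below $\frac12 C\log n$, so the large node keeps $0\notin redo'$ and sets $redo=1=com\_out^*$. Per-sender caps in \sample{} stop Byzantine nodes from amplifying their answers. If no correct node is large, there is nothing to prove. So in every case all correct nodes hold $com\_out^*$.

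Finally, I would take a union bound over the failure events of \Cref{lem:without-largecore}, \Cref{lem:comcon} and the Chernoff bounds for \sample{}, each at most $n^{-7c}$ and applied to at most $n^2$ node pairs. This gives the stated success probability of $1-n^{-5c}$. The delicate point to double-check is the implication ``some correct node large $\Rightarrow$ every correct $Long$ is at least $com\_all$ $\Rightarrow$ $com\_ready=1$ throughout $G$''. It rests on the containment in \Cref{clm:elected-large-3} holding for every correct node simultaneously.
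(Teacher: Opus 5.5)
Your proof is correct. For small nodes it matches the paper: a majority vote over $COM$ messages from $S_v'\supseteq G$ with $|S_v'|<com\_all$. For large nodes you take a different and shorter route.

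The paper splits on whether at least $\epsilon n$ correct nodes have $|S'|\ge com\_all$.
\begin{itemize}
\item In the first case it uses only the local fact $S_w'\subseteq Long_w$ and the $\epsilon n$ threshold to get $com\_in=1$ throughout $G$, and then $com\_out=1$ by validity of \comcon{}.
\item In the second case it uses a two-sided sampling argument. A large node samples more than $\frac{1.9}{3}C\log n$ correct small nodes and fewer than $\frac{1.1}{3}C\log n$ Byzantine ones, so it recovers $com\_out_x$ whichever value it is.
\end{itemize}

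You instead invoke \Cref{clm:elected-large-3}: a single large correct node pushes every correct $Long$ over $com\_all$. Then every member of $G$ receives more than $(2/3+\delta)n$ $LONG$ messages, sets $com\_ready=1$, and keeps $com\_out=1$ deterministically. After that, only the one-sided bound on Byzantine YES replies for item $0$ is needed.

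Your route buys two things. It shows that the paper's sub-case ``some correct node is large but $com\_out_x=0$'' is vacuous with high probability. It also removes the need for the lower-tail Chernoff bound on correct samples. The paper's route buys independence from the global containment $\bigcup_{v\in G'}S_v'\subseteq\bigcap_{v\in G'}Long_v$; it needs only $S_w'\subseteq Long_w$.

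One correction: the caps $Cnt_v[\id(u)]$ in \sample{} only limit how many YES replies a \emph{correct} responder sends to a given querier. They do not stop a Byzantine responder from repeating YES. What bounds the Byzantine contribution is that the querier counts YES replies once per sampled node of $Map_v[0]$; the paper also assumes this tacitly. Under that reading, the count is at most the number of Byzantine nodes sampled. With high probability this is below $\frac{1.1}{3}C\log n<\frac12 C\log n$, which is what you need.
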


\begin{proof}
By \Cref{lem:comcon} and  \Cref{lem:without-largecore}, with probability at least $1-n^{-6c}$, we have $G \subseteq \bigcap_{v \in G'} S_v'$ where $|G|>\frac{2}{3}com\_all$; moreover, there exists some $x \in G$ such that all nodes in $G$ have $com\_out = com\_out_x$.

Recall \redo{} shown in \Cref{fig:alg-redo}.
For any $w \in G'$ with $|S'_w|<com\_all$, after taking the majority value of $COM$ messages, $redo_w = com\_out_x$.

For any $w \in G'$ with $|S_w'| \ge  com\_all$, since there are less than $n/3$ Byzantine nodes, by a Chernoff bound, with probability at least $1-n^{-10c}$, node $w$ samples less than $\frac{1.1}{3}C\log n$ Byzantine nodes during \sample{}.
Consider two cases.

First case, there are at least $\epsilon n$ correct nodes with $|S'| \ge com\_all$. Then there are at least $\epsilon n$ correct nodes with $|Long| \ge com\_all$, so all nodes in $G$ have $com\_in=1$, implying all nodes in $G$ have $com\_out=1$ (which means $com\_out_x=1$).
Hence all correct nodes with $|S'|<com\_all$ have $redo=1$.
This implies no correct nodes reply $YES$ messages on $redo=0$ during \sample, while at most $\frac{1.1}{3}C\log n$ Byzantine nodes do so.
As a result, $redo_w=com\_out_x=1$.

Second case, there are less than $\epsilon n$ correct nodes with $|S'| \ge com\_all$.
Then there are more than $\frac{2}{3}n$ correct nodes with $|S'| < com\_all$ and $redo=com\_out_x$.
By a Chernoff bound, with probability at least $1-n^{-10c}$, $w$ samples more than $\frac{1.9}{3}C\log n$ correct nodes with $|S'| < com\_all$ and less than $\frac{1.1}{3}C\log n$ Byzantine nodes.
If $com\_out_x=1$, then $w$ receives less than $\frac{1.1}{3}C\log n$ $YES$ messages on $redo=0$ during \sample; if $com\_out_x=0$, then $w$ receives more than $\frac{1.9}{3}C\log n$ $YES$ messages on $redo=0$ during \sample.
In either case, we have $redo_w=com\_out_x$.

To conclude, we have shown that with probability at least $1-n^{-5c}$, there exists some $x \in G$ such that all correct nodes have $redo=com\_out_x$.
\end{proof}

We conclude this part with the following lemma, which describes the properties of the eventual committee lists held by nodes.
In a nutshell, though different nodes may have different committee lists, each node's list contains all correct committee members, and correct committee members take the majority in the union of all nodes' lists.
Moreover, the committee size scales with the actual number of Byzantine nodes and the amount of messages sent by Byzantine nodes.
As a result, when there are few Byzantine nodes or Byzantine nodes send few messages, the committee is also small in size.
An immediate corollary of the following lemma is, the common decision of correct committee members can be safely spread to all nodes by requiring each node taking majority value from its local view of the committee.

\begin{lemma}\label{lem:without-s}
For the \committee{} algorithm shown in \Cref{fig:alg-nsrb-comelect}, after its execution, the following properties hold with probability at least $1-n^{-4c}$:
\begin{itemize}[topsep=1ex,itemsep=0ex]
	\item $G \subseteq \bigcap_{v \in G'} S_v'$;
	\item $|\bigcup_{v \in G'}S_v'|<com\_all$, {$|G| > (1+\epsilon)com\_g >(1+\epsilon)(\frac{2}{3}+\epsilon')com\_all$}, $com\_b = com\_all-com\_g$;
	\item $com\_all=O(\min \{ f, T/n \}+\log n)$, where $f$ is the actual number of Byzantine nodes and \highlight{$T$ is the total number of messages sent by Byzantine nodes during \committee}.
\end{itemize}
\end{lemma}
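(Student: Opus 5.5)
The plan is to analyze \committee{} iteration by iteration and take a union bound over the at most $\log n$ iterations. The first two properties concern the iteration in which all correct nodes exit, which happens simultaneously because of the agreement property of \Cref{lem:without-redo}. The third property requires showing that exit happens no later than an iteration whose $p$ is small enough.

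\textbf{Step 1 (first two properties).} Fix an iteration $j$ with $p<1$. By \Cref{lem:without-redo}, with probability at least $1-n^{-5c}$, all correct nodes hold the same value $redo=com\_out_x$ for some $x\in G$. Suppose this common value is $0$. Then $com\_out_x=0$. I claim some node in $G$ must have $com\_ready=0$: otherwise \Cref{lem:without-largecore} gives $com\_in=1$ for all of $G$, and by validity of \Cref{lem:comcon} (since $|G|>com\_g$), every node in $G$ outputs $com\_out=1$, a contradiction.

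Once some node in $G$ has $com\_ready=0$, the third bullet of \Cref{lem:without-largecore}, together with \Cref{clm:elected-large-4} (which gives the union over $G'$ rather than only over $G$), yields $|\bigcup_{v\in G'}S'_v|<com\_all$ and the required bounds on $|G|$. The first bullet of \Cref{lem:without-largecore} gives $G\subseteq\bigcap_{v\in G'}S'_v$. If instead exit happens in the forced iteration with $p=1$, then $S'_v=S_v$ consists of everyone who sent $ELECT$, which is at most $n<com\_all=n+1$ nodes. All correct nodes lie in $G$, and $|G|\ge(2/3+\delta)n>(1+\epsilon)(2/3+\epsilon)n$ holds by the choice of $\epsilon$. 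A union bound over the $\log n$ iterations gives failure probability at most $n^{-4c}$.

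\textbf{Step 2 (size bound).} Since $com\_all=\Theta(pn)$ in the exiting iteration, it suffices to show exit happens by the first iteration $j^*$ with $pn\ge K\cdot(\min\{f,T/n\}+\log n)$, for a suitable constant $K$. If $p$ is already $1$ before reaching $j^*$, then $f=\Omega(n)$ and the bound $com\_all=O(n)=O(f)$ is trivial. Otherwise I will show that in iteration $j^*$, no node in $G$ receives $\epsilon n$ $LONG$ messages. Then all of $G$ has $com\_in=0$ and $com\_ready=0$, \Cref{lem:comcon} validity gives $com\_out=0$, and \Cref{lem:without-redo} gives $redo=0$.

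To bound the number of correct nodes with $|Long|\ge com\_all$, I split $Long_y$ into three parts:
\begin{itemize}
\item correct elected nodes, of which there are at most $(1+\epsilon)$ times the expected $|G'|p$ by Chernoff;
\item Byzantine identities inserted directly via $ELECT$ messages to $y$;
\item identities inserted via $SERVE$ messages, which need $(C\log n)/10$ counted $SERVE$ messages each.
\end{itemize}
Since $com\_all-(1+\epsilon)|G'|p\ge\Omega(\epsilon pn)$, which exceeds the amount $(1+2\epsilon)fp$ beyond the expected count, a node reaches $|Long|\ge com\_all$ only if it has received $\Omega(\epsilon pn)$ extra Byzantine identities.

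\textbf{Case $pn\ge Kf$.} All $ELECT$ messages come from at most $f$ identities, so $|S_y|\le|G|+f$. A Chernoff bound controls $|G|$, and $f\le pn/K$ is below the slack, so no correct node's $Long$ exceeds $com\_all$. The $SERVE$ route adds only identities that actually sent $ELECT$ to some correct node, so it too is bounded by $f$ Byzantine identities.

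\textbf{Case $pn\ge KT/n$.} For $\epsilon n$ correct nodes each to see $\Omega(\epsilon pn)$ extra Byzantine identities, the Byzantine nodes must deliver $\Omega(\epsilon^2 pn^2)>T$ messages in total. This count includes $SERVE$ contributions, because correct nodes only serve $ELECT$ messages they actually received from Byzantine nodes. Byzantine $SERVE$ senders are capped at $(1+\epsilon)C\log n$ counted messages each, so each extra identity obtained that way costs $\Theta(\log n)$ Byzantine messages per receiver. Hence the total still exceeds $T$, a contradiction.

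I expect the main obstacle to be this last accounting. Correct nodes' $SERVE$ broadcasts amplify a single Byzantine $ELECT$ message, so I must show an identity needs roughly $n/9$ correct holders to propagate. That in turn forces $\Omega(n)$ Byzantine messages per propagated Byzantine identity, giving a total of $\Omega(pn\cdot n)$ messages when $\Omega(pn)$ such identities are needed.
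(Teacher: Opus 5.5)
Your Step 1 is correct, and more explicit than the paper, which only says the first two properties follow from earlier lemmas. Your Step 2 is the contrapositive of the paper's key claim (any redone iteration forces $f\ge\delta'\cdot com\_all$ and $T\ge\delta'\cdot com\_all\cdot n$) combined with the doubling of $p$, so the overall structure matches.

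The gap is in your case $pn\ge KT/n$. There you assert that no node of $G$ receives $\epsilon n$ $LONG$ messages, but your accounting only bounds the number of \emph{correct} nodes with $|Long|\ge com\_all$. $LONG$ messages can also be sent directly by Byzantine nodes, and in this case $f$ need not be small: take $f=\Theta(n)$ with Byzantine nodes sending almost nothing, so $\min\{f,T/n\}=O(1)$. Then $\epsilon n$ Byzantine nodes can each send one $LONG$ message to a single elected correct node $x$, whose identity they learn from its $ELECT$ broadcast. This costs only $\epsilon n\ll pn^2/K$ messages, and $x$ ends up with $com\_in_x=1$. In your $f$-case this cannot happen because $f\le pn/K<\epsilon n$. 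Nothing comparable holds in the $T$-case, so ``all of $G$ has $com\_in=0$'' does not follow.

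The missing step is the paper's Scenario one. You do not need all of $G$ to have $com\_in=0$; by the validity property of \Cref{lem:comcon} it suffices that at least $com\_g$ nodes of $G$ do.

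\begin{enumerate}
\item Run your $SERVE$/$ELECT$ accounting with a bound of $\frac{\epsilon}{2}n$ instead of $\epsilon n$ correct $LONG$ senders; this is the same argument with a different constant.
\item Since $f<(1/3-\delta)n$, this rules out $com\_ready=1$ anywhere in $G$, because that would need more than $\epsilon n$ correct senders.
\item Every $x\in G$ with $com\_in_x=1$ must then have received at least $\frac{\epsilon}{2}n$ Byzantine $LONG$ messages.
\item Having more than $|G|-com\_g>\epsilon\cdot com\_g$ such nodes would cost at least $\frac{\epsilon^2}{2}n\cdot com\_g=\Theta(\epsilon^2pn^2)$ Byzantine messages. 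This exceeds $T\le pn^2/K$ once $K$ is a large enough constant depending on $\epsilon$.
\end{enumerate}

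With this added, your argument closes.

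A minor point in the same case concerns your $n/9$ threshold. Having $n/9$ correct holders is what \emph{suffices} for an identity to enter every $Long$. For the cost bound you need the converse. With fewer than about $n/40$ correct holders, the correct $SERVE$ messages for that identity stay below $(C\log n)/20$ at each receiver with high probability. So each receiver needs $\Omega(\log n)$ Byzantine $SERVE$ messages for it. Identities with more correct holders instead cost $\Omega(n)$ Byzantine $ELECT$ messages each.
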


\begin{proof}
The first two properties can be derived from previous lemmas, while the last property can be derived from the following claim (which is due to the ``doubling'' technique used by our algorithm).

\begin{claiminline}\label{clm:without-s-1}
With probability at least $1-n^{-5c}$, during \clst{}, there exists a constant $\delta'>0$ such that, if there exists some $v \in G$ with $com\_out_v=1$, then $f\ge  \delta'\cdot  com\_all$ and $T \ge \delta' \cdot com\_all \cdot n$.
\end{claiminline}

\begin{proof}
By \Cref{lem:comcon} and  \Cref{lem:without-largecore}, with probability at least $1-n^{-6c}$, all nodes in $G$ have the same $com\_out$. Moreover, if there are at least $com\_g$ nodes in $G$ with $com\_in=value$, then $com\_out=value$.

Assume there exists a node in $G$ with $com\_out=1$, then there are at least $\epsilon \cdot com\_g$ nodes in $G$ with $com\_in=1$.
Consider two scenarios.

Scenario one, there are less than $\frac{\epsilon}{2}n$ nodes in $G'$ with $|Long| \ge com\_all$. Since there are at least $\epsilon \cdot com\_g$ nodes in $G$ with $com\_in=1$, we know each of them receives at least $\frac{\epsilon}{2} n$ $LONG$ messages from Byzantine nodes.
So $f \ge \frac{\epsilon}{2} n \ge \frac{\epsilon}{2(1+2\epsilon)}com\_all$ and $T \ge \frac{\epsilon^2}{2}n \cdot com\_g \ge \frac{\epsilon^2}{3} \cdot com\_all \cdot n$.

Scenario two, there are at least $\frac{\epsilon}{2}n$ nodes in $G'$ with $|Long| \ge com\_all$. Let these nodes be node set $Q$.
By a Chernoff bound, with probability at least $1-n^{-10c}$, every node in $Q$ receives at least $\epsilon pn$ Byzantine identities from $ELECT$ messages or $SERVE$ messages, which means $f \ge \epsilon pn \ge \frac{\epsilon}{1+2\epsilon}com\_all$.

To bound $T$ in the second scenario, further consider two cases.

First case, at least $\frac{\epsilon}{4}n$ nodes in $Q$ each receives at least $\frac{\epsilon}{2}pn$ Byzantine identities from $ELECT$ messages. This implies $T \ge \frac{\epsilon^2}{8} pn^2$.

Second case, at least $\frac{\epsilon}{4}n$ nodes in $Q$ each receives at least $\frac{\epsilon}{2}pn$ Byzantine identities from $SERVE$ messages.
Without loss of generality, assume a set $W$ of Byzantine identities are received by at least $\frac{\epsilon}{4}n$ nodes in $Q$ via $SERVE$ messages, and these nodes in $Q$ add these Byzantine identities to $Long$, where $|W| \ge \frac{\epsilon}{2}pn$.
Fix any Byzantine identity $\id(w)$ in $W$.
Note that a Byzantine identity is added to some node's $Long$ list if that identity appears in at least $(C\log n)/10$ $SERVE$ messages.
If there are less than $n/40$ correct nodes with $\id(w) \in S$, then by a Chernoff bound, with probability at least $1-n^{-10c}$, for any node in $Q$, that node receives at most $(C\log n)/20$ $SERVE$ message from correct nodes that include $\id(w)$.
So in total there are at least $(C\log n)/20 \cdot \frac{\epsilon}{4}n$ $SERVE$ messages that include $\id(w)$ are from Byzantine nodes.
If there are at least $n/40$ correct nodes with $\id(w) \in S$, then there are at least $n/40$ $ELECT$ messages which include $\id(w)$ are from Byzantine nodes.
Take a union bound over at most $n$ choices of $\id(w)$, with probability at least $1-n^{-9c}$, Byzantine nodes in total send at least $\frac{\epsilon}{2}pn \cdot \frac{\epsilon}{40}n$ $ELECT$ or $SERVE$ messages, so $T \ge \frac{\epsilon^2}{80}pn^2$.

In either case, with probability at least $1-n^{-9c}$, $T \ge \frac{\epsilon^2}{80(1+2\epsilon)}com\_all \cdot n$.
\end{proof}

We now prove the lemma.
By \Cref{lem:without-redo}, with probability at least $1-n^{-5c}$, all correct nodes quit \committee{} in the same phase $i$.
Consider phase $i$, there are three cases.

Case one, $p<1$ and $i=1$.
Since $i=1$, we know $com\_all=\Theta(\log n)$ by definition, hence $com\_all=O(\min \{ f, T/n \}+\log n)$ is satisfied.

Case two, $p<1$ and $i>1$.
Since $com\_all$ doubles from phase $i-1$ to phase $i$, by \Cref{clm:without-s-1}, with probability at least $1-n^{-5c}$, 
$com\_all/2=O(\min \{ f, T/n \}+\log n)$ is satisfied in phase $i-1$.
So $com\_all=O(\min \{ f, T/n \}+\log n)$.

Case three, $p=1$.
Since we have $p=(C\log n)/n \cdot 2^{i-2}<1/C$ in phase $i-1$ and $p=(C\log n)/n \cdot 2^{i-1} \ge 1/C$ in phase $i$, we know that $com\_all$ grows at most $2C$ times from phase $i-1$ to phase $i$.
By \Cref{clm:without-s-1}, with probability at least $1-n^{-5c}$, 
$com\_all/(2C)=O(\min \{ f, T/n \}+\log n)$ is satisfied in phase $i-1$.
So $com\_all=O(\min \{ f, T/n \}+\log n)$.
\end{proof}

\subsection{Correctness of the renaming process}

We now shift our attention to the renaming process, and we begin with the following lemma.
It states that \lead{} elects a global leader, which, roughly speaking, is achieved by first running \largelead{} within the committee and then propagating the result to all nodes.

\begin{lemma}\label{lem:without-our-leader}
For the \lead{} algorithm shown in \Cref{fig:alg-lead}, the following properties hold with probability at least $1-n^{-3c}$:
\begin{itemize}[topsep=1ex,itemsep=0ex]
	\item validity: with constant probability, for any $v \in G'$, $lead_v \in G$;
	\item agreement: for any $v, u \in G'$, $lead_v=lead_u$.
\end{itemize}
\end{lemma}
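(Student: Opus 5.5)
We condition throughout on the conclusion of \Cref{lem:without-s}, which holds with probability at least $1-n^{-4c}$. It gives $G \subseteq \bigcap_{v \in G'} S_v'$, $|\bigcup_{v \in G'}S_v'|<com\_all$, $|G| > (1+\epsilon)com\_g >(1+\epsilon)(\frac{2}{3}+\epsilon')com\_all$, and $com\_b=com\_all-com\_g$. In particular, every correct node's committee list contains all of $G$, and $G$ is a strict majority (indeed more than $2/3$) of every such list. Hence the hypotheses of \Cref{lem:without-cite-leader} hold. The hypotheses of \Cref{lem:comcon} also hold with all $com\_ready=0$, which is exactly how \comcon{} is invoked inside \lead{}. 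The plan is to prove agreement deterministically (up to the failure probability of \Cref{lem:comcon}) and validity with constant probability by piggybacking on the constant-probability success of \largelead{}.

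\textbf{Agreement.} First I will show a ``quorum intersection'' fact about the $INIT$/$ECHO$ exchange. Suppose some $x\in G$ receives $com\_g$ $ECHO$ messages for a leader value $\ell$. At least $com\_g-com\_b>com\_b$ of these come from $G$. Each such correct echoer received $com\_g$ $INIT$ messages for $\ell$, so at least $com\_g-com\_b$ correct nodes sent $INIT$ for $\ell$. It follows that no correct node can see $com\_g$ $INIT$ messages for any $\ell'\neq\ell$, since $com\_all-(com\_g-com\_b)<com\_g$. So no correct node echoes $\ell'$, and every node in $G$ sees at least $com\_b$ $ECHO$ messages for $\ell$ and fewer than $com\_b$ for any other value, hence sets $com\_lead'=\ell$. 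This is the same counting as in \Cref{clm:comcon-2}.

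Next, apply \Cref{lem:comcon} to the \comcon{} call on $strong$. It yields a common $strong'$ among $G$, and by validity, $strong'=1$ only if some correct node had $strong=1$; this needs a small check that validity yields this direction in the binary case. Two cases follow.
\begin{itemize}
\item If $strong'=0$, every node in $G$ sends $\langle LEAD,\bot\rangle$.
\item If $strong'=1$, the quorum fact shows every node in $G$ sends $\langle LEAD,\ell\rangle$ for a single $\ell$.
\end{itemize}
Every $v\in G'$ takes the majority over its $S_v'$, which contains $G$ and in which $G$ is a majority, so all correct nodes obtain the same $lead_v$.

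\textbf{Validity.} By \Cref{lem:without-cite-leader}, with constant probability all nodes in $G$ obtain the same $com\_lead=\id(w)$ for some $w\in G$. In that event, all of $G$ (more than $com\_g$ nodes) send $INIT$ for $\id(w)$ and then $ECHO$ for $\id(w)$, so every correct node sets $strong=1$. By validity of \Cref{lem:comcon}, $strong'=1$ at all of $G$, and the majority step gives $lead_v=\id(w)\in G$ for every $v\in G'$.

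\textbf{Probability and obstacle.} A union bound over \Cref{lem:without-s} and \Cref{lem:comcon}, together with the per-node Chernoff events, keeps the failure probability below $n^{-3c}$. The main obstacle is the agreement argument in the case where some correct nodes have $strong=1$ and others do not, with \largelead{} having failed. There one must rule out that correct nodes end up with different non-$\bot$ values of $com\_lead'$. I would handle this via the quorum-intersection counting above, checking carefully that the inequalities $com\_g-com\_b>com\_b$ and $com\_all-com\_g+com\_b<com\_g$ follow from $com\_g>(\frac{2}{3}+\epsilon')com\_all$.
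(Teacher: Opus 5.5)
Your proposal is correct and follows essentially the same route as the paper. For agreement, the paper's Claim~\ref{clm:our-leader-1} uses the same $INIT$/$ECHO$ quorum-intersection counting to give all of $G$ a common $com\_lead'$ (with \Cref{lem:comcon} then making $strong'$ common); for validity, it combines Claim~\ref{clm:our-leader-2} with \Cref{lem:without-cite-leader} and the final majority step, exactly as you do. The only difference is cosmetic: the paper cases on the inputs (all of $G$ have $strong=0$, versus some node in $G$ has $strong=1$), whereas you case on the common output $strong'$ and use the contrapositive of \comcon{} validity, which amounts to the same thing.
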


\begin{proof}
\lead{} is inspired by the phase king style algorithm in \cite{lenzen22}.
Each committee member holds an indicator $strong$ about whether the committee has the same leader.
Then they reach consensus about the indicator.
If the consensus output suggests the committee has the same leader, the committee distributes the leader to all nodes;
otherwise, the committee distributes $\bot$ to all nodes.
In \Cref{clm:our-leader-1},
we prove that after consensus, the committee either holds the same leader or outputs $\bot$.
In \Cref{clm:our-leader-2},
we prove that if \largelead{} successes, the committee outputs the leader chosen by \largelead{}.

Note that with probability at least $1-n^{-3c}$, \Cref{lem:comcon} and \Cref{lem:without-s} hold. Throughout this proof, assume these lemmas hold.

\begin{claiminline} \label{clm:our-leader-1}
For any nodes $v, u \in G$, $com\_lead_v'=com\_lead_u'$.
\end{claiminline}

\begin{proof}
Consider two cases.
In the first case, all nodes in $G$ have $strong=0$.
Since all nodes in $G$ call \comcon{} (see \Cref{fig:alg-comcon}) with input $com\_ready=0$, by \Cref{lem:comcon}, all nodes in $G$ have $strong'=0$, thus all nodes in $G$ have $com\_lead'=\bot$.

In the second case, there exists $x \in G$ with $strong_x = 1$.
Then $x$ receives at least $com\_g$ messages of $\langle ECHO, com\_lead^* \rangle$ for some $com\_lead^*$. (Moreover, since $com\_g > \frac{2}{3}com\_all > \frac{2}{3}|S_x'|$, $x$ receives less than $com\_g$ messages of $\langle ECHO, com\_lead^\dagger \rangle$ for any $com\_lead^\dagger\neq com\_lead^*$.)
Among these $\langle ECHO, com\_lead^* \rangle$ messages, at least $com\_g - com\_b > com\_b$ are from nodes in $G$.
So all nodes in $G$ receive at least $com\_b$ messages of $\langle ECHO, com\_lead^* \rangle$, and set $com\_lead'$ to $com\_lead^*$.
On the other hand, since at least $com\_b$ correct nodes send $\langle ECHO, com\_lead^* \rangle$, there are at least $com\_g-com\_b$ correct nodes that have sent $\langle INIT, com\_lead^* \rangle$, which means any correct node receives less than $com\_all - com\_g + com\_b < com\_g$ messages of $\langle INIT, com\_lead^\dagger \rangle$ for any $com\_lead^\dagger\neq com\_lead^*$.
So no correct node sends $\langle ECHO, com\_lead^\dagger \rangle$ for any $com\_lead^\dagger\neq com\_lead^*$.
Hence, no correct node receives at least $com\_b$ messages of $\langle ECHO, com\_lead^\dagger \rangle$ for any $com\_lead^\dagger\neq com\_lead^*$, and no correct node sets $com\_lead'$ to $com\_lead^\dagger$ for any $com\_lead^\dagger\neq com\_lead^*$.
At this point, we conclude that, all nodes in $G$ have $com\_lead' = com\_lead^*$ before calling \comcon{}.
Since all nodes in $G$ call \comcon{} with input $com\_ready=0$, by \Cref{lem:comcon}, all nodes in $G$ have the same $strong'$.
If all nodes in $G$ have $strong'=1$, then eventually all nodes in $G$ have $com\_lead'=com\_lead^*$;
otherwise, if all nodes in $G$ have $strong'=0$, then eventually all nodes in $G$ have $com\_lead'=\bot$.
\end{proof}

\begin{claiminline} \label{clm:our-leader-2}
\highlightblue{If there exists $com\_lead^*$ such that every node $v\in G$ has $com\_lead_v=com\_lead^*$}, then:
\begin{itemize}[topsep=1ex,itemsep=0ex]
	\item for any node $v\in G$, $com\_lead_v'=com\_lead^*$;
	\item for any node $v \in G$, $strong_v' =1$.
\end{itemize}
\end{claiminline}

\begin{proof}
Since all nodes in $G$ have  $com\_lead=com\_lead^*$, every node in $G$ sends $\langle INIT, com\_lead^* \rangle$ to all nodes in its $S'$.
By \Cref{lem:without-s}, every node in $G$ receives at least $com\_g$ messages of $\langle INIT, com\_lead^* \rangle$ and less than $com\_b$ messages of $\langle INIT, com\_lead^\dagger \rangle$ for any $com\_lead^\dagger\neq com\_lead^*$.
Hence, every node in $G$ receives at least $com\_g$ messages of $\langle ECHO, com\_lead^* \rangle$ and less than $com\_b$ messages of $\langle ECHO, com\_lead^\dagger \rangle$ for any $com\_lead^\dagger\neq com\_lead^*$.
Therefore, all nodes in $G$ have $com\_lead' = com\_lead^*$ and $strong = 1$ before calling \comcon{}.
That is, all nodes in $G$ invoke \comcon{} with input $com\_ready=0$.
By \Cref{lem:comcon}, all nodes in $G$ eventually have $strong'=1$ and $com\_lead' = com\_lead^*$.
\end{proof}

With above claims, we now prove the lemma.

First, consider validity.
By \Cref{lem:without-cite-leader}, with constant probability, all nodes in $G$ have $com\_lead=com\_lead^* \in G$.
Then by \Cref{clm:our-leader-2},
all nodes in $G$ have $strong'=1$ and 
$com\_lead'=com\_lead^*$.
By \cref{lem:without-s}, after taking majority value among $LEAD$ messages, all correct nodes have $lead=com\_lead^*$.

Next, consider agreement.
By \Cref{clm:our-leader-1}, there exists $com\_lead^*$ such that $com\_lead_v'=com\_lead^*$ for any $v \in G$.
By \cref{lem:without-s}, after taking majority value among $LEAD$ messages, all correct nodes have $lead=com\_lead^*$.
\end{proof}

We conclude this part with \Cref{lem:without-naming}, which states the correctness guarantees enforced by the renaming process.
It assumes the existence of a global leader, which can be obtained by \Cref{lem:without-our-leader}.
If the global leader is a correct node, then all nodes quit \shareid{} in current phase;
otherwise, if the global leader is a Byzantine node and creates an invalid renaming, then all nodes continue into the next phase.

\begin{lemma}\label{lem:without-naming}
For the \oldid{} algorithm shown in \Cref{fig:alg-oldid}, let $G$ be the set of correct nodes with $elected_v=1$ and let $G'$ be the set of all correct nodes, then during any phase, the following properties hold with probability at least {$1-n^{-c}$}:
\begin{itemize}[topsep=1ex,itemsep=0ex]
	\item there exists some $lead^*$ such that $lead_v = lead^*$ for any $v \in G'$;
	\item for any $v, u \in G'$, $redo_v = redo_u$;
	\item if $lead^* \in G$, then $redo_v=0$ for any $v \in G'$;
	\item if there exists some $v \in G'$ with $redo_v=0$, then $\nid(\cdot)$ forms a strong and order preserving renaming among all correct nodes.
\end{itemize}
\end{lemma}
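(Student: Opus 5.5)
We will mirror the proof of \Cref{lem:with-naming}, working throughout on the event (probability at least $1-n^{-3c}$) that \Cref{lem:without-s}, \Cref{lem:comcon} and \Cref{lem:without-our-leader} all hold. Thus $G\subseteq\bigcap_{v\in G'}S_v'$, $|\bigcup_{v\in G'}S_v'|<com\_all$, and $|G|>(1+\epsilon)com\_g>\frac23 com\_all$. The first bullet is then exactly the agreement property of \Cref{lem:without-our-leader}; call the common value $lead^*$, possibly $\bot$.

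For the second bullet, we look at \echothree{}. Every node in $G$ calls \comcon{} with $com\_ready=0$, so the hypotheses of \Cref{lem:comcon} are exactly the committee properties above. Its agreement gives a common $com\_redo^*$ across $G$. Each $v\in G'$ receives $RET$ messages from at most $|S_v'|<com\_all$ senders, of which more than $\frac23com\_all$ are in $G$. Hence the majority is $com\_redo^*$, and all correct nodes share the same $redo$.

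For the third bullet, assume $lead^*\in G$. We would first prove an analogue of \Cref{clm:with-naming-2}.
\begin{itemize}
\item[(i)] $G'\subseteq\bigcap_{v\in G}PreList_v$, since every correct node sends $ID$ to all of its $S'$, which contains $G$.
\item[(ii)] Every identity in $\bigcup_{v\in G}List_v\subseteq\bigcup_{v\in G}PreList_v$ reaches $lead^*$ through \bounce{}.
\end{itemize}
Step (ii) is the main obstacle, because the acceptance probability is now $(C\log n)/com\_all$ rather than $1$. For a fixed $x\in G$ and item $y$, $x$ sends $y$ to $C\log n$ random nodes. Each correct receiver has $x\in S'$ and accepts $x$ with probability $(C\log n)/com\_all$. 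So the expected number of correct forwarders is about $\frac23 C^2\log^2 n/com\_all$. This is enough only when $com\_all=O(\log n)$; otherwise we must lean on the fact that many correct committee members hold $y$. If $y\in List_x$, the $ASK/VALID$ check means $u$ obtained $ANS$ from about half of the sampled committee members. A Chernoff bound then shows that $\Omega(com\_all)$ correct committee members have $y\in PreList$. Summing over them gives about $\Theta(com\_all)\cdot C\log n\cdot (C\log n)/com\_all=\Theta(C^2\log^2 n)$ expected correct forwarders of $y$. Chernoff plus a union bound over $n^2$ pairs then gives delivery. We also need that correct senders never hit the $(1+\epsilon)\log n$ per-source cap, which holds w.h.p.\ by Chernoff.

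Given (ii), $List'_{lead^*}\supseteq\bigcup_{v\in G}List_v$, and its entries are genuine signed $ID$ messages. The sent-$ID$ count is at most $n$ only if Byzantine nodes contribute at most one identity each, which holds by authentication; so ranks lie in $[n]$. The second \bounce{} with probability $1$ delivers all $NewID$ messages to $G$, as in \Cref{clm:with-naming-3}. Every correct committee member therefore finds all of $List_v$ mapped and sends $ECHO1$. Each correct node receives a valid rank, and the checks in \echothree{} pass, since a correct leader's ranks are distinct, in $[n]$ and order preserving; Byzantine $ECHO2$ must carry the leader's signed $NewID$, so they cannot fake conflicts. All of $G$ inputs $com\_redo=0$, so validity of \Cref{lem:comcon} gives $com\_redo'=0$, and the majority argument gives $redo=0$.

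For the fourth bullet, we follow \Cref{clm:with-naming-4}. If some correct $u$ lacks $Msg_u$, we must show every $v\in G$ has $\id(u)\in List_v$ with $Map_v[\id(u)]=\bot$. We get $\id(u)\in List_v$ because $u$ is correct and answers the $VALID$ queries honestly: it received $ANS$ from all of $G$, so at least $(C\log n)/2$ sampled ones succeed w.h.p. Then all of $G$ vote $1$. Otherwise, a uniqueness, range or order violation among correct nodes is seen identically by every $v\in G$ through the correct nodes' $ECHO2$ messages, since those nodes lie in $List_v$. So all of $G$ vote $1$, and \Cref{lem:comcon} plus the majority step force $redo=1$. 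If $lead^*=\bot$, no $NewID$ is accepted, so correct nodes lack $Msg$ and the first case applies. Finally, we union bound all failure probabilities to get $1-n^{-c}$.
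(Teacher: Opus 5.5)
Your proposal is correct and takes essentially the same route as the paper. Like the paper, you condition on \Cref{lem:comcon}, \Cref{lem:without-s} and \Cref{lem:without-our-leader}, prove the two-part \echoone{} claim, and then handle the correct-leader and violation-detection cases via validity of \comcon{} and the $RET$ majority; the key step in both is that $\id(y)\in List_x$ forces $\Omega(com\_all)$ correct committee members to hold $\id(y)$ in $PreList$, so the fractional \bounce{} still delivers $\Theta(\log^2 n)$ expected copies to $lead^*$. Your extra remarks (ranks lie in $[n]$ because authenticated $ID$ messages bound $|List'|\le n$, Byzantine $ECHO2$ messages must carry the leader's signed $NewID$, and the $lead^*=\bot$ case) only make explicit details the paper leaves implicit.
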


\begin{proof}
The proof strategy is similar to that of \Cref{lem:with-naming}.
In \Cref{clm:without-naming-1}, we show binary consensus can be reached within the reliable committee $S'$, and the result can be distributed to all nodes.
Here, a key difference with \Cref{clm:with-naming-1} is that in \echoone{}, nodes do not forward all messages from the committee (which could cost too much time as the committee might be large).
Instead, each node only forwards messages for $\Theta((\log{n})/|S'|)$ committee members.
We will show that this ``fractional bounce forwarding'' is sufficient, since a message requires forwarding only if it exists in $\Theta({|S'|})$ committee members.
\Cref{clm:without-naming-2} and \Cref{clm:without-naming-3}
are similar with
\Cref{clm:with-naming-3} and \Cref{clm:with-naming-4}, respectively.

We now start the proof for the lemma. With probability at least {$1-n^{-2c}$}, \Cref{lem:comcon}, \Cref{lem:without-s}, and \Cref{lem:without-our-leader} hold. Throughout this proof, assume these lemmas hold.

The first property in the lemma is immediate by \Cref{lem:without-our-leader}.

Then, for the second property, recall \echothree{} shown in \Cref{fig:alg-echothree}.
Since all nodes in $G$ call \comcon{} with input $com\_ready=0$, by \Cref{lem:comcon}, all nodes in $G$ have the same $com\_redo'$.
Therefore, by \cref{lem:without-s}, after taking the majority value among $RET$ messages, all nodes in $G'$ have the same $redo$.

\begin{claiminline} \label{clm:without-naming-1}
For the \echoone{} algorithm shown in \Cref{fig:alg-echoone}, the following properties hold with probability at least $1-n^{-8c}$:
\begin{itemize}[topsep=1ex,itemsep=0ex]
	\item $G' \subseteq \bigcap_{v \in G} List_v$;
	\item if $lead^* \in G$, then $\bigcup_{v \in G} List_v \subseteq List_{lead^*}'$.
\end{itemize}
\end{claiminline}

\begin{proof}
Fix some $u \in G'$.
Since $u$ sends $ID$ messages to all nodes in $G$, we know $\id(u) \in \bigcap_{w \in G}PreList_w$ and $u$ holds an $ANS$ massage from every node in $G$.
Fix some $w \in G$.
When $w$ samples $C \log n$ random identities from $S_w'$, by a Chernoff bound, with probability at least $1-n^{-10c}$, there are more than $\frac{1.9}{3}C\log n$ identities in $G$.
Then, after $w$ sends $ASK$ messages to $u$, node $w$ receives at least $(C\log n)/2$ $VALID$ messages, implying $\id(u) \in List_w$.
Take a union bound over at most $n^2$ choices of $u$ and $w$, with probability at least $1-n^{-9c}$, we have $G' \subseteq \bigcap_{v \in G} List_v$.

During \bounce{}, by a Chernoff bound, with probability at least $1-n^{-10c}$, every node in $G$ sends less than $(1+\epsilon)C\log n$ identities to any node for forwarding.
Moreover, when a node in $G$ sends an identity to a correct node, with probability $(C\log n)/com\_all$ that correct node forwards the message to $lead^*$.

Now fix some $x \in G$ and some $\id(y) \in List_x$.
For the sake of contradiction, assume there are less than $\frac{0.3}{3}com\_all$ correct nodes in $G$ with $\id(y) \in PreList$.
Together with the Byzantine nodes in $\bigcup_{v \in G'}S_v'$, there are less than $\frac{1.3}{3}com\_all$ nodes that will reply \highlight{$VALID$} messages when queried about $\id(y)$.
Hence, when $x$ samples $C \log n$ random identities from $S_x'$ and sends $ASK$ messages about $\id(y)$, by a Chernoff bound, with probability at least $1-n^{-10c}$, less than $\frac{1.4}{3}C\log n$ nodes would reply \highlight{$VALID$} messages, implying $\id(y) \notin List_x$.
That is, with probability at least $1-n^{-10c}$, there are at least $\frac{0.3}{3}com\_all$ correct nodes in $G$ with $\id(y) \in PreList$.

Assume indeed there are at least $\frac{0.3}{3}com\_all$ correct nodes in $G$ with $\id(y) \in PreList$.
Then there are at least $\frac{0.3}{3}com\_all \cdot C \log n$ messages containing $\id(y)$ sent from nodes in $G$ during \bounce{}, and these messages are forwarded to $lead^*$ by correct nodes with probability $(C \log n)/com\_all$ independently.
\highlightblue{Therefore, by a Chernoff bound, with probability at least $1-n^{-10c}$, at least $\frac{0.2}{3}(C\log n)^2$ messages containing $\id(y)$ are echoed to $lead^*$ during \bounce.}
Take a union bound over at most $n^2$ choices of $x$ and $\id(y)$, we conclude that if $lead^* \in G$, then $\bigcup_{v \in G} List_v \subseteq List_{lead^*}'$.
\end{proof}

\begin{claiminline} \label{clm:without-naming-2}
For the \echothree{} algorithm shown in \Cref{fig:alg-echothree}, with probability at least $1-n^{-7c}$, if $lead^* \in G$, then $redo_v=0$ for any $v \in G'$.
\end{claiminline}

\begin{proof}
Assume $lead^* \in G$, then by \Cref{clm:without-naming-1}, with probability at least $1-n^{-8c}$, $\bigcup_{v \in G} List_v \subseteq List_{lead^*}'$.
Recall \echotwo{} shown in \Cref{fig:alg-echotwo}, all identities in $\bigcup_{v \in G} List_v$ are assigned with new identities in $NewID$ messages, and this renaming is both strong and order preserving.
During \bounce{}, by a Chernoff bound, with probability at least $1-n^{-10c}$,
each new identity is sent to at least one correct node, and leader node $lead^*$ sends less than $(1+\epsilon)C\log n$ messages to any single node.
Hence, for every node in $G$, it receives a new identity for every node in its $List$.
So all nodes in $G$ have $com\_redo=0$, and all correct nodes receive their new identities via $ECHO1$ messages.

During the execution of \echothree{}, all correct nodes send their new identities to nodes in $G$ by $ECHO2$ messages.
Since $lead^* \in G$, all $ECHO2$ messages form a strong and order preserving renaming, so all nodes in $G$ still have $com\_redo=0$.
By \Cref{lem:comcon}, all nodes in $G$ have $com\_redo'=0$.
Then, by \Cref{lem:without-s}, since $|G| > com\_g > \frac{2}{3}com\_all$, all correct nodes eventually have $redo=0$.
\end{proof}

At this point, we note that the third property in the lemma is immediate by \Cref{clm:without-naming-2}.

\begin{claiminline} \label{clm:without-naming-3}
For the \echothree{} algorithm shown in \Cref{fig:alg-echothree}, the following properties hold with probability at least $1-n^{-2c}$:
\begin{itemize}[topsep=1ex,itemsep=0ex]
	\item if there exists $u \in G'$ with $Msg_u = \bot$, then $redo_v=1$ for every $v \in G'$;
	\item if $\nid(\cdot)$ does not form a strong and order preserving renaming among all correct nodes, then then $redo_v=1$ for every $v \in G'$.
\end{itemize}
\end{claiminline}

\begin{proof}
By \Cref{clm:without-naming-1}, with probability at least $1-n^{-8c}$, $G' \subseteq \bigcap_{v \in G} List_v$. Assume this is indeed the case.
If there exists $u \in G'$ with $Msg_u = \bot$, then all nodes in $G$ have $Map[\id(u)]=\bot$ and $com\_redo=1$ during \echotwo{}. So all nodes in $G$ have $com\_redo=1$ during \echothree{}.
By \Cref{lem:comcon}, all nodes in $G$ have $com\_redo'=1$.
Then, by \Cref{lem:without-s}, since $|G| > com\_g > \frac{2}{3}com\_all$, all correct nodes have $redo=1$.

Now consider the case where all correct nodes have $Msg \neq \bot$.
By \Cref{lem:without-s}, $G \subseteq \bigcap_{v \in G'} S_v'$, so every correct node sends its new identity to all nodes in $G$.
If $\nid(\cdot)$ does not form a strong and order preserving renaming among all correct nodes, there are three possible reasons.
(1) $\nid(\cdot)$ violates identity uniqueness, so there exists $u_1 \neq u_2$ with $\nid(u_1)=\nid(u_2)$, thus all nodes in $G$ have $Name[u_1]=Name[u_2]$.
(2) $\nid(\cdot)$ violates identity validity, so there exists $u_3$ with $\nid(u_3)\notin [n]$, thus all nodes in $G$ have $Name[u_3]\notin[n]$.
(3) $\nid(\cdot)$ violates order preserving, so there exists $\id(u_4)<\id(u_5)$ with $\nid(u_4) > \nid(u_5)$, thus all nodes in $G$ have $Name[u_4] > Name[u_5]$.
In either of these three cases, all nodes in $G$ have $com\_redo=1$.
By \Cref{lem:comcon}, all nodes in $G$ have $com\_redo'=1$.
Then, by \Cref{lem:without-s}, since $|G| > com\_g > \frac{2}{3}com\_all$, all correct nodes have $redo=1$.
\end{proof}

We conclude the proof by noting that the last property in the lemma is immediate by \Cref{clm:without-naming-3}.
\end{proof}

\subsection{Complexity analysis}

The following lemma summarizes the time and the message complexity of our entire algorithm.
The committee election process (see \Cref{fig:alg-nsrb-comelect}) and the renaming process (see \Cref{fig:alg-oldid}) together take \highlight[cyan]{$O(\text{poly-log}(n))$} rounds and {$\tilde O(n+\min\{nf, T \})$} messages, where $f$ is the actual number of Byzantine nodes, and $T$ is the number of messages Byzantine nodes sent.

\begin{lemma}\label{lem:without-complex}
For the \committee{} algorithm shown in \Cref{fig:alg-nsrb-comelect}, and the \oldid{} algorithm in shown \Cref{fig:alg-oldid}, the following properties hold with probability at least $1-n^{-c}$:
\begin{itemize}[topsep=1ex,itemsep=0ex]
	\item The \committee{} algorithm has time complexity $O(\text{poly-log}(n))$ and message complexity $O((n+\min\{nf,T\})\cdot\text{poly-log}(n))$;
	\item \committee{} and \oldid{} together have time complexity $O(\text{poly-log}(n))$ and message complexity $O((n+\min\{nf,T\})\cdot\text{poly-log}(n))$.
\end{itemize}
\end{lemma}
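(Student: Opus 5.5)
The proof will be a line-by-line accounting. I first condition on the high-probability events of \Cref{lem:without-s}, \Cref{lem:without-redo}, \Cref{lem:comcon} and \Cref{lem:without-our-leader}, and on the Chernoff bounds for all sampling and bouncing steps. Under these events all correct nodes leave \committee{} in the same iteration, and the final committee parameter satisfies $com\_all=O(\min\{f,T/n\}+\log n)$.

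\textbf{Time.} \committee{} has at most $\log n$ iterations. Within an iteration, sending $ELECT$, running \sample, sending $SERVE$ and $LONG$, and running \redo{} each take $O(\log n)$ rounds. The $O(\log n)$ factor comes from the per-pair caps of $(1+\epsilon)C\log n$ messages; for correct senders, a Chernoff bound shows these caps also bound the number of messages on any single link. \comcon{} runs $\Theta(\log n)$ phases, each dominated by \largelead, which takes $O(\text{poly-log}(com\_all))=O(\text{poly-log}(n))$ rounds by \Cref{lem:without-cite-leader}. \oldid{} has $\Theta(\log n)$ iterations, each consisting of \lead, \echoone, \echotwo{} and \echothree. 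The only non-obvious steps are the two \bounce{} calls. In \echoone{}, a correct node forwards at most $(1+\epsilon)\log n$ items per accepted source, and accepts about $C\log n$ sources in expectation, so each link carries polylog messages. In \echotwo{}, the leader's $O(n)$ items are spread over $n$ random relays, so each relay handles $O(\log n)$ items. Multiplying these bounds gives polylogarithmic total time.

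\textbf{Messages, per iteration of \committee{} with probability $p$.} Correct-node cost in iteration $j$ is as follows.
\begin{itemize}
\item $ELECT$ costs $O(pn\cdot n)$.
\item \sample{} costs $O(|S_v|\log n)$ per node. I would bound this by $O(pn\log n)$ plus the number of Byzantine $ELECT$ messages the node received; summed over nodes, the extra term is at most $T$.
\item $SERVE$ costs $O(n\log n)$ per node, since a node serves $O(\log n)$ items and sends each to all $n$ nodes.
\item $LONG$ costs $O(|S_v|)$ per node.
\item \comcon{} and \lead{} cost $O(com\_all^2\,\text{poly-log})$.
\item \redo{} costs $O(pn\cdot n+n\log n)$.
\end{itemize}
Iteration $j$ therefore costs $\tilde O(n+pn^2)+O(T_j)$, where $T_j$ is the number of Byzantine messages in iteration $j$. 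Since $p$ doubles each iteration, the sum over executed iterations is dominated by the last one, and equals $\tilde O(n+com\_all\cdot n)+O(T)$. Substituting $com\_all=O(\min\{f,T/n\}+\log n)$ gives $\tilde O(n+\min\{nf,T\})+O(T)$.

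\textbf{Main obstacle: replacing the additive $O(T)$ by $\min\{nf,T\}$.} I would handle this using the receiver-side caps. Every response a correct node sends to a Byzantine requester, in \sample, \bounce, the $VALID$ replies and the $Cnt$ checks, is capped at $O(\log n)$ per requester. This yields two bounds on correct-node work triggered by Byzantine nodes: $O(\log n)$ per Byzantine node per correct node, i.e.\ $\tilde O(nf)$ in total, and also at most one response per Byzantine message, i.e.\ $O(T)$. Hence the triggered work is $\tilde O(\min\{nf,T\})$. For the $|S_v|$ terms, $|S_v|\le pn(1+\epsilon)+f$ gives the $nf$ bound, and $\sum_v|S_v|\le pn^2+T$ gives the $T$ bound.

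\textbf{Messages in \oldid.} Each of the $O(\log n)$ iterations has the following correct-node cost.
\begin{itemize}
\item $ID$, $ANS$, $ASK$, $ECHO1$, $ECHO2$, $RET$ and $LEAD$ cost $O(n\cdot com\_all\log n)$.
\item The first \bounce{} costs $O(com\_all\cdot n\log n)$ for sending. Forwarding is thinned by acceptance probability $(C\log n)/com\_all$, so each correct node forwards $O(n\log^2 n)$ items in expectation, giving an acceptable total.
\item The second \bounce{} costs $O(n\log n\cdot com\_all)$.
\item \lead{} costs $O(com\_all^2\,\text{poly-log})$.
\end{itemize}
Using the bound on $com\_all$, this is again $\tilde O(n+\min\{nf,T\})$. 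A union bound over all the Chernoff events and the earlier lemmas leaves failure probability at most $n^{-c}$.
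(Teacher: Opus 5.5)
Your overall route is the paper's: condition on the earlier lemmas, cost each subroutine, let the doubling of $p$ make the last iteration dominate, then substitute $com\_all=O(\min\{f,T/n\}+\log n)$. But one step does not hold as stated.

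Inside \committee{} you charge \comcon{} (with its \largelead{} calls) $O(com\_all^2\,\text{poly-log}(n))$ in every iteration. That is justified only in the final iteration, where \Cref{lem:without-s} gives $|\bigcup_{v\in G'}S'_v|<com\_all$. In an iteration that ends with $redo=1$, the lists $S'_v$ can be far larger than $com\_all$. For example, in iteration $1$ we have $com\_all=\Theta(\log n)$, yet if $\Theta(n)$ Byzantine nodes send $ELECT$ to everybody, they all survive \sample{} and sit in every $S'_v$. The hypotheses of \Cref{lem:without-cite-leader} then fail. Your receiver-side-cap argument does not help here, because these $INIT$/$ECHO$/\largelead{} messages are sent proactively by correct nodes to everyone in $S'_v$.

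The missing idea, which the paper uses, is that $S'_v$ is the output of \sample{}. So every identity in $Q=\bigcup_{v\in G'}S'_v$ lies in the $S$ lists of $\Omega(n)$ correct nodes (as in the proof of \Cref{clm:elected-large-3}). For a Byzantine identity this costs $\Omega(n)$ Byzantine $ELECT$ messages, hence $|Q|=O(com\_all+\min\{f,T/n\})$. The cost $O(|Q|^2\,\text{poly-log}(n))$ then fits the budget: $|Q|^2=O(com\_all\cdot n+\min\{nf,T\})$, since $com\_all\le n$ and $\min\{f,T/n\}^2\le\min\{nf,T\}$. Without this filter argument, $f$ Byzantine identities could enter the union at a cost of only $O(f)$ Byzantine messages, and $|Q|^2$ would not be $\tilde O(n+\min\{nf,T\})$.

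There are also two counting slips that break your own totals.
\begin{itemize}
\item Your $SERVE$ bullet ($O(n\log n)$ per node) sums to $O(n^2\log n)$ per iteration, which cannot give your claimed $\tilde O(n+pn^2)$. You need that each $\id(u)\in S_v$ is served independently with probability $(C\log n)/n$. A Chernoff bound on the sum then gives $O(\log n\sum_v|S_v|+n\log n)$ $SERVE$ messages in total, which your bounds on $\sum_v|S_v|$ then control.
\item In the first \bounce{} of \echoone{}, a correct relay accepts only $O(\log n)$ sources (because $|S'_v|<com\_all$ and the acceptance probability is $(C\log n)/com\_all$). It forwards $O(\log n)$ items per source to the single destination $lead_v$, i.e.\ $O(\log^2 n)$ messages per relay. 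Your stated $O(n\log^2 n)$ per relay would again be quadratic overall.
\end{itemize}
Also, \lead{} belongs to \oldid{}, not to \committee{}.
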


\begin{proof}
We start by analyzing the complexity of the committee election process.

\highlightblue{Note that by the analysis of \Cref{lem:without-s}, with probability at least $1-n^{-4c}$, within each iteration of \committee, we have $G \subseteq \cap_{v\in G'}S_v'$ and $|G|=O(com\_all)$.}

\begin{claiminline}\label{clm:without-complex-comcon}
For \comcon{} shown in \Cref{fig:alg-comcon}, \highlight{with probability at least $1-n^{-10c}$}, it has time complexity $O(\text{poly-log}(n))$ and message complexity $O((com\_all+\min \{ f, T/n \})^2 \cdot \text{poly-log}(n))$.
\end{claiminline}

\begin{proof}
Let $Q=|\cup_{v \in G'} S_v'|$.
There are $O(Q)$ nodes executing \comcon{}.
During each phase, propagating $INIT$, $ECHO$, and $LEAD$ messages together incur time complexity $O(1)$ and message complexity $O(Q^2)$.
By \Cref{lem:without-cite-leader}, {\largelead{} incurs time complexity $O(\text{poly-log}(n))$ and message complexity $O(Q^2\cdot\text{poly-log}(n))$}.
\comcon{} has $O(\log n)$ phases, hence each invocation in total incurs time complexity $O(\text{poly-log}(n))$ and message complexity $O(Q^2\cdot\text{poly-log}(n))$.
Recall that during \clst{}, with probability at least $1-n^{-10c}$, for any $\id(u) \in \cup_{v \in G'} S_v'$, there are $\Theta(n)$ correct nodes with $\id(u) \in S$.
Therefore, $Q=O( com\_all+ \min \{ f, T_1/n \})$, where $T_1$ is the number of $ELECT$ messages Byzantine nodes sent during \committee{}.
As a result, the message complexity of \comcon{} is $O((com\_all+ \min \{ f, T/n \})^2 \cdot \text{poly-log}(n))$.
\end{proof}

\begin{claiminline}\label{clm:without-complex-redo}
For \redo{} shown in \Cref{fig:alg-redo}, it has time complexity $O(1)$ and message complexity $O(com\_all \cdot n + \min\{T, nf\})$.
\end{claiminline}

\begin{proof}
There are $n$ nodes executing \redo{}.
Propagating $COM$ messages incurs time complexity $O(1)$ and message complexity $O(n \cdot com\_all)$.

Executing \sample{} incurs time complexity $O(1)$.
There are at most $n$ nodes with $|S'| \ge com\_all$ each sending $O(\log n)$ messages during \sample{}, while nodes with $|S_v'|<com\_all$ send $O(n \log n+\min\{T, nf\})$ messages in total. Note that the addition $O(\min\{T, nf\})$ is because that correct nodes send $YES$ messages on receiving Byzantine $ASK$ messages during \sample{}, and each Byzantine node sends at most $n$ $ASK$ messages.
\end{proof}

\begin{claiminline}\label{clm:without-complex-clst}
For \clst{} shown in \Cref{fig:alg-clst}, with probability at least \highlight{$1-n^{-9c}$}, it has time complexity $O(\text{poly-log}(n))$ and message complexity $O((com\_all^2 + com\_all\cdot n + \min\{nf,T\})\cdot\text{poly-log}(n))$.
\end{claiminline}

\begin{proof}
Executing \sample{} incurs time complexity $O(\log n)$ and message complexity $O( n \cdot com\_all \cdot \log n + \min\{ T_1, nf \}\cdot\log n +  \highlight{\min\{ T_2, nf\cdot\log{n} \}})$, where $T_1$ is the number of Byzantine $ELECT$ messages during \committee{}, and \highlight{$T_2$ is the number of Byzantine $ASK$ messages during \sample{}}.

By a Chernoff bound, with probability at least $1-n^{-10c}$, sending $SERVE$ messages incur time complexity $O(\log n)$ and message complexity $O( n \log n \cdot com\_all + \min\{ T_1, nf \}\cdot \log n)$, where $T_1$ is the number of Byzantine $ELECT$ messages during \committee{}.

Sending $LONG$ messages incur time complexity $O(1)$ and message complexity $O(com\_all \cdot n+\min\{ T_1, nf \})$, where $T_1$ is the number of Byzantine $ELECT$ messages during \committee{}.

Recall \Cref{clm:without-complex-comcon} and \Cref{clm:without-complex-redo}, the total time complexity is $O(\text{poly-log}(n))$, and message complexity is $O((com\_all^2 + com\_all\cdot n + \min\{nf,T\})\cdot\text{poly-log}(n))$.
\end{proof}

\begin{claiminline}\label{clm:without-complex-committee}
For \committee{} shown in \Cref{fig:alg-nsrb-comelect}, with probability at least \highlightblue{$1-n^{-3c}$}, it has time complexity $O(\text{poly-log}(n))$ and message complexity $O((n + \min\{nf,T\})\cdot\text{poly-log}(n))$.
\end{claiminline}

\begin{proof}
\committee{} has $O(\log n)$ phases.
In each phase, sending $ELECT$ messages incurs time complexity $O(1)$ and message complexity $O(n \cdot com\_all)$.
Let $com\_all^*$ be the value of $com\_all$ in the last phase, hence $com\_all^*=O(\min\{f, T/n\}+\log n)$ by \Cref{lem:without-s}.
By \Cref{clm:without-complex-clst}, with probability at least \highlight{$1-n^{-8c}$}, in every phase, \clst{} incurs time complexity $O(\text{poly-log}(n))$ and message complexity
$O(((com\_all^*)^2 + com\_all^*\cdot n + \min\{nf,T\})\cdot\text{poly-log}(n))$.
Therefore, the total time complexity and message complexity of \committee{} across all phases are $O(\text{poly-log}(n))$ and $O((n + \min\{ fn, T \})\cdot\text{poly-log}(n))$, respectively.
\end{proof}

\smallskip Next, we shift our attention to the renaming process. Let $com\_all^*=O(\min\{f, T/n\}+\log n)$ be the value of $com\_all$ after committee election.

\begin{claiminline}\label{clm:without-complex-lead}
For \lead{} shown in \Cref{fig:alg-lead}, it has time complexity $O(\text{poly-log}(n))$ and message complexity $O((com\_all^*)^{2}\cdot\text{poly-log}(n)+n \cdot com\_all^*)$.
\end{claiminline}

\begin{proof}
By \Cref{lem:without-cite-leader}, {\largelead{} incurs time complexity $O(\text{poly-log}(n))$ and message complexity $O((com\_all^*)^{2}\cdot \text{poly-log}(n))$}.
Propagating $INIT$ and $ECHO$ messages incurs time complexity $O(1)$ and message complexity $O((com\_all^*)^{2})$.
By \Cref{clm:without-complex-comcon}, the execution of \comcon{} incurs time complexity $O(\text{poly-log}(n))$ and message complexity $O((com\_all^*)^{2}\cdot\text{poly-log}(n))$.
Propagating $LEAD$ messages incurs time complexity $O(1)$ and message complexity $O(n \cdot com\_all^*)$.
In total, an execution of \lead{} incurs time complexity $O(\text{poly-log}(n))$ and message complexity $O((com\_all^*)^{2}\cdot\text{poly-log}(n)+n \cdot com\_all^*)$.
\end{proof}

\begin{claiminline}\label{clm:without-complex-one}
For \echoone{} shown in \Cref{fig:alg-echoone}, with probability at least $1-n^{-10c}$, it has time complexity $O(\log n)$ and message complexity $O(com\_all^*\cdot n\log n)$.
\end{claiminline}

\begin{proof}
Propagating $ID$ and $ANS$ messages incur time complexity $O(1)$ and message complexity $O(n \cdot com\_all^*)$.
Propagating $ASK$ and $VALID$ messages incur time complexity $O(\log n)$ and message complexity $O(com\_all^*\cdot n \log n)$.
By a Chernoff bound, with probability at least $1-n^{-10c}$, \bounce{} incurs time complexity $O(\log n)$ and message complexity $O(com\_all^*\cdot n \log n)$.
\end{proof}

\begin{claiminline}\label{clm:without-complex-two}
For \echotwo{} shown in \Cref{fig:alg-echotwo}, with probability at least $1-n^{-10c}$, it has time complexity $O(\log n)$ and message complexity $O(com\_all^*\cdot n \log n)$.
\end{claiminline}

\begin{proof}
By a Chernoff bound, we know that with probability at least $1-n^{-10c}$, \bounce{} incurs time complexity $O(\log n)$ and message complexity $O(com\_all^*\cdot n \log n)$.
Propagating $ECHO1$ messages incurs time complexity $O(1)$ and message complexity $O(com\_all^*\cdot n)$.
\end{proof}

\begin{claiminline}\label{clm:without-complex-three}
For \echothree{} shown in \Cref{fig:alg-echothree}, it has time complexity $O(\text{poly-log}(n))$ and message complexity $O( (com\_all^*)^2\cdot\text{poly-log}(n)+com\_all^*\cdot n)$.
\end{claiminline}

\begin{proof}
Propagating $ECHO2$ messages incurs time complexity $O(1)$ and message complexity $O(com\_all^*\cdot n)$.
By \Cref{clm:without-complex-comcon}, \comcon{} incurs time complexity $O(\text{poly-log}(n))$ and message complexity $O( (com\_all^*)^2\cdot\text{poly-log}(n))$.
Propagating $RET$ messages incurs time complexity $O(1)$ and message complexity $O(com\_all^*\cdot n)$.
\end{proof}

Now we can conclude the cost for the renaming process, i.e., the cost for running \oldid{}.

\begin{claiminline}\label{clm:without-complex-oldid}
For \oldid{} shown in \Cref{fig:alg-oldid}, with probability at least $1-n^{-8c}$, it has time complexity $O(\text{poly-log}(n))$ and message complexity $O( (\min\{ f, T/n \})^2\cdot\text{poly-log}(n) + \min\{ f, T/n \}\cdot n\log^2 n + n\log^3n)$.
\end{claiminline}

\begin{proof}
By \Cref{clm:without-complex-lead}, \Cref{clm:without-complex-one}, \Cref{clm:without-complex-two}, \Cref{clm:without-complex-three}, with probability at least $1-n^{-9c}$, every phase of \oldid{} has time complexity $O(\text{poly-log}(n))$ and message complexity $O( (com\_all^*)^2\cdot \text{poly-log}(n)+com\_all^*\cdot n\log n)$.
There are $O(\log n)$ phases, so with probability at least $1-n^{-8c}$, in total, the time complexity is $O(\text{poly-log}(n))$ and the message complexity is $O( (com\_all^*)^2\cdot \text{poly-log}(n)+com\_all^*\cdot n\log^2 n)$.
Since $com\_all^*=O(\min \{ f, T/n \}+\log n)$, the total message complexity is $O( (\min\{ f, T/n \})^2\cdot \text{poly-log}(n) + \min\{ f, T/n \}\cdot n\log^2 n + n\log^3n)$.
\end{proof}

\smallskip Combing \Cref{clm:without-complex-committee} and \Cref{clm:without-complex-oldid} completes the proof of the lemma.
\end{proof}

\subsection{Proof of main theorem}

In this part, we provide the proofs for \Cref{thm:committee}, \Cref{cor:consensus}, and \Cref{thm:without-renaming}.

\begin{proof}[Proof of \Cref{thm:committee}]
The desired committee can be obtained by running \committee{} shown in \Cref{fig:alg-nsrb-comelect}.
The properties are satisfied with high probability in $n$ by \Cref{lem:without-s}.
By \Cref{lem:without-complex}, with high probability in $n$, it has time complexity \highlight[cyan]{$O(\text{poly-log}(n))$} and message complexity $\tilde{O}(n+\min\{nf,T\})$.
\end{proof}

\begin{proof}[Proof of \Cref{cor:consensus}]
By \Cref{thm:committee}, with high probability in $n$, all correct nodes obtain a committee list satisfying the demand of \Cref{lem:comcon}.
Then all nodes in the committee run \comcon{} in \Cref{fig:alg-comcon} with $com\_ready=0$ and the input of binary consensus.
By \Cref{lem:comcon}, the output of the committee solves binary consensus.
Then the committee distributes the result to all other nodes.
By \Cref{lem:without-complex}, with high probability in $n$, this entire consensus algorithm has time complexity \highlight[cyan]{$O(\text{poly-log}(n))$} and message complexity $\tilde O(n+\min\{nf,T\})$.
\end{proof}

\begin{proof}[Proof of \Cref{thm:without-renaming}]
By \Cref{lem:without-our-leader}, with probability at least $1-n^{-2c}$, the condition $lead^* \in G$ in \Cref{lem:without-naming} is satisfied at least once among all phases.
Therefore, by \Cref{lem:without-naming}, with probability at least $1-n^{-c}$, in that phase, all correct nodes halt with $\nid(\cdot)$ forming a strong and order preserving renaming among all correct nodes.
By \Cref{lem:without-complex}, with probability at least $1-n^{-c}$, the entire algorithm has time complexity \highlight[cyan]{$O(\text{poly-log}(n))$} and message complexity $\tilde{O}(n+\min\{nf,T\})$.
\end{proof}

\appendix

\section*{Appendix}

\section{Vector Consensus Algorithm}\label{app-sec:vector-con}

We summarize the guarantees of our vector consensus algorithm in the following lemma.

\begin{lemma}\label{lem:consensus-vector}
Assume all nodes' identities are in $[N]$.
Let $G$ denote the set of correct committee nodes. For each node $v\in G$, let $S_v$ denote the committee in its view. Let $\hat{c}$ denote an upper bound of $|\bigcup_{v\in G} S_v|$ and let $\hat{b}$ denote an upper bound on the number of bad nodes in $\bigcup_{v\in G} S_v$. Suppose $\hat{b}<1/3 \hat{c}$ and $|G|\geq 2\hat{b}+1$.
Assume each $v\in G$ receives an input bit vector $in_v\in \{0,1\}^{N}$ with the guarantee that $\sum_{i\in [N]}(\bigvee_{v\in G} in_v[i])\leq d$. There exists an $O(\hat{c}^5d)$-round algorithm with message complexity $O(\hat{c}^7d)$ and bit complexity $O(\hat{c}^7d\log N)$ that produces a bit vector $out_v\in \{0,1\}^{N}$ satisfying:
\begin{itemize}[topsep=1ex,itemsep=0ex]
	\item validity: for any $i\in [N]$, $out_v[i]=in_u[i]$ for some $u\in G$;
	\item agreement: for any $u\in G$ and any $i\in [N]$, $out_u[i]=out_v[i]$.
\end{itemize}
\end{lemma}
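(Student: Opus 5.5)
We will prove \Cref{lem:consensus-vector} by reduction to $N$-fold binary consensus, made efficient by exploiting sparsity. The plan has three parts: a binary committee consensus, a step that agrees on which coordinates to run it on, and an account of the cost.

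\textbf{Binary consensus primitive.} Since the nodes in $G$ do not share an identical committee view, we first fix a binary consensus that works with local views $S_v$. We use a phase-king style protocol, in the spirit of \comcon{} but deterministic. There are $\hat b+1$ phases, and in phase $k$ the king is the $k$-th smallest identity in $\bigcup_{v\in G}S_v$. The nodes cannot compute this union. The workaround is that each correct node runs a ``candidate king'' subphase for every identity it knows, ordered by identity. Identities known only to some correct nodes then merely cost extra phases, up to $\hat c$ of them.

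The thresholds are $\hat c-\hat b$ for ``strong'' and $\hat b+1$ for adoption. Because $\hat b<\hat c/3$ and $|G|\ge 2\hat b+1$, two standard facts hold, exactly as in \Cref{clm:comcon-1} and \Cref{clm:comcon-2}:
\begin{itemize}[topsep=1ex,itemsep=0ex]
\item Persistence: once all of $G$ agree, they stay agreed.
\item Convergence: among $\hat b+1$ distinct kings, at least one is correct, and after that king's phase all of $G$ agree.
\end{itemize}
Authentication lets a correct node discard any king message not signed by the scheduled identity. This gives validity and agreement in $O(\hat c)$ phases of $O(1)$ rounds each, i.e.\ $O(\hat c)$ rounds and $O(\hat c^3)$ messages per binary instance (the $\hat c^3$ counts all-to-all traffic over the $O(\hat c)$ candidate-king subphases).

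\textbf{Agreeing on the support.} Running one binary instance per index in $[N]$ is too costly. Instead, the committee first agrees on a candidate support set $D$, of size $O(\hat c d)$, that contains every index where some correct input is $1$. Each $v\in G$ announces the indices of its $1$-entries, at most $d$ of them, to every node in $S_v$. Byzantine members may announce arbitrarily many indices. To prevent this, each correct node accepts at most $d$ announced indices per sender; anything beyond that marks the sender as faulty. With this cap, every correct node's set of heard indices has size at most $\hat c d$.

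To make $D$ identical at all correct nodes, we run a binary consensus on each heard index $i$ with input $in_v[i]$. Indices that appear at some correct node but not at others are handled by treating a missing index as input $0$. The node still participates, because the index is relayed: each correct node echoes every index it heard to all of $S_v$ before consensus starts. One echo round suffices, since an index sent by a correct node is heard directly by all of $G$.

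\textbf{Final output and cost.} Each node outputs $out_v[i]$ equal to the consensus value on every index in the union of heard indices, and $0$ elsewhere.
\begin{itemize}[topsep=1ex,itemsep=0ex]
\item Validity off the support: for an index no correct node ever heard, all correct inputs are $0$, so output $0$ is valid.
\item Validity and agreement on heard indices: these follow from the binary primitive. Every correct node runs the same set of instances, namely the union of indices echoed by correct nodes plus indices injected by Byzantine nodes. Injected indices are capped at $d$ per Byzantine sender, giving at most $\hat c d$ instances in total.
\end{itemize}
Running the instances sequentially costs $O(\hat c d)$ instances times $O(\hat c)$ phases. The stated budget of $O(\hat c^5 d)$ rounds and $O(\hat c^7 d)$ messages is generous, which leaves room for message-size serialization and for repeating the king rotation per candidate identity. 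We will simply verify that our count falls under it.

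\textbf{Main obstacle.} The delicate step is the phase king with non-identical committee views. A Byzantine identity present in some views but not in others can desynchronize the king schedule. We handle this by scheduling kings over all of $[N]$ restricted to identities that some correct node relays as known. This set is made common through an initial agreement on the committee membership itself, using the same echo-and-cap trick, with at most $\hat c$ identities in total.
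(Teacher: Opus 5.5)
\textbf{There is a genuine gap: the binary primitive does not work, and the repair you propose is circular.}

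\emph{The king schedule.} Phase king needs every correct node to follow the \emph{same} king in some phase. Suppose each $v\in G$ schedules kings by rank in its own list. Take one Byzantine identity smaller than all correct ones that appears in some views but not in others. It shifts the rank of every correct identity by one. So in every phase, two correct nodes follow different kings, and your convergence argument never applies.

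\emph{The proposed fix.} You suggest first making the membership list common by echoing and capping. This does not produce a common list. A Byzantine node can relay its signed identity to only part of $G$ in the last relay round, so the lists still differ. Actually making them common means agreeing on a sparse $0/1$ vector over $[N]$, with support at most $\hat c$, among nodes whose views differ. That is exactly the statement you are proving.

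\emph{The support step.} The same last-round injection breaks your claim that one echo round makes every correct node run the same set of instances. This part is repairable. An index that some correct node never hears has all correct inputs equal to $0$. If missing messages are read as $0$, a silent correct node looks the same as a participant with input $0$. But you would have to design the primitive so that this holds.

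\emph{The thresholds.} Your ``strong'' threshold $\hat c-\hat b$ presumes $|G|\ge \hat c-\hat b$. The lemma only gives $|G|\ge 2\hat b+1$, and $\hat c$ is merely an upper bound. For example, with $\hat b=1$, $\hat c=100$, $|G|=3$, the threshold is unreachable. Quorums proportional to $\hat c$ are not available under these hypotheses.

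\textbf{The missing idea is to avoid leaders entirely.} The paper runs, in parallel over all indices, a Srikanth--Toueg/Dolev--Strong style protocol built on echo-based reliable broadcast. Its thresholds depend only on $\hat b$. It lasts $\hat b+1$ phases:
\begin{itemize}
\item A node echoes a $\langle BC,\cdot,k\rangle$ when it receives it directly or sees $\hat b+1$ echoes of it.
\item It accepts the broadcast after $2\hat b+1$ echoes.
\item In phase $j$, it broadcasts $k$ once $|Accept_v[k]|\ge \hat b+j-1$.
\item It outputs $1$ at $k$ iff $|Accept_v[k]|\ge 2\hat b+1$.
\end{itemize}
Each node counts only messages from members of its own view, and any view contains at most $\hat b$ bad nodes. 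So the protocol needs no common view, no common schedule, and only $|G|\ge 2\hat b+1$. Sparsity comes for free. For $k$ outside the correct support $D$, one has $|Accept_v[k]|\le \hat b$, so correct nodes never broadcast such $k$. This is what gives the $O(\hat c^5 d)$ round bound and the $O(\hat c^7 d)$ message bound.
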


\begin{figure}[t!]
\hrule
\vspace{1ex}
\highlight{$\texttt{VectorConsensus}(S_v, \hat{c}, \hat{c}-\hat{b}, \hat{b})$ executed at node $v$.}
\vspace{1ex}
\hrule
\begin{footnotesize}
\begin{algorithmic}[1]
\State $Broad_v\gets\{\text{an empty map of ($\id\rightarrow$ boolean value) }\}$, $Accept_v\gets\{\text{an empty map of ($\id$ $\rightarrow$ $\id$ set) }\}$.
\State $EchoCnt_v\gets\{\text{an empty map of (($\id,\id$) $\rightarrow$ $\id$ set) }\}$,  $Buf_v\gets \emptyset$, $out_v\gets 0^{N}$.

\For {(each phase $j$ \textbf{from} 1 \textbf{to} $\hat{b}+1$)}
	\LineComment{Part I: Broadcast $\langle BC\rangle$ messages. This part lasts for $d$ rounds.}
	\If {($j==1$)}
		\For{($k$ \textbf{from} $1$ \textbf{to} $N$)}
			\If{($in[k]==1$)}
				\State Send $\langle BC, \id(v), k\rangle$ to all nodes in $S_v$.
				\State $Broad_v[k]\gets true$.
			\EndIf
		\EndFor
	\Else 
		\For{($k=1$ \textbf{to} $N$)}
			\If{($Broad_v[k] == false$  \textbf{and} $|Accept_v[k]|\geq \hat{b}+j-1$)}
				\State Send $\langle BC, \id(v), k\rangle$ to all nodes in $S_v$.
				\State $Broad_v[k]\gets true$.
			\EndIf
		\EndFor
	\EndIf

	\LineComment{Part II: Broadcast $\langle ECHO\rangle$ messages. This part lasts for $j\cdot \hat{c}^3d$ rounds.}
	\For{(each $\langle BC, k_2, k_1\rangle$ received from $u\in S_v$ with $\id(u)==k_2$)}
		\If{($\id(v)\notin EchoCnt_v[k_2][k_1]$)}
			\State Send  $\langle ECHO, \id(v), k_2, k_1\rangle$ to all nodes in $S_v$.
			\State $ EchoCnt_v[k_2][k_1] \gets EchoCnt_v[k_2][k_1] \cup \{\id(v)\}$.
		\EndIf
	\EndFor
	\For{(each message $m\in Buf_v$)}
		Send $m$ to all nodes in $S_v$.
	\EndFor
	\State $Buf_v\gets \emptyset$.

	\LineComment{Part III: Update buffer. No communication required. }
	\For{(each $\langle ECHO, k_3, k_2, k_1\rangle$ received  from $u\in C_v$ with $\id(u)==k_3$ such that $\exists w\in C_v$ with $\id(w)==k_2$)}
		\State $EchoCnt_v[k_2][k_1] \gets EchoCnt_v[k_2][k_1] \cup \{k_3\}$.
		\If{($\id(v)\notin EchoCnt_v[k_2][k_1]$ \textbf{and} $|EchoCnt_v[k_2][k_1]|\geq \hat{b}+1$)}\label{line:alg-vector-consensus-add-echo-condition}
			\State $Buf_v\gets Buf_v\cup \{\langle ECHO, \id(v), k_2, k_1\rangle\}$.
		\EndIf
		\If{($|EchoCnt_v[k_2][k_1]|\geq 2\hat{b}+1$)}\label{line:alg-vector-consensus-add-ac-condition}
			\State $Accept_v[k_1]\gets Accept_v[k_1]\cup \{k_2\}$.
		\EndIf
	\EndFor
\EndFor

\For{($k$ \textbf{from} $1$ to \textbf{to} $N$)}
	\If{($|Accept_v[k]|\geq 2\hat{b}+1$)}
		$out_v[k]\gets1$.
	\EndIf
\EndFor

\State \textbf{return} $out_v$.
\end{algorithmic}
\end{footnotesize}
\hrule
\vspace{1ex}
\caption{Pseudocode of the vector consensus algorithm.}\label{fig:alg-vector-consensus}
\vspace{-3ex}
\end{figure}

In essence, this algorithm is a parallel variant of classical binary consensus based on reliable broadcast.
See \Cref{fig:alg-vector-consensus} for the pseudocode of the vector consensus algorithm, which operates over $\hat{b}+1$ phases.
In each phase, the algorithm proceeds through three distinct parts. In Part I, if currently in the first phase, nodes broadcast their non-zero vector entry positions via $BC$ messages; otherwise, in subsequent phases, nodes broadcast positions (again via $BC$ messages) that have received sufficient acceptances but have not been broadcast yet.
In Part II, nodes broadcast echo messages for the received $BC$ messages. Nodes also flush their outgoing message buffers during this phase, ensuring all pending echo messages get delivered.
Part III processes received echo messages to update acceptance counts. When a node observes $\hat{b}+1$ distinct echoes for a $BC$ message, it adds an echo message for this $BC$ message to the buffer (which will be sent during the next phase). Moreover, when $2\hat{b}+1$ echoes are collected for a $BC$ message, this $BC$ message is accepted.
After completing all phases, each node outputs a bit vector $out_v$ where position $k$ is set to 1 if at least $2\hat{b}+1$ accepted $BC$ messages contain this position.

We conclude this section with a proof of \Cref{lem:consensus-vector}.

\begin{proof}[Proof of \Cref{lem:consensus-vector}]
As mentioned above, our vector consensus algorithm implements reliable broadcast internally.
Hence, we begin by proving the properties that reliable broadcast can enforce.

\begin{claiminline}\label{cla:vector-con-reliable-pro}
\begin{enumerate}[nosep,itemsep=1pt]
	\item \label{cla:vector-con-reliable-pro-1} If a correct committee node $v\in G$ broadcasts $\langle BC, \id(v), k\rangle$ at the beginning of some phase $j$, then for any  $u\in G$, it holds that $\id(v)\in Accept_u[k]$ at the end of phase $j$.

	\item \label{cla:vector-con-reliable-pro-2} If a correct committee node $v\in G$ does not broadcast $\langle BC, \id(v), k\rangle$, then for any $u\in G$, it holds that $\id(v)\notin Accept_u[k]$.

	\item \label{cla:vector-con-reliable-pro-3} If a correct committee node $v\in G$ has $k_2\in Accept_v[k_1]$ at the end of phase $j$, then for any correct committee node $u\in G$, it holds that $k_2\in Accept_u[k_1]$ at the end of phase $j+1$.

	\item \label{cla:vector-con-reliable-pro-4} Let $D$ denote $\{i: i\in [N], \bigvee_{v\in G} in_v[i]=1\}$. For any correct committee node $v\in G$, for any $1\leq j\leq \hat{b}+1$, node $v$ has at most $d$ messages to send in Part I of phase $j$ each of the form $\langle BC, \id(v), k\rangle$ with $k\in D$, and it has at most $j\hat{c}^3d$ messages to broadcast in Part II of phase $j$.
\end{enumerate}
\end{claiminline}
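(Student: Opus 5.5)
We establish the four items one at a time, using only three facts: the Part I/II/III structure of \Cref{fig:alg-vector-consensus}; signatures, so nobody can forge $\langle BC,\id(v),k\rangle$ or $\langle ECHO,\id(v),\cdot,\cdot\rangle$ for a correct $v$; and the counting assumptions $|G|\ge 2\hat b+1$ with at most $\hat b$ faulty identities in $\bigcup_{v\in G}S_v$. Throughout we assume every correct committee node has all of $G$ in its $S_v$, as in the setting where the lemma is invoked.

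\textbf{Items 1 and 2.} For item 1, suppose $v\in G$ broadcasts $\langle BC,\id(v),k\rangle$ in Part I of phase $j$. Every $u\in G$ then echoes it in Part II of phase $j$, so every correct node collects at least $|G|\ge 2\hat b+1$ distinct signed echoes for $(\id(v),k)$ within the phase. Hence in Part III it inserts $\id(v)$ into $Accept[k]$. Part II lasts $j\hat c^3 d$ rounds, which by item 4 is enough to transmit everything queued.

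For item 2, if $v$ never broadcasts, no correct node ever echoes $(\id(v),k)$ in Part II, since by authentication that requires a genuine BC message from $v$. The only other way to echo is through the buffer rule, which needs $\hat b+1$ echoes. An induction on time shows the first correct echo would need $\hat b+1$ echoes, all from faulty nodes, which is impossible. So correct nodes send no echoes, at most $\hat b<2\hat b+1$ echoes ever exist, and acceptance never happens.

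\textbf{Item 3.} Suppose $v$ accepts $(k_2,k_1)$ at the end of phase $j$. Then $v$ saw at least $2\hat b+1$ distinct echoes, of which at least $\hat b+1$ come from correct nodes, and these were sent to every correct node. So every correct $u$ reaches the $\hat b+1$ threshold and, unless it already echoed, buffers its own echo. That echo is flushed in Part II of phase $j+1$. Consequently all $|G|\ge 2\hat b+1$ correct nodes have echoed by the end of phase $j+1$, and each $u$ accepts.

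One subtlety is that the Part III processing as written only handles echoes received in that phase. We handle this by arguing that echoes received later still increment $EchoCnt$, since the echo sets are cumulative.

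\textbf{Item 4.} This is the counting step, and it is the expected obstacle.

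For Part I: in phase 1, $v$ sends BC messages only for $k$ with $in_v[k]=1$, so $k\in D$. In a later phase $j$, $v$ sends for $k$ only when $|Accept_v[k]|\ge \hat b+j-1\ge \hat b+1$. Then some correct $w$ is in $Accept_v[k]$, and by item 2 that $w$ broadcast $k$. By induction over phases, every $k$ broadcast by a correct node lies in $D$, so at most $d$ messages are sent in Part I.

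For Part II, we count echo messages by the pair $(k_2,k_1)$. There are at most $\hat c$ choices of sender $k_2$. A faulty sender may broadcast arbitrary $k_1$, so we cannot use $D$ directly. Instead we bound the pairs actually echoed by $v$ across phases: correct senders contribute at most $|G|\cdot d$ pairs, and faulty senders are controlled by the rate of what can be delivered. Each faulty node can deliver at most one message per round to $v$, and the round lengths are cumulative, $\sum_{j'\le j} j'\hat c^3 d$.

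The cleanest route seems to be an induction on $j$. Messages received in phase $j-1$ number at most $\hat c\cdot(j-1)\hat c^3 d$ per sender. Each received message causes at most one echo. Adding the buffered echoes gives at most $j\hat c^3 d$ messages, once the $\hat c$ factors are absorbed into the slack between $\hat c^3$ and the per-phase budget. I expect this bookkeeping, making the per-phase budget close without circularity, to be the delicate part. If it fails, I would first try charging each echo to a distinct BC message received from a committee member.
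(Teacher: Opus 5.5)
Items 1--3 and the Part I half of item 4 are fine and follow the paper's argument. Your time-indexed induction for item 2 is, if anything, more explicit than the paper's. The gap is in the Part II half of item 4, which is really a bound on $|Buf_v|$, and the induction you sketch cannot close.

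Suppose buffered echoes are bounded by the number of ECHO messages $v$ can receive during Part II of phase $j-1$. That gives up to $\hat c\cdot(j-1)\hat c^3 d$ candidates, which already exceeds the budget $j\hat c^3 d$ once $\hat c\ge 2$ and $j\ge 2$. There is no slack to absorb the extra factor $\hat c$. Feeding the bound back in forces each Part II to be about $\hat c$ times longer than the previous one. Rate limits on Part II are the wrong resource here: that budget was chosen precisely so the buffer can be flushed, so using it to bound the buffer is circular. What a rate count ignores is that echoes coming only from faulty nodes can never trigger buffering.

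The missing idea, which is the paper's, is to count distinct pairs $(k_2,k_1)$ rather than messages, and to charge them to Part I, which lasts only $d$ rounds.

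\begin{itemize}
\item Because of the check $\id(v)\notin EchoCnt_v[k_2][k_1]$, node $v$ emits each $\langle ECHO,\id(v),k_2,k_1\rangle$ at most once. So $|Buf_v|$ is at most the number of pairs for which $v$ reaches the $\hat b+1$ threshold.
\item For such a pair, $k_2$ is one of at most $\hat c$ committee identities, and at least one echoer is correct.
\item By the same induction you used for item 2, the first correct echo of a pair is a direct Part II echo of a $\langle BC,k_2,k_1\rangle$. That BC message was actually received in Part I by some correct node, not necessarily $v$.
\item Each correct node receives at most $\hat c d$ BC messages from its committee in the $d$ rounds of Part I. So the at most $\hat c$ correct nodes together see at most $\hat c^2 d$ values of $k_1$ per phase, hence $j\hat c^2 d$ values through phase $j$.
\item This gives $|Buf_v|\le j\hat c^3 d$ at the end of phase $j$, and therefore at most $\hat c d+j\hat c^3 d\le (j+1)\hat c^3 d$ messages in Part II of phase $j+1$.
\end{itemize}

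Your fallback of charging echoes to BC messages points in this direction. It only works if the charge is made per pair, goes to a BC message received by some correct node, and uses the Part I rate limit.
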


\begin{proof}
First, we prove \Cref{cla:vector-con-reliable-pro-2} of \Cref{cla:vector-con-reliable-pro}. If a correct committee $v$ does not broadcast $\langle BC, \id(v), k\rangle$, then no correct committee node $u$ will broadcast the echo message $\langle ECHO, \id(u), \id(v), k\rangle$ in Part II.
Since there are at most $\hat{b}$ Byzantine committee nodes, for any correct committee node $u\in G$, we know $|Echo_u[\id(v)][k]|$ is at most $\hat{b}$. Thus, node $u$ never adds $\langle ECHO, \id(u), \id(v), k\rangle$ to $Buf_u$, due to Line~\ref{line:alg-vector-consensus-add-echo-condition} of \Cref{fig:alg-vector-consensus}. Moreover, node $u$ never adds $\id(v)$ to $Accept_u[k]$, due to Line~\ref{line:alg-vector-consensus-add-ac-condition} of \Cref{fig:alg-vector-consensus}.

\smallskip
Then, we prove \Cref{cla:vector-con-reliable-pro-4} of \Cref{cla:vector-con-reliable-pro}.
Intuitively, this item states that all messages will be send in time. By lemma assumption, we have $|D|\leq d$. More specifically, we show that, in any phase $1\leq j \leq \hat{b}+1$, for any correct committee node $v\in G$: (1) in Part I, it has at most $d$ messages to broadcast, each of which is the form $\langle BC, \id(v), k\rangle$ with $k\in D$; (2) in Part II, it has at most $j\hat{c}^3d$ messages to broadcast; and (3) in Part III, it has $|Buf_v|\leq j\hat{c}^3d$.

We prove above properties by induction. First, consider the base case $j=1$.

Consider Part I. By lemma assumption, we have $in_v[k]=0$ for any $k\in [N]\setminus D$. Thus, node $v$ has at most $d$ messages of the form $\langle BC, \id(v), k\rangle$ with $k\in D$ to broadcast in Part I in the first phase.

Consider Part II. In the first phase, $v$ receives at most $\hat{c}d$ valid BC messages in Part I. Thus, it broadcasts at most $\hat{c}d$ echos for these BC messages in Part II. Moreover, $Buf_v$ is empty at the beginning of the first phase. Therefore, node $v$ has at most $\hat{c}d$ messages to broadcast in Part II of the first phase.

Consider Part III. According to \Cref{fig:alg-vector-consensus}, $v$ adds an echo message of the form $\langle ECHO, \id(v), k_2, k_1\rangle$ to $Buf_v$ if and only if it receives valid echo messages of the form $\langle ECHO, *, k_2, k_1\rangle$ from at least $\hat{b}+1$ different committee nodes. We bound the size of $Buf_v$ by counting the possible values of $k_2$ and $k_1$. The message $\langle ECHO, *, k_2, k_1\rangle$ is valid only if there exists a node $w\in C_v$ with $\id(w)=k_2$. Thus, $k_2$ has at most $\hat{c}$ different values.
Moreover, among all the nodes that send $\langle ECHO, *, k_2, k_1\rangle$ to $v$, there must exist a correct committee node.
Note that if no message of the form $\langle BC, *, k_1\rangle$ is ever sent to a correct committee, then no correct committee node $u\in G$ will send $\langle ECHO, \id(u), k_2, k_1\rangle$ in Part I or add $\langle ECHO, \id(u), k_2, k_1\rangle$ to $Buf_u$ in Part III. Thus, for each echo message of the form $\langle ECHO, \id(v), k_2, k_1\rangle$ that adds to $Buf_v$, there must exist some correct committee that has received $\langle BC, *, k_1\rangle$. Recall that at most $\hat{c}^2d$ valid messages of the form $\langle BC, *, k_1\rangle$ are received by any correct committee in Part I in the first phase, we know there could be at most $\hat{c}^2d$ different values for $k_1$. Thus, the size of $Buf_v$ is at most $\hat{c}^3d$.

Suppose that the properties hold up to phase $j$, now we prove that they hold for phase $j+1$.

Consider Part I.
By algorithm description, node $v$ considers the message $\langle ECHO, *, k_2, k_1\rangle$ as valid only if there exist a node $w\in C_v$ with $\id(w)=k_2$. Thus, for any $k_2\in [N]\setminus \{\id(u):u\in C_v\}$, any $k_1\in [N]$, we have $EchoCnt_v[k_2][k_1]=\emptyset$ and $k_2\notin Accept_v[k_1]$.
On the other hand, recall that by the induction hypothesis, no correct committee broadcasts messages of the form $\langle BC, *, k_1\rangle$ with $k_1\in [N]\setminus D$ in Part I of previous phases.
Hence, by \Cref{cla:vector-con-reliable-pro-2} of \Cref{cla:vector-con-reliable-pro}, for any $u\in G$, any $k_1\in [N]\setminus D$, we have $\id(u)\notin Accept_v[k_1]$.
Therefore, for any $k_1\in [N]\setminus D$, we have $|Accept_v[k_1]|\leq \hat{b}$.
As a result, $v$ does not broadcast messages of the form $\langle BC, \id(v), k\rangle$ with $k\in [N]\setminus D$ in Part I of phase $j+1$.
This implies, $v$ has at most $d$ messages of the form $\langle BC, \id(v), k\rangle$ with $k\in D$ to broadcast in Part I of phase $j+1$.

Consider Part II.
By above analysis for Part I, in phase $j+1$, each correct committee node receives at most $\hat{c}d$ valid BC messages in Part I. Thus, any correct committee $v$ sends at most $\hat{c}d$ echos to response these messages in Part II. By the induction hypothesis, the size of $Buf_v$ is at most $j\hat{c}^3d$ at the end of phase $j$. Thus, node $v$ sends at most $(j+1)\hat{c}^3d$ messages in Part II of phase $j+1$.

Consider Part III.
Similar to the analysis of the base case, we bound the size of $Buf_v$ by counting the possible value of $k_1$ and $k_2$. Since $k_2\in C_v$, $k_2$ still has at most $\hat{c}$ different values. Recall that at most $\hat{c}^2d$ valid messages of the form $\langle BC, *, k_1\rangle$ are received by any correct committee in Part I in each phase, there could be at most $(j+1)\hat{c}^2d$ different values for $k_1$. Therefore, the size of $Buf_v$ is at most $(j+1)\hat{c}^3 d$.

By now, \Cref{cla:vector-con-reliable-pro-4} of \Cref{cla:vector-con-reliable-pro} is proved.

\smallskip
Next, we prove \Cref{cla:vector-con-reliable-pro-1} of \Cref{cla:vector-con-reliable-pro}. If a correct committee node $v$ broadcasts $\langle BC, \id(v), k\rangle$ at the beginning of phase $j$, then every correct committee node $u\in G$  will broadcast the echo message $\langle ECHO, \id(u), \id(v), k\rangle$ in phase $j$.
Recall that there are at least $2\hat{b}+1$ correct committee nodes.
Therefore, for any correct committee $u\in G$, it will add at least $2\hat{b}+1$ different $\id$s of correct committees to $EchoCnt_u[\id(v)][k]$ (see Line~\ref{line:alg-vector-consensus-add-echo-condition} of \Cref{fig:alg-vector-consensus}), hence add $\id(v)$ to $Accept_u[k]$ (see Line~\ref{line:alg-vector-consensus-add-ac-condition} of \Cref{fig:alg-vector-consensus}) by the end of phase $j$.

\smallskip
Lastly, we prove \Cref{cla:vector-con-reliable-pro-3} of \Cref{cla:vector-con-reliable-pro}. Suppose $v$ adds $k_2$ to $Accept_v[k_1]$ in phase $j'\leq j$. This implies $v$ has received valid echo messages of the form $\langle ECHO, *, k_2, k_1\rangle$ from at least $2\hat{b}+1$ nodes in $C_v$. Among these nodes, there are at least $\hat{b}+1$ correct committee nodes. Since correct committee nodes perform broadcast honestly, every correct committee receives at least $\hat{b}+1$ different valid echo messages of the form $\langle ECHO, *, k_2, k_1\rangle$ by phase $j'$. This guarantees that every correct committee node $u\in G$ sends $\langle ECHO, \id(u), k_2, k_1\rangle$ by phase $j'+1$. Thus, every correct committee node will receive at least $2\hat{b}+1$ valid echo messages of the form $\langle ECHO, *, k_2, k_1\rangle$ by the end of phase $j'+1$. As a result, for any correct committee node $u\in G$, it holds that $k_2\in Accept_u[k_1]$ at the end of phase $j'+1\leq j+1$.
\end{proof}

With above claim, we resume lemma proof.

\smallskip
Consider the complexity of the algorithm. There are $\hat{b}+1$ phases, and each phase $1\leq j\leq \hat{b}+1$ contains $\hat{c}+j\hat{c}^3d$ rounds.
Thus, runtime of the algorithm is $O(\hat{c}^5d)$ rounds.
In each round, each correct committee node sends at most $|C_v|\leq \hat{c}$ messages. Thus, the message complexity is $O(\hat{c}^7d)$. Note that each message consists of $O(\log N)$ bits. Thus, the bit complexity is $O(\hat{c}^7d\log N)$.

\smallskip
Next, consider the validity property.
Fix an $i\in [N]$, there are two cases. First, if every $v\in G$ has $in_v[i] =1$, then every $v\in G$ broadcasts $\langle BC, \id(v), i\rangle$ in the first phase. By \Cref{cla:vector-con-reliable-pro-1} of \Cref{cla:vector-con-reliable-pro}, for any two correct committee nodes $u,v\in G$, we have $\id(u)\in Accept_v[i]$. Thus, for any $v\in G$, we have $|Accept_v|\geq 2\hat{b}+1$ and $out_v[i]$ will be set to $1$ at the end of the algorithm.
On the other hand, if  $in_v[i] =0$ for every $v\in G$, then $i\notin D$. According to the algorithm, a correct node $v\in G$ considers the message $\langle ECHO, *, k_2, k_1\rangle$ as valid only if there exist a node $w\in C_v$ with $\id(w)=k_2$.
Thus, for any $k_2\in [N]\setminus \{\id(u):u\in C_v\}$, any $k_1\in [N]$, we have $EchoCnt_v[k_2][k_1]=\emptyset$ and $k_2\notin Accept_v[k_1]$.
Moreover, by \Cref{cla:vector-con-reliable-pro-4} of \Cref{cla:vector-con-reliable-pro}, no correct committee will broadcast messages of form $\langle BC, \id(v), k_1\rangle$ with $k_1\in [N]\setminus D$. Hence, for any $v, u\in G$, any $k_1\in [N]\setminus D$, $\id(u)\notin Accept_v[k_1]$.
Therefore, for any $v\in G$, any $k\in [N]\setminus D$, $|Accept_v[k]|\leq \hat{b}$. As a result, $out_v[i]$ is set to $0$ at the end of the algorithm.

\smallskip
We conclude this proof with the agreement property.
Specifically, we show that for any $i\in [N]$, any $v_1\in G$, if $out_{v_1}[i]=1$, then $out_u=1$ for all $u\in G$.
Fix a $v_1\in G$, fix some $i\in [N]$ with $out_{v_1}[i]=1$. Since $out_{v_1}[i]=1$, node $v_1$ must have $|Accept_{v_1}[i]|\geq 2\hat{b}+1$ by the end of phase $\hat{b}+1$. Let $Accept_{v_1}'[i]$ denote the $\id$s of correct committee nodes in $Accept_{v_1}[i]$. Recall that $v_1$ treats the message $\langle ECHO, *, k_2, k_1\rangle$ as valid only if there exists a node $w\in C_{v_1}$ with $\id(w)=k_2$. Thus, for any $i\in [N]$, we have $Accept_{v_1}[i]\subseteq C_{v_1}$. As a result, there are at most $\hat{b}$ Byzantine nodes in $Accept_{v_1}[i]$. This implies $|Accept'_{v_1}[i]|\geq \hat{b}+1$.

If $in_u[i]=1$ for any $u\in Accept'_{v_1}[i]$, then they broadcast $\langle BC, \id(u), i \rangle$ in Part I of phase $1$. By \Cref{cla:vector-con-reliable-pro-1} of \Cref{cla:vector-con-reliable-pro}, for any $u\in Accept'_{v_1}[i]$, every $w\in G$ has $\id(u)\in Accept_w[i]$ at the end of the phase $1$. Thus, $|Accept_w[i]|\geq \hat{b}+1$ at the end of phase $1$. This implies $w$ will broadcast $\langle BC, \id(w), i\rangle$ in phase $2$. By \Cref{cla:vector-con-reliable-pro-1} of \Cref{cla:vector-con-reliable-pro}, for any two correct committees $u, w\in G$, we have $\id(w)\in Accept_{u}[i]$ by the end of phase $2$. Thus, for every $u\in G$, it holds that $|Accept_u[i]|\geq 2\hat{b}+1$ at the end of phase $2$. As a result, for every $u\in G$, we have $out_u[i]=1$ at the end of the algorithm.

On the other hand, suppose that one of the nodes in $Accept'_{v_1}[i]$, say $u_1$ does not have $in_{u_1}[i]=1$. Then, by \Cref{cla:vector-con-reliable-pro-2} of \Cref{cla:vector-con-reliable-pro}, node $u_1$ must have broadcast $\langle BC, \id(u_1), i\rangle$ in Part I of some phase $2\leq j\leq \hat{b}+1$, which means that $u_1$ has $|Accept_{u_1}[i]|\geq \hat{b}+j-1$ at the end of phase $j-1$.
Denote the set $Accept_{u_1}[i]$ at the end of phase $j-1$ as $AC$.
By \Cref{cla:vector-con-reliable-pro-3} of \Cref{cla:vector-con-reliable-pro}, we have $AC\subseteq Accept_u[i]$ for every $u\in G$ at the end of phase $j$.
Since $u_1$ does not broadcast $\langle BC, \id(u_1), i\rangle$ until phase $j$, we have $\id(u_1)\notin AC$ by \Cref{cla:vector-con-reliable-pro-2} of \Cref{cla:vector-con-reliable-pro}.
Since $u_1$ broadcasts $\langle BC, \id(u_1), i\rangle$ in phase $j$, we have $\id(u_1)\in Accept_u[i]$ for every $u\in G$ at the end of phase $j$ by \Cref{cla:vector-con-reliable-pro-1} of \Cref{cla:vector-con-reliable-pro}.

Now, there are two cases. If $j=\hat{b}+1$. Then every $u\in G$ has $|Accept_u[i]|\geq |AC\cup \{\id(u_1)\}|\geq 2\hat{b}+1$ at the end of phase $\hat{b}+1$. Thus, every $u\in G$ will set $out_u[i]$ to $1$ at the end of the algorithm.
Otherwise, if $j<\hat{b}+1$, then every $u\in G$ has $|Accept_u[i]|\geq |AC\cup \{\id(u_1)\}|\geq \hat{b}+j$ at the end of phase $j$. Thus, every node $u\in G$ broadcasts $\langle BC, \id(u), i\rangle$ in phase $j+1$. By \Cref{cla:vector-con-reliable-pro-1} of \Cref{cla:vector-con-reliable-pro}, for any $u, v\in G$, we have $\id(v)\in Accept_u[i]$ at the end of phase $j+1$. As a result, for every $u\in G$, node $u$ has $|Accept_u[i]|\geq 2\hat{b}+1$ at the end of phase $j+1$, hence $out_u[i]=1$ at the end of the algorithm.
\end{proof}

\section{Alternative Analysis for the Committee Election Algorithm}\label{app-sec:committee-elect-alt-analysis}

\begin{corollary}
Assume we use the same model same as \cite{augustine20}.
For the \committee{} algorithm shown in \Cref{fig:alg-nsrb-comelect}, \highlightblue{after its execution}, the following properties hold with probability at least \highlight{$1-n^{-2c}$}:
\begin{itemize}[topsep=1ex,itemsep=0ex]
	\item $G \subseteq \bigcap_{v \in G'} S_v'$;
	\item $|\bigcup_{v \in G'}S_v'|<com\_all$, {$|G| > (1+\epsilon)com\_g >(1+\epsilon)(\frac{2}{3}+\epsilon')com\_all$}, $com\_b = com\_all-com\_g$;
	\item $com\_all=O(\min \{ f\log n, T/n \}+\log n)$;
	\item it has time complexity \highlight[cyan]{$O(\text{poly-log}(n))$} and message complexity $\tilde O(n + \min \{ nf, T \})$.
\end{itemize}
Here, $0\leq f<(1/3-\delta)n$ is the actual number of Byzantine nodes, and $T\geq 0$ is the number of messages Byzantine nodes sent.
\end{corollary}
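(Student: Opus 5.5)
The plan is to rerun the proofs of \Cref{lem:without-largecore}, \Cref{lem:without-redo}, \Cref{lem:without-s} and \Cref{lem:without-complex}, and identify where message authentication is actually used. In the model of \cite{augustine20}, messages are not signed, but a receiver still learns the sender's identity on a direct link. It cannot, however, verify forwarded content. In \committee{} and \clst{}, authentication is used in exactly one place: the $SERVE$ messages, which carry an original $\langle ELECT,\id(u)\rangle$ message. Without signatures, a Byzantine node $w$ can send $SERVE$ messages naming \emph{arbitrary} identities, including fabricated ones, with no corresponding $ELECT$ message ever sent. Every other step (\sample{}, $LONG$, \comcon{}, \redo{}, \largelead{}) relies only on sender identity and counting over direct links. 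These steps carry over verbatim, and we assume \Cref{lem:without-cite-leader} holds in that model, as in \cite{augustine20}.

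\textbf{Correctness properties.} The first two bullets follow exactly as before. \Cref{clm:without-largecore-1}, \Cref{clm:elected-large-1} and \Cref{clm:elected-large-2} do not use forwarded messages. \Cref{clm:elected-large-3} only needs that identities in $\bigcup_{v\in G'}S'_v$ appear in $Long$ of every correct node, and that argument uses only correct nodes' $SERVE$ messages. Fake $SERVE$ messages can only enlarge $Long$, which in turn only pushes toward $redo=1$. \Cref{clm:elected-large-4} therefore still holds, and so does \Cref{lem:without-redo}. Agreement and validity of the committee are thus unchanged.

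\textbf{Committee size bound.} The change appears in \Cref{clm:without-s-1}, Scenario two. There, a set $Q$ of at least $\frac{\epsilon}{2}n$ correct nodes has $|Long|\ge com\_all$, forcing about $\epsilon pn$ extra identities into each such $Long$. Previously this gave $f\ge\epsilon pn$, because each such identity was a genuine Byzantine node. Now a fake identity $x$ enters $Long_y$ only if $y$ counts at least $(C\log n)/10$ $SERVE$ messages for $x$. Each sender's contribution is capped at $(1+\epsilon)C\log n$ messages per receiver by $Cnt$. So $f$ Byzantine senders can inject at most $f(1+\epsilon)C\log n/((C\log n)/10)=O(f)$ fake identities per receiver, and the lower bound becomes $f=\Omega(pn)=\Omega(com\_all)$ up to constants. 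The catch is that correct nodes' $SERVE$ messages, about $\Theta(\log n)$ per identity, can combine with Byzantine ones. The plan is to bound the total number of counted $SERVE$ slots per receiver by $(1+\epsilon)C\log n\cdot n$ and charge each forged identity $\Omega(1)$ Byzantine slots. A looser but safe accounting charges each forged identity only $\Omega(1)$ Byzantine senders per receiver, with each Byzantine node able to vouch for up to $O(\log n)$ identities. This gives $f\log n=\Omega(pn)$, i.e.\ $com\_all=O(f\log n)$. This is where the extra $\log n$ in the third bullet comes from, and I expect this charging argument to be the main obstacle.

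The bound $T\ge\Omega(com\_all\cdot n)$ transfers directly. The case split in Scenario two, between identities supported via $ELECT$ and those supported via $SERVE$, still counts Byzantine messages sent, and fake $SERVE$ messages are counted in $T$ just like genuine ones. Plugging $f\log n$ in place of $f$ into the doubling argument (the three cases at the end of the proof of \Cref{lem:without-s}) yields $com\_all=O(\min\{f\log n,T/n\}+\log n)$.

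\textbf{Complexity.} Finally, the complexity analysis of \Cref{lem:without-complex} is repeated with $com\_all^*=O(\min\{f\log n,T/n\}+\log n)$. The dominant terms are $com\_all\cdot n\cdot\mathrm{polylog}(n)$ and $com\_all^2\cdot\mathrm{polylog}(n)$. Since $\min\{f\log n,T/n\}\cdot n\le \log n\cdot\min\{nf,T\}$, and the squared term is bounded the same way because $com\_all\le n$, the total remains $\tilde O(n+\min\{nf,T\})$ with $O(\text{poly-log}(n))$ rounds. A union bound over the failure probabilities of the invoked lemmas gives probability at least $1-n^{-2c}$.
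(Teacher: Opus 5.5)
Your proposal is correct and follows the paper's proof essentially step for step. You isolate $SERVE$ as the only use of transitive verifiability, reuse the correctness lemmas verbatim since forged $SERVE$ messages can only enlarge $Long$, and redo \Cref{clm:without-s-1} using the per-sender $Cnt$ cap. Your ``looser accounting'' is exactly the paper's step $f\ge \frac{\epsilon}{4}pn/((1+\epsilon)C\log n)$, and the $T$ bound carries over. You then rerun the doubling and complexity arguments with $f\log n$ in place of $f$, as the paper does.

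The ``catch'' you worried about does not actually arise. Correct nodes only serve identities that sent them $ELECT$ directly. So any identity receiving correct $SERVE$ support, other than a correct elected one, is a genuine Byzantine node already charged to $f$. Every counted $SERVE$ slot for the remaining forged identities comes from Byzantine senders, so your first count of $O(f)$ forged identities per receiver is valid. It would even remove the $\log n$ factor, though the weaker bound suffices for this statement.
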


\begin{proof}
Recall the discussion on model in \Cref{sec:preliminary}, note that \cite{augustine20} assumes ``a Byzantine node's identity is presented to all nodes which receive messages from that node'', whereas we further assume transitive verifiability throughout the main body of the paper. (For instance, assume $u$ receives a message $m_w$ which claims to be sent by $w$, further assume the content of $m_w$ includes another message $m_v$ that $w$ claims to be received from $v$. Then, $u$ can verify whether $m_w$ originates from $w$. However, without transitive verifiability, $u$ but cannot verify whether $m_v$ originates from $v$.) Here, we provide an alternative analysis for our committee election algorithm \committee, showing that it can provide similar guarantees without assuming transitive verifiability.

Throughout the execution of \committee, $SERVE$ messages exchanged during the execution of \clst{} (see \Cref{fig:alg-clst}) is the only place where we utilize transitive verifiability.
{(\largelead{} is from \cite{king06soda}, which is also used in \cite{augustine20}, therefore suits the modified model.)}

Now we discuss the impact of $SERVE$ messages.
Due to the limitation imposed by $Cnt$, for any correct node, each Byzantine node sends at most $(1+\epsilon)C\log n$ $SERVE$ messages to that node.
Note that $SERVE$ messages only affects the constitution of $Long$, while committee lists (i.e., $S$ and $S'$) are constructed based on direct identity exchanges (thereby only containing nodes present in the system).

Based on above discussion, to show the correctness of \committee{} in the modified model, we only need to prove that the prerequisites to call $\comcon{}$ (see \Cref{fig:alg-comcon}) are still satisfied, since this is the only place where we use $Long$.
The proof is identical with that of \Cref{lem:without-largecore}.

To bound the final committee's size in the modified model, the analysis is similar to that of \Cref{lem:without-s}.

\begin{claiminline}[Analogue of \Cref{clm:without-s-1}]\label{clm:without-s-1-ana}
With probability at least $1-n^{-5c}$, during \clst{} (see \Cref{fig:alg-clst}), there exists a constant $\delta'>0$ such that, if there exists some $v \in G$ with $com\_out_v=1$, then $f\ge  \delta'\cdot  com\_all/\log n$ and $T \ge \delta' \cdot com\_all \cdot n$.
\end{claiminline}

\begin{proof}
By \Cref{lem:comcon} and  \Cref{lem:without-largecore}, with probability at least $1-n^{-6c}$, all nodes in $G$ have the same $com\_out$. Moreover, if there are at least $com\_g$ nodes in $G$ with $com\_in=value$, then $com\_out=value$.

Assume there exists a node in $G$ with $com\_out=1$, then there are at least $\epsilon \cdot com\_g$ nodes in $G$ with $com\_in=1$. Consider two scenarios.

Scenario one, there are less than $\frac{\epsilon}{2}n$ nodes in $G'$ with $|Long| \ge com\_all$. Since there are at least $\epsilon \cdot com\_g$ nodes in $G$ with $com\_in=1$, we know each of them receives at least $\frac{\epsilon}{2} n$ $LONG$ messages from Byzantine nodes,
So $f \ge \frac{\epsilon}{2} n \ge \frac{\epsilon}{2(1+2\epsilon)}com\_all$ and $T \ge \frac{\epsilon^2}{2}n \cdot com\_g \ge \frac{\epsilon^2}{3} \cdot com\_all \cdot n$.

Scenario two, there are at least $\frac{\epsilon}{2}n$ nodes in $G'$ with $|Long| \ge com\_all$. Let $Q$ denote these nodes.
By a Chernoff bound, with probability at least $1-n^{-10c}$, every node in $Q$ receives at least $\epsilon pn$ Byzantine identities from $ELECT$ messages or $SERVE$ messages.
If there are at least $\frac{\epsilon}{4}n$ nodes in $Q$ that each receives at least $\frac{\epsilon}{2} pn$ Byzantine identities from $ELECT$ messages, then $f \ge \frac{\epsilon}{2} pn \ge \frac{\epsilon}{2(1+2\epsilon)}com\_all$ and $T \ge \frac{\epsilon}{4}n \cdot \frac{\epsilon}{2} pn \ge \frac{\epsilon^2}{8(1+2\epsilon)} \cdot com\_all \cdot n$, as required.

In the following, assume there are at least $\frac{\epsilon}{4}n$ nodes in $Q$ that each receives at least $\frac{\epsilon}{2} pn$ Byzantine identities from $SERVE$ messages. Let these nodes in $Q$ be node set $Q'$.
Without loss of generality, assume a set $W$ of Byzantine identities are received by every node in $Q'$ from $SERVE$ messages, and these nodes add these Byzantine identities to $Long$, where $|W| \ge \frac{\epsilon}{2}pn$.
We further consider two cases.

First case, there are at least $\frac{\epsilon}{4}pn$ Byzantine identities in $W$ that each appears in less than $n/40$ correct nodes' list $S$.
By a Chernoff bound, with probability at least $1-n^{-10c}$, for any of these Byzantine identities, at most $(C\log n)/20$ correct nodes will send $SERVE$ messages for it.
So, spreading each of these at least $\frac{\epsilon}{4}pn$ Byzantine identities to $Q'$ needs at least $\frac{\epsilon}{4}n \cdot (C\log n)/20$ $SERVE$ messages from Byzantine nodes.
Recall that by algorithm description, each node accepts at most $(1+\epsilon)C\log n$ $SERVE$ messages from a particular node, so we have
$f \ge \frac{\epsilon}{4}pn  / ((1+\epsilon)C\log n) \ge \frac{\epsilon}{4(1+\epsilon)(1+2\epsilon)C} com\_all/\log n $, and
$T \ge  (C\log n)/20 \cdot \frac{\epsilon}{4}n \cdot \frac{\epsilon}{4}pn \ge \frac{\epsilon^2C}{320(1+2\epsilon)}com\_all \cdot n\log n$.

Second case, there are at least $\frac{\epsilon}{4}pn$ Byzantine identities in $W$ that each appears in at least $n/40$ correct nodes' list $S$.
Note that correct nodes build $S$ based on $ELECT$ messages.
Therefore, $f \ge \frac{\epsilon}{4}pn \ge \frac{\epsilon}{4(1+2\epsilon)}com\_all$ and $T \ge \frac{\epsilon}{4}pn \cdot \frac{n}{40} \ge \frac{\epsilon}{160(1+2\epsilon)}com\_all \cdot n$.
\end{proof}

By \Cref{clm:without-s-1-ana}, with probability at least $1-n^{-5c}$, we have $com\_all=O(\min \{ f\log n, T/n \}+\log n)$ for any phase.
Then by the same analysis as in the proof of \Cref{lem:without-s}, $com\_all=O(\min \{ f\log n, T/n \}+\log n)$ is satisfied after \committee{}.

Lastly, the complexity analysis is similar to that of \Cref{lem:without-complex}.
With probability at least \highlight{$1-n^{-3c}$}, during any phase of \committee:
\comcon{} has time complexity \highlight[cyan]{$O(\text{poly-log}(n))$} and message complexity $\tilde O((com\_all+\min \{ f, T/n \})^2)$;
\redo{} has time complexity \highlight[cyan]{$O(\text{poly-log}(n))$} and message complexity $\tilde O(com\_all \cdot n + \min  \{ nf, T \})$;
\clst{} has time complexity \highlight[cyan]{$O(\text{poly-log}(n))$} and message complexity $\tilde O(com\_all \cdot n + \min  \{ nf, T \})$.
There are $O(\log n)$ phases in total, and $com\_all$ doubles after each phase.
Since $com\_all=O(\min \{ f\log n, T/n \}+\log n)$ when \committee{} finishes, the total time complexity is \highlight[cyan]{$O(\text{poly-log}(n))$} and the total message complexity is $\tilde O( n + \min \{ nf, T \})$.
\end{proof}

\addcontentsline{toc}{section}{Bibliography}
\bibliographystyle{plain}
\bibliography{podc26-arxiv-v1}

\end{document}